\documentclass{article}

\usepackage{algorithm}
\usepackage[noend]{algpseudocode}
\usepackage{amsmath,amssymb,amsthm}
\usepackage{bm}
\usepackage{bbm}
\usepackage{bbold}
\usepackage{booktabs}
\usepackage{caption}
\usepackage{enumerate}
\usepackage{float}
\usepackage[colorlinks,citecolor=blue,linkcolor=BrickRed]{hyperref}
\usepackage{cleveref}
\usepackage[utf8]{inputenc}
\usepackage{parskip}
\usepackage{subcaption}
\usepackage[normalem]{ulem}
\usepackage[usenames,dvipsnames]{xcolor}
\usepackage{xspace}
\usepackage[numbers,sort&compress]{natbib}

\newtheorem{theorem}{Theorem}

\newtheorem{lemma}{Lemma}
\newtheorem{definition}{Definition}
\newtheorem{corollary}{Corollary}


\newcounter{note}[section]
\renewcommand{\thenote}{\thesection.\arabic{note}}

\newcommand{\gnote}[1]{\refstepcounter{note}$\ll${\bf Goran~\thenote:}
  {\sf \color{red} #1}$\gg$\marginpar{\tiny\bf GZ~\thenote}}
\newcommand{\enote}[1]{\refstepcounter{note}$\ll${\bf Ellis~\thenote:}
  {\sf \color{blue} #1}$\gg$\marginpar{\tiny\bf EH~\thenote}}

\newcommand{\gznote}[1]{\gnote{#1}}

\DeclareMathOperator*{\argmin}{arg\,min}


\newcommand{\E}{\mathop{{}\mathbb{E}}}

\newcommand{\OPT}{\mathrm{OPT}}
\newcommand{\poly}{\text{poly}}

\newcommand{\eps}{\varepsilon}


\newcommand{\mcC}{\mathcal{C}}
\newcommand{\mcD}{\mathcal{D}}

\newcommand{\mcI}{\mathcal{I}}

\newcommand{\mcP}{\mathcal{P}}

\newcommand{\mcS}{\mathcal{S}}

\newcommand{\calT}{\mathcal{T}}
\newcommand{\calC}{\mathcal{C}}
\newcommand{\calI}{\mathcal{I}}

\allowdisplaybreaks

\usepackage{fullpage}
\usepackage{appendix}
\usepackage{thm-restate}
\usepackage[font=itshape]{quoting}
\usepackage{MnSymbol}
\usepackage{graphicx}
\usepackage{booktabs}
\usepackage[justification=raggedright]{caption}


\newcommand{\at}[1]{^{(#1)}}

\newcommand{\1}[1]{\mathbb{I}[#1]}
\newcommand{\hop}{\mathrm{hop}}
\newcommand{\dist}{d}

\newcommand{\ALG}{\mathrm{ALG}} 

\newcommand{\important}[1]{\textbf{#1}}

\newcommand{\pad}{\rho_{\mathrm{pad}}}
\newcommand{\unmatched}{\mathrm{unmatched}}
\newcommand{\alphawc}{\alpha_{\mathrm{WC}}}

\newcommand{\FullOrShort}{full}
\ifthenelse{\equal{\FullOrShort}{full}}{
    \newcommand{\fullOnly}[1]{#1}
    \newcommand{\shortOnly}[1]{}
}{
    
    \newcommand{\fullOnly}[1]{}
    \newcommand{\shortOnly}[1]{#1}
}

\setlength{\topsep}{10pt plus 4pt minus 6pt}

\title{Tree Embeddings for Hop-Constrained Network Design}

%
\date{}

\author{\begin{tabular}[t]{c@{\extracolsep{1.5em}}cc} 
        Bernhard Haeupler  & \qquad D Ellis Hershkowitz & Goran Zuzic \\
        \small Carnegie Mellon University & \small \qquad Carnegie Mellon University & \small ETH Z\"urich \\
        \small\texttt{haeupler@cs.cmu.edu} & \small \qquad \texttt{dhershko@cs.cmu.edu} & \small \texttt{goran.zuzic@inf.ethz.ch}
\end{tabular}}

\begin{document}

\maketitle
\begin{abstract}
Network design problems aim to compute low-cost structures such as routes, trees and subgraphs. Often, it is natural and desirable to require that these structures have small hop length or hop diameter. Unfortunately, optimization problems with hop constraints are much harder and less well understood than their hop-unconstrained counterparts. A significant algorithmic barrier in this setting is the fact that hop-constrained distances in graphs are very far from being a metric.

\medskip 

We show that, nonetheless, hop-constrained distances can be approximated by distributions over ``partial tree metrics.'' We build this result into a powerful and versatile algorithmic tool which, similarly to classic probabilistic tree embeddings, reduces hop-constrained problems in general graphs to hop-unconstrained problems on trees. We then use this tool to give the first poly-logarithmic bicriteria approximations for the hop-constrained variants of many classic network design problems. These include Steiner forest, group Steiner tree, group Steiner forest, buy-at-bulk network design as well as online and oblivious versions of many of these problems.

\end{abstract}
\thispagestyle{empty}
\newpage

\section{Introduction}\setcounter{page}{1}

The field of network design studies how to efficiently construct and use large networks. Over the past several decades researchers have paid particular attention to the construction of low-cost computer and transportation networks that enable specified communication and delivery demands. 

Formally, these problems require computation of low-cost structures in graphs, such as paths, trees or subgraphs, that satisfy specified connectivity requirements. For example, there has been extensive work on how, given a weighted graph $G = (V, E, w)$ with $n$ nodes, one can compute a subgraph $H \subseteq G$ of minimum weight that connects: all vertices (\important{minimum spanning tree (MST)}); all vertices in an input $S \subseteq V$ (\important{Steiner tree}); at least $k$ nodes (\important{$k$-MST}); at least $k$ terminals from an input $S \subseteq V$ (\important{$k$-Steiner tree}); at least one vertex from each set $S_i$ for a given collection of vertex sets $S_1, \ldots, S_k \subset V$  (\important{group Steiner tree}); $s_i \in V$ to $t_i \in V$ for every pair in $\{(s_i, t_i)\}_i$ (\important{Steiner forest}); and some vertex in $S_i \subseteq V$ to some vertex in $T_i \subseteq V$ for every pair in $\{(S_i, T_i)\}_i$ (\important{group Steiner forest a.k.a.\ generalized connectivity}). 
To model the uncertainty and dynamic nature of networks, these problems are often generalized to their \important{online} variants where the demands to be connected are revealed over discrete time steps. An even stronger model of uncertainty is the \important{oblivious} setting where an algorithm must specify how it will satisfy each possible demand before it even knows the demands; demands are then revealed and the algorithm buys its pre-specified solution.

However, connectivity alone is often not sufficient for fast and reliable networks. Indeed, we often also desire that our networks be \important{hop-constrained}; namely we desire that demands are not just appropriately connected but connected with a path consisting of a low number of edges (a.k.a.\ hops). By reducing the number of traversed edges, hop constraints facilitate fast communication \cite{akgun2011new,de2018extended}. 
Furthermore, low-hop networks tend to also be more reliable: if a transmission over an edge fails with some probability, the greater the number of hops between the source and destination, the greater the probability that this transmission fails \cite{rossi2012connectivity,woolston1988design}. 

Unfortunately, adding hop constraints to network design problems makes them significantly harder. MST is solvable in polynomial time but MST with hop constraints is known to admit no $o(\log n)$ poly-time approximation algorithm \cite{bar2001generalized}. Similarly, Steiner forest has a constant approximation \cite{agrawal1995trees} but hop-constrained Steiner forest has no poly-time $o(2^{\log ^ {1-\eps} n})$-approximation for any constant $\eps > 0$ \cite{dinitz2015label}.\footnote{Both under standard complexity assumptions.} Indeed, although there has been extensive work on approximation algorithms for simple connectivity problems like spanning tree and Steiner tree with hop constraints \cite{althaus2005approximating,konemann2005approximating,kortsarz1997approximating,kantor2009approximate,ravi1994rapid,marathe1998bicriteria,hajiaghayi2009approximating,khani2011improved}, nothing is known regarding algorithms for many well-studied generalizations of these problems with hop constraints. For instance, no non-trivial algorithms are known for Steiner forest, group Steiner tree, group Steiner forest or online MST with hop constraints.

By allowing an algorithm to ``pretend'' that the input graph is a tree, probabilistic tree embeddings have had enormous success as the foundation of many poly-log approximation algorithms for network design; thus, we might naturally expect them to be useful for hop-constrained network design. Specifically, a long and celebrated line of work \cite{karp19892k,alon1995graph,bartal1996probabilistic,fakcharoenphol2004tight} culminated in the embedding of Fakcharoenphol, Rao and Talwar \cite{fakcharoenphol2004tight}---henceforth ``FRT''---which showed that any metric can be $O(\log n)$-approximated by a distribution $\mcD$ over trees.\footnote{See \Cref{sec:RW} for a formal statement.} Consequently, a typical template for many network design algorithms is to (1) embed the metric induced by weighted graph $G$ into a $T \sim \mcD$; (2) solve the input problem on $T$ (which is typically much easier than the problem on $G$)
and; (3) project the solution on $T$ back into $G$. For example, such a template gives poly-log approximations for group Steiner tree and group Steiner forest \cite{garg2000polylogarithmic,naor2011online}. In the $h$-hop-constrained setting for some $h \geq 1$, the natural notion of distance to consider between vertices $u$ and $v$ is the $h$-hop-constrained distance---the length of the shortest path between $u$ and $v$ according to $w$ with at most $h$ hops. Thus, to use tree embeddings for hop-constrained network design we must first understand how to approximate these distances with trees.

\subsection{Our Contributions}
In this paper we initiate the study of metric approximations for hop-constrained distances and their use in algorithms for hop-constrained network design. Broadly, our results fall into four categories.

\subsubsection{Impossibility of Approximating Hop-Constrained Distances with Metrics}
    We begin by observing that hop-constrained distances are inapproximable by metrics (\Cref{sec:inapxMetrics}). 
    
    \textbf{Results:} Not only are hop-constrained distances not a metric (since they do not satisfy the triangle inequality) but as we show any metric that approximates hop-constrained distances does so with an $\Omega(L)$ multiplicative error where $L$ is the aspect ratio of the input weighted graph (\Cref{lem:noMetric}). This lower bound is matched by a trivial upper bound (\Cref{lem:trivApx}).
     
    \textbf{Discussion:} Since the expected distance between two nodes in a distribution over metrics is itself a metric, our impossibility result also rules out approximating hop-constrained distances with distributions over metrics as in FRT.
    
    \textbf{Techniques:} This observation is proved by careful analysis of a simple example: a path graph.
    
    \subsubsection{Approximating Hop-Constrained Distances with Partial Tree Metrics}
    Despite these apparent roadblocks, we show that---somewhat surprisingly---it is indeed possible to to approximate hop-constrained distances with trees (Sections \ref{sec:distInd}, \ref{sec:ApproximatingHopMetrics}).
    
   \textbf{Results:} We show that a distribution over ``partial tree metrics'' can approximate hop-constrained distances with an expected distance stretch of $O(\log n \log \log n)$ and a worst-case distance stretch of $O(\log ^ 2 n)$ with an $O(\log ^ 2 n)$ relaxation in the hop constraint (\Cref{thm:mainmetric}).
    
    \textbf{Discussion:} This result differs from FRT in two notable ways: (1) our partial tree metrics are partial in the sense that they contain only a constant fraction of nodes from the input graph---indeed, this is what allows us to overcome the impossibility of approximating hop-constrained distances with metrics and; (2) our result provides a worst-case guarantee, unlike FRT which only gives a guarantee in expectation.
    
    \textbf{Techniques:} We show this result by first proving a decomposition lemma (\Cref{lem:hopconstrained-decomp}), which applies padded decompositions to a ``mixture metric'' that combines hops and (unconstrained) distances. We then recursively apply this decomposition, using different combinations of hops and distances in our recursive calls.
    
    \subsubsection{$h$-Hop Partial Tree Embeddings}
    We next build embeddings for hop-constrained network design from our metric approximations (Sections \ref{sec:hop-bounded-hsts}, \ref{sec:repTrees}).
    
    \textbf{Results:} Specifically, we show that one can construct a distribution over ``$h$-hop partial tree embeddings'' of hop-constrained distances with expected distance stretch $O(\log n \log \log n)$ and a worst-case distance stretch $O(\log ^ 2 n)$ with an $O(\log ^3 n)$ relaxation in the hop constraint (\Cref{thm:mainEmbed}). Further, we show that these embeddings can be used for hop-constrained network design as in the above template for network design that uses FRT. Notably, our embeddings reduce many hop-constrained network design problems to their \emph{non-hop-constrained} versions on trees. Since our embeddings, like our partial tree metrics, are also partial, we build on these embeddings by constructing ``$h$-hop partial tree embeddings,'' which represents many draws from our distribution over partial tree embeddings as a single tree.
    
    \textbf{Discussion:} Like our tree metrics and unlike FRT, our tree embeddings are partial and give worst-case guarantees. Moreover, our embeddings follow almost immediately from our metric approximations. However, a notable difference between our embeddings and those of FRT is that demonstrating that they can be used for hop-constrained network design requires a non-trivial amount of work. In particular, while appropriately projecting from an input graph to a tree embedding is trivial in the FRT case, the partialness of our embeddings makes this projection significantly more troublesome. Thus, we develop a projection theorem (\Cref{thm:projSoln}), which informally shows that a natural projection from $G$ to one of our tree embeddings appropriately preserves cost and connectivity. 
    
    \textbf{Techniques:} We prove our projection theorem using ``$h$-hop-connectors'' which are, informally, a hop-constrained version of Euler tours. We emphasize that this projection theorem is only used in the analysis of our algorithms.
    
    \subsubsection{Applications to Hop-Constrained Network Design}
    Lastly, we use our embeddings to develop the first non-trivial approximation algorithms for the hop-constrained versions of many classic network design problems (Sections \ref{sec:firstAppl}, \ref{sec:secondAppl}).
   
   \textbf{Results:} 
   As detailed in \Cref{fig:resultsOverview}, we give numerous (poly-log, poly-log) bicriteria algorithms for hop-constrained network design problems that relax both the cost and hop constraint of the solution.
   
   \renewcommand{\arraystretch}{1.3}
   \begin{table}[htbp]
       \centering
       \label{fig:replication-as-requested}
       \begin{tabular}{|l|c|c|c|c|}
           \hline
           \textbf{Hop-Constrained Problem} & \textbf{Cost Apx.} & \textbf{Hop Apx.} & \textbf{Cost In $\E$} & \textbf{Section} \\ \hline
           \multicolumn{5}{|l|}{\qquad\textbf{Offline Problems}} \\ \hline
           \multicolumn{1}{|l|}{\qquad \qquad Relaxed $k$-Steiner Tree} & $O(\log ^ 2 n)$ & $O(\log ^ 3 n)$ & & \ref{sec:HCkST}  \\ \hline
           \multicolumn{1}{|l|}{\qquad \qquad $k$-Steiner Tree} & $O(\log ^ 3 n)$ & $O(\log ^ 3 n)$ & &  \ref{sec:HCkST}  \\ \hline
           \multicolumn{1}{|l|}{\qquad\qquad  Group Steiner Tree} & $O(\log ^ 5 n)$ & $O(\log ^ 3 n)$ & & \ref{sec:groupST}, \ref{sec:groupSTBetter}\\ \hline
           \multicolumn{1}{|l|}{\qquad\qquad  Group Steiner Forest} & $O( \log ^ 7 n)$ & $O(\log ^ 3 n)$ & & \ref{sec:GSF} \\ \hline
           \multicolumn{5}{|l|}{\qquad \textbf{Online Problems}} \\ \hline
           \multicolumn{1}{|l|}{\qquad\qquad Group Steiner Tree} & $O(\log ^ 6 n)$ & $O(\log ^ 3 n)$ & $\checkmark$ & \ref{sec:groupSTOnline}\\ \hline
           \multicolumn{1}{|l|}{\qquad\qquad Group Steiner Forest} & $O(\log ^ 8 n)$ & $O(\log ^ 3 n)$ & $\checkmark$ & \ref{sec:onlineGSF} \\ \hline
           \multicolumn{5}{|l|}{\qquad \textbf{Oblivious Problems}} \\ \hline
           \multicolumn{1}{|l|}{\qquad\qquad Steiner Forest} & $O(\log ^ 3 n)$ & $O(\log ^ 3 n)$ & &\ref{sec:steinFor}\\ \hline
           \multicolumn{1}{|l|}{\qquad\qquad Network Design} & $O(\log ^ 4 n)$ & $O(\log ^ 3 n)$ &  & \ref{sec:HCOND}\\ \hline
       \end{tabular}\caption{Our bicriteria approximation results. All results are for poly-time algorithms that succeed with high probability (at least $1 - \frac{1}{\poly (n)}$). For some of the problems we assume certain parameters are $\poly(n)$ to simplify presentation; see the relevant sections for more details. All results are new except for the $k$-Steiner tree result which is implied by \cite{khani2011improved}.} \label{fig:resultsOverview}
   \end{table}
   
   \textbf{Discussion:}
   As noted above, bicriterianess is necessary for any poly-log approximation for Steiner forest and its generalizations. Furthermore, while the results in \Cref{fig:resultsOverview} are stated in utmost generality, many special cases of our results were to our knowledge not previously known. For example, our algorithm for hop-constrained oblivious Steiner forest immediately gives new algorithms for hop-constrained Steiner forest, hop-constrained online Steiner tree and hop-constrained online Steiner forest, as well as min-cost $h$-spanner (see \Cref{sec:steinFor} for details). Similarly, our algorithm for oblivious network design immediately gives new algorithms for the hop-constrained version of the well-studied buy-at-bulk network design problem \cite{awerbuch1997buy}.

   \textbf{Techniques:} All of our algorithms for these problems use the above mentioned tree embedding template with either our $h$-hop partial tree embeddings or our $h$-hop repetition tree embeddings.

\section{Related Work}\label{sec:RW}
Before proceeding to our results we give a brief overview of additional related work on approximation algorithms for hop-constrained network design and tree embeddings.  We also later give related work for each problem whose hop-constrained versions we study in that problem's section.


\subsubsection{Hop-Constrained Network Design}

For some simple hop-constrained network design problems non-trivial (unicriteria) approximation algorithms are known. \cite{althaus2005approximating} gave an $O(\log n)$ approximation for minimum depth spanning tree on complete graphs which define a metric. \cite{konemann2005approximating} gave a $O(\sqrt{\log n})$ for the degree-bounded minimum diameter spanning tree problem. \cite{kortsarz1997approximating} gave a $O(d \log n)$ approximation for computing a minimum cost Steiner tree with depth at most $d$. \cite{kantor2009approximate} gave a constant approximation for the minimum depth Steiner tree problem if the input graph is a complete graph defining a metric.

However, hop constraints often make otherwise easy problems so challenging that the only non-trivial approximation algorithms known or possible are bicriteria. 
The apparent necessity of bicriterianess in hop-constrained optimization is highlighted by the existence of many bicriteria algorithms. For example, \cite{ravi1994rapid} and \cite{marathe1998bicriteria} gave an $(O(\log n), O(\log n))$ bicriteria approximation algorithms for MST and Steiner tree with hop constraints.\footnote{A later paper of \cite{naor1997improved} claimed to improve this result to a $(O(\log n), 2)$-approximation but it is our understanding that this paper was retracted due to a bug.}
Similarly, \cite{hajiaghayi2009approximating} gave a $(O(\log ^ 4 n), O(\log ^ 2 n))$-bicriteria algorithm for $k$-Steiner tree with hop constraints which was later improved to $(O(\log ^ 2 n), O(\log n))$ by \cite{khani2011improved}; here the first term is the approximation in the cost while the second term is the approximation in the hop constraint.

Lastly, while we have given results from the theory community, we note that hop-constrained network design has received considerable attention from the operations research community; see, for example, \cite{akgun2011new,gouveia1995using,gouveia1996multicommodity,gouveia2003network,voss1999steiner,gouveia2001new,botton2013benders,botton2015hop,diarrassouba2016integer,thiongane2015formulations,leitner2016layered,bley2013ip,de2018extended,diarrassouba2018k,rossi2012connectivity,woolston1988design} among many other papers.


\subsubsection{Tree Embeddings}
The celebrated embedding of \cite{fakcharoenphol2004tight} showed that for any metric $(V,d)$ there is a distribution $\mcD$ of weighted trees on $V$ so that for any $u,v \in V$ we have $d(u,v) \leq d_T(u,v)$ for any tree $T$ in the support of $\mcD$ and $\E_{T \sim \mcD}d_T(u,v) \leq O(\log n \cdot d(u,v))$; here, $d_T$ indicates the distance according to the weight function in $T$. Using these tree embeddings with the above template reduces many graph problems to their tree versions at the cost of $O(\log n)$ in the quality of the resulting solution. This has lead to a myriad of algorithms with poly-logarithmic approximation and competitive ratios for NP-hard problems including, among many others, the $k$-server \cite{bansal2011polylogarithmic}, metrical task systems \cite{bartal1997polylog}, offline and online group Steiner tree and group Steiner forest \cite{alon2006general,naor2011online, garg2000polylogarithmic}, buy-at-bulk network design \cite{awerbuch1997buy} and oblivious routing problems \cite{racke2002minimizing}.

There has also been considerable work on extending the power of tree embeddings to a variety of other settings including tree embeddings for planar graphs \cite{konjevod2001approximating}, online tree embeddings \cite{barta2020online}, dynamic tree embeddings \cite{forster2020dynamic,chechik2020dynamic} and distributed tree embeddings \cite{khan2012efficient}. Additionally, there has been significant work on tree embeddings where the trees into which the graph is embedded are subtrees of the input graph \cite{alon1995graph,elkin2008lower,abraham2008nearly,koutis2011nearly}. This line of work culminated in the petal-decomposition of \cite{abraham2012using} which shows that, up to an $O(\log \log n)$, there are subtree embeddings which match the results of FRT.



\section{Preliminaries}\label{sec:prelims}
Before proceeding to our formal results we define conventions we use throughout this work.

\textbf{General:}
We let $[k] := \{1, 2, \ldots, k\}$ for any non-negative integer $k$. We let $A \sqcup B$ denote the disjoint union of $A$ and $B$. We often use the Iverson bracket notation $\1{\text{condition}}$ which evaluates to $1$ when the $\mathrm{condition}$ is true and $0$ otherwise.


\textbf{Graphs:}
Given a graph $G = (V, E)$ we denote its vertex set by $V(G)$ and $E(G)$, or simply $V$ and $E$ if $G$ is clear from context. We let $n := |V|$. All graphs considered in this paper are undirected. Most commonly, we consider undirected weighted graphs $G = (V, E, w)$ with weights $w : E \to \{1, 2, \ldots, L\}$. The value $L$ is called the aspect ratio and throughout this paper we assume $L = \poly(n)$. We will let $w_G$ be $G$'s weight function if $G$ is not clear from context. Generally, weighted graphs in this paper are assumed to be complete, i.e., $E = \binom{V}{2}$. In the context of this paper this is without loss of generality. In particular, one can transform any non-complete weighted graph $G = (V, E, w)$ with aspect ratio $L$ into an equivalent complete graph $G'$ with aspect ratio $L' = n^2 \cdot L$ which gives a weight of $L'$ to any edge not in $E$ without affecting any of the results in this paper.


\textbf{Subgraphs:}
Given a weighted graph $G = (V(G), E(G), w_G)$, we will often consider a subgraph $H = (V(H), E(H))$ where $V(H) \subseteq V(G)$ and $E(H) \subseteq E(G)$. Unlike $G$, such subgraphs will not necessarily be complete. We will often identify a subset of edges $E' \subseteq E(G)$ of a graph $G$ with the subgraph induced by these edges, i.e., the subgraph $H$ with $E(H) = E'$ and $V(H) = \bigcup_{e \in E(H)} e$. Given a collection of vertices $U \subseteq V(G)$ we will let $G[U]$ be the ``induced'' subgraph with vertex set $U$ and edge set $\{\{u,v\} : u,v \in U \text{ and } \{u,v\} \in E(G)\}$. We define the weight of a subgraph $w_G(H) := \sum_{e \in E(H)} w_G(e)$ as the sum of weights of its edges.

\textbf{Well-Separated Trees:}
We will often work with well-separated rooted trees. We say that a weighted rooted tree $T = (V, E, w)$ with root $r \in V$ is well-separated if every root-to-leaf path has weights that are decreasing powers of $2$. That is, if $e'$ is a child edge of $e$ in $T$ then $w(e') = \frac{1}{2}w(e)$.

\textbf{Distances and Metrics:}
For a set $V$ we call any positive real function $d : V \times V \to R_{\ge 0}$ which is symmetric, i.e., satisfies $d(u,v)=d(v,u)$ for all $u,v \in V$, and satisfies the identity of indiscernibles, i.e., $d(u,v) = 0 \Leftrightarrow u=v$, a distance function (such a function is also often called a semimetric). If $d$ also satisfies the triangle inequality $d(u,w) \leq d(u,v) + d(v,u)$ for all $u,v,w \in V$ then $d$ is called a metric. We also extend the definition of $d$ to sets in the standard way: $d(U, U') := \min_{u \in U, u' \in U'} d(u, u')$.

\textbf{Paths, Path Length, and Hop Length:}
A sequence $P = (v_0, v_1, \ldots, v_\ell)$ of nodes in a graph $G$ is called a path if for all $i \in [\ell]$ we have $\{v_{i-1},v_i\} \in E(G)$ and we say $E(P) := \bigcup_i \{\{p_{i-1},p_i\}\} \subseteq E(G)$ is the edge set of $P$. If the nodes in $P$ are distinct we say that $P$ is simple. In this paper paths are not assumed to be simple. We denote the number of hops in $P$ with $\hop(P) := \ell$ and call $\hop(P)$ the hop length of $P$. If $G = (V, E, w)$ is weighted, we define the weight of a path $P$ in $G$ to be the sum of weights of its edges: $w(P) := \sum_{e \in E(P)} w(e)$.

\textbf{Hop Distance and Hop Diameter:}
For a (non-complete) subgraph $H = (V(H), E(H))$ of a (complete) graph $G$ we let $\hop_H(u, v)$ be the minimum number of edges of a path between $u$ and $v$ in $H$ (i.e., using only the edges $E(H)$). We also define the hop diameter of $H$ as $\hop(H) := \max_{u, v \in V(H)} \hop_H(u, v)$.
 
\textbf{Shortest-Path Metric and Tree Metric:}
For a weighted graph $G$ the distance between any two nodes $u, v \in V$ is defined as $\dist_G(u, v) := \min \{ w(P) \mid \text{path $P$ between } u, v \}$. It is easy to verify that $\dist_G$ is a metric on $V$ and for this reason $\dist_G$ is called the shortest path metric of $G$. Any metric $d$ on a set $V$ which is identical to a shortest path metric of a weighted tree $T=(V,E,w)$ is called a tree metric; for this reason we will sometimes conflate a tree metric with its corresponding tree.


\section{Approximating Hop-Constrained Distances}\label{sec:apxHCD}
In this section we show that even though hop-constrained distances are not well-approximated by any metric, they are approximated by a distribution over what we call partial tree metrics. More specifically, we consider hop-constrained distances defined as follows.
\begin{definition}[Hop-Constrained Distances]
    For a (complete) weighted graph $G=(V,E,w)$ and a hop constraint $h \geq 1$ we define the $h$-hop distance between any two nodes $u, v \in V$ as $$\dist_G\at{h}(u, v) := \min \{ w(P) \mid \text{path $P$ in $G$ between } u, v \text{ with } \hop(P) \le h \}.$$
\end{definition}
As we have assumed that our graph $G$ is complete without loss of generality (see \Cref{sec:prelims}), the above is always well-defined for any $u,v \in V$.
\shortOnly{We note that all omitted proofs in this section appear in \Cref{sec:defProofsApxMetrics}.}

\subsection{Hop-Constrained Distances Are Inapproximable by Metrics}\label{sec:inapxMetrics}
We begin by observing that, not only is $d\at{h}_G$ not a metric, but it is, in general, innaproximable by any metric.

It is easy to verify that $\dist_G\at{h}$ is a valid distance function on $V(G)$. Indeed $\dist_G\at{h}$ is clearly symmetric, i.e., $d^{(h)}(u,v) = d^{(h)}(v,u)$, and satisfies the identity of indiscernibles, i.e., $d^{(h)}(u,v) = 0 \Leftrightarrow u=v$. However, it is also simple to see that hop-constrained distances are not necessarily metrics since they do not obey the triangle inequality. Indeed, the existence of a short $h$-hop path from $u$ to $v$ and a short $h$-hop path from $v$ to $w$ does not imply that the existence of a short $h$-hop path between $u$ and $w$. More formally it is possible that $d^{(h)}(u,w) \gg d^{(h)}(u,v) + d^{(h)}(v,w)$. 

Of course with a factor $2$ relaxation in the hop constraint the relaxed triangle inequality $d^{(2h)}(u,w) \leq d^{(h)}(u,v) + d^{(h)}(v,w)$ holds for any graph $G$ and any $u,v,w \in V(G)$. This suggests---albeit incorrectly---that one might be able to approximate hop-constrained distance by allowing constant slack in the hop constraint and length approximation as in the following definition. 

\begin{definition}\label{def:metricapproximation}
A distance function $\tilde{d}$ 
 approximates the $h$-hop constrained distances $d_G^{(h)}$ for a weighted graph $G=(V,E,w)$ where $h \ge 1$ with \textbf{distance stretch} $\alpha \ge 1$ and \textbf{hop stretch} $\beta \ge 1$ if for all $u,v \in V$ we have \[d_G^{(\beta h)}(u,v) \leq \tilde{d}(u,v) \leq \alpha \cdot d_G^{(h)}(u,v).\]
\end{definition}
As we next observe, no metric provides such an approximation without a very large hop or distance stretch.

\begin{restatable}{lemma}{noMetric}\label{lem:noMetric}
For any hop constraint $h \geq 1$, distance stretch $\alpha$, hop stretch $\beta$ and any $L > 1$, there exists a graph $G = (V,E,w)$ with aspect ratio $L$ such that if a metric $\tilde{d}$ approximates $d_G\at{h}$ with distance stretch $\alpha$ and hop stretch $\beta$ then $\alpha (\beta h + 1) \ge L$.
\end{restatable}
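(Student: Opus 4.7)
The plan is to exhibit a simple path-like graph on which the metric-approximation hypothesis forces $\alpha(\beta h + 1) \geq L$ directly. Set $k := \lfloor \beta h \rfloor + 1$ and let $G$ be the complete graph on $k+1$ vertices $v_0, v_1, \ldots, v_k$ where the ``path edges'' $\{v_i, v_{i+1}\}$ have weight $1$ and every other edge has weight $L$. The aspect ratio of $G$ is $L$ by construction.

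First I would prove an upper bound on $\tilde{d}(v_0, v_k)$. Since $d_G^{(h)}(v_i, v_{i+1}) \leq 1$ for each consecutive pair (using the unit path edge, and $h \geq 1$), the distance-stretch hypothesis yields $\tilde{d}(v_i, v_{i+1}) \leq \alpha$. Chaining these $k$ bounds via the triangle inequality for the metric $\tilde d$ gives
\[
\tilde{d}(v_0, v_k) \;\leq\; \sum_{i=0}^{k-1} \tilde{d}(v_i, v_{i+1}) \;\leq\; k\alpha \;\leq\; (\beta h + 1)\alpha.
\]
This is the step that crucially exploits the assumption that $\tilde d$ is a metric---precisely the structure that $d_G^{(h)}$ itself lacks.

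Second I would lower-bound $\tilde{d}(v_0, v_k)$ via the hop-stretch hypothesis, which reduces the task to lower-bounding $d_G^{(\lfloor \beta h\rfloor)}(v_0, v_k)$. The key observation is that $v_0$ and $v_k$ are separated by $k = \lfloor \beta h \rfloor + 1$ path edges, so any path with at most $\lfloor \beta h \rfloor$ hops must contain at least one non-path edge, whose weight is $L$. Positivity of all edge weights then gives $d_G^{(\lfloor \beta h\rfloor)}(v_0, v_k) \geq L$, hence $\tilde{d}(v_0, v_k) \geq L$. Combining with the upper bound yields $(\beta h + 1)\alpha \geq L$, as desired.

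The only step requiring any care is the hop-constrained lower bound, where one must rule out cheating via non-simple walks that reuse path edges or loop back; however, since every edge weight is strictly positive and every non-path edge costs $L$, the very first use of a non-path edge already contributes $L$ to the length, so non-simplicity cannot help. Otherwise the argument is a direct consequence of the triangle inequality applied to a path gadget, so I do not anticipate any serious obstacle.
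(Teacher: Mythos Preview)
Your proposal is correct and follows essentially the same approach as the paper: both use a path gadget on $\Theta(\beta h)$ unit-weight edges with all non-path edges set to weight $L$, chain the distance-stretch bound along consecutive path edges via the triangle inequality to get $\tilde d(v_0,v_k)\le \alpha(\beta h+1)$, and then invoke the hop-stretch lower bound to force $\tilde d(v_0,v_k)\ge L$. Your version is in fact slightly more careful than the paper's, taking $k=\lfloor\beta h\rfloor+1$ rather than $\beta h+1$ and explicitly noting why non-simple walks cannot help.
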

\gdef\noMetricProof{
\begin{proof}
Set $k := \beta h + 1$ and consider the path graph with vertices $v_0, v_1, \ldots, v_k$ where the edges have a uniform weight of $1$ (and all other edges have length $L$). Note that $\tilde{d}(v_i, v_{i+1}) \le \alpha \cdot d\at{h}(v_i, v_{i+1}) = \alpha$. Applying the triangle inequality $k$ times gives $\tilde{d}(v_0, v_k) \le \alpha k$. However, $\alpha k \ge \tilde{d}(v_0, v_k) \ge d\at{h \beta}(v_0, v_k) = L$, giving us that $\alpha ( \beta h + 1) \ge L$.
\end{proof}
}
\fullOnly{\noMetricProof}

Indeed, an approximation with the above large stretch is always trivially attainable. In particular, no metric can approximate $d\at{h}$ any better than the trivial approximation by the scaled shortest-path metric $\alpha \cdot d_G$ which gives value $\alpha \cdot d_G(u,v)$ to each $u,v \in V$, as shown by the following.

\begin{restatable}{lemma}{trivApx}\label{lem:trivApx}
    Given any graph $G = (V, E, w)$ with aspect ratio $L$ and a distance stretch $\alpha$ and hop stretch $\beta$ satisfying $\alpha ( \beta h + 1) \ge L$, we have that $\alpha \cdot d_G$ approximates $d_G\at{h}$ with distance stretch $\alpha$ and hop stretch $\beta$.
\end{restatable}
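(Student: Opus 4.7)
The plan is to verify both inequalities in \Cref{def:metricapproximation} directly, with the bulk of the work going into the lower bound.

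The upper bound $\alpha \cdot d_G(u,v) \le \alpha \cdot d_G^{(h)}(u,v)$ is immediate: since any $h$-hop path is a path, the unconstrained shortest-path distance is at most the $h$-hop constrained distance, so multiplying by $\alpha \ge 1$ preserves the inequality.

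For the lower bound we must show $d_G^{(\beta h)}(u,v) \le \alpha \cdot d_G(u,v)$. I would split into two cases based on the hop length of a shortest path $P$ from $u$ to $v$ in $G$. In the easy case, $\hop(P) \le \beta h$, so $P$ itself witnesses $d_G^{(\beta h)}(u,v) \le w(P) = d_G(u,v) \le \alpha \cdot d_G(u,v)$ using $\alpha \ge 1$. In the hard case, every shortest path uses strictly more than $\beta h$ hops. Since each edge has weight at least $1$ (weights lie in $\{1,\ldots,L\}$), any such path has total weight at least $\beta h + 1$, giving $d_G(u,v) \ge \beta h + 1$. On the other hand, $G$ is complete, so the single direct edge $\{u,v\}$ has weight at most $L$ and hop length $1 \le \beta h$, whence $d_G^{(\beta h)}(u,v) \le L$. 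Combining with the hypothesis $\alpha(\beta h + 1) \ge L$ yields
\[
d_G^{(\beta h)}(u,v) \;\le\; L \;\le\; \alpha(\beta h + 1) \;\le\; \alpha \cdot d_G(u,v),
\]
which completes the case.

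I do not expect any serious obstacle: the only subtle point is remembering that completeness of $G$ (assumed in \Cref{sec:prelims}) guarantees a $1$-hop fallback path of weight at most $L$, which is precisely what matches the hypothesis $\alpha(\beta h + 1) \ge L$ supplied by \Cref{lem:noMetric}. Together the two cases give the lower bound, and combined with the trivial upper bound they establish that $\alpha \cdot d_G$ approximates $d_G^{(h)}$ with the claimed distance and hop stretch.
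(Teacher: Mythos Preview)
Your proof is correct and follows essentially the same argument as the paper. The only cosmetic difference is the case split: the paper splits on whether $d_G(u,v) \le \beta h$ (and then infers from the minimum edge weight of $1$ that the shortest path has at most $\beta h$ hops), whereas you split directly on whether some shortest path has at most $\beta h$ hops; in the ``hard'' case both proofs bound $d_G^{(\beta h)}(u,v)\le L$ via the direct edge and invoke the hypothesis $\alpha(\beta h+1)\ge L$ together with $d_G(u,v)\ge \beta h+1$.
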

\gdef\trivApxProof{
\begin{proof}
    Set $\tilde{d}(u, v) := \alpha \cdot d_G(u, v)$ where $d_G$ is the standard shortest-path metric on $G$. It remains to check that $\tilde{d}(u, v)$ satisfies the requirements of \Cref{def:metricapproximation}. The right hand side of the inequality in \Cref{def:metricapproximation} clearly holds since $d_G(u, v) \le d_G\at{h}(u, v)$ implies that $\tilde{d}(u, v) = \alpha d_G(u, v) \le \alpha d_G\at{h}(u, v)$. We now argue the left hand side, i.e., $\tilde{d}(u, v) \ge d_G\at{\beta h}(u, v)$, by considering two cases: on the one hand, if $d_G(u, v) > \beta h$ then $\tilde{d}(u, v) = \alpha d_G(u, v) \geq \alpha (\beta h + 1) \ge L \ge d_G\at{\beta h}(u, v)$. On the other hand, if $d_G(u, v) \le \beta h$, this value must come from an unconstrained shortest-path with hop distance (and length) of at most $\beta h$ in which case $d_G\at{\beta h}(u, v) = d_G(u, v)$ and therefore $\tilde{d}(u, v) = \alpha \cdot d_G(u, v) \ge d_G(u, v) = d_G\at{\beta h}(u, v)$ as desired. 
\end{proof}
}
\fullOnly{\trivApxProof}

Thus, hop-constrained distances can be maximally far from \emph{any} metric in the sense that the only way to approximate them by a metric requires so much slack in the hop and distance stretch that the approximation becomes trivial. Moreover, since the expected distance between two nodes in a distribution over metrics is itself a metric, the above result also rules out approximating $d^{(h)}$ in a non-trivial way with distributions over metrics as in FRT. This impossibility remains even when one allows for relaxations of the hop constraint.

\subsection{Distances Induced by Distributions Over Partial Metrics}\label{sec:distInd}
While \Cref{lem:noMetric} shows that no metric can approximate $d\at{h}_G$ on all vertices, it does not rule out the possibility that some metric approximates $d\at{h}_G$ on a large subset of $V$. Thus, we introduce the following concept of partial metrics.

\begin{definition}[Partial Metric]
    Any metric $d$ defined on a set $V_d$ is called a \textbf{partial metric} on $V$ if $V_d \subseteq V$.
\end{definition}

We will often talk about how partial metric $d$ approximates $d\at{h}_G$ on $V_d$ with hop and distance stretches $\alpha$ and $\beta$ by which we mean that the inequality of \Cref{def:metricapproximation}---$d_G^{(\beta h)}(u,v) \leq d(u,v) \leq \alpha \cdot d_G^{(h)}(u,v)$---holds for every $u,v \in V_d$. Of course, a partial metric on the empty set trivially approximates $d\at{h}_G$ and we are ultimately interested in estimating $d\at{h}$ on all pairs of nodes. For this reason, we give the following notions of exclusion probability and how a distribution over partial metrics can induce a distance function between all nodes.



\begin{definition}[Distances of Partial Metric Distributions]\label{lem:partialMetricDist}
 Let $\mathcal{D}$ be a distribution of partial metrics of $V$ for weighted graph $G = (V, e, w)$. We say $\mathcal{D}$ has \textbf{exclusion probability} $\eps$ if for all $v \in V$ we have $\Pr_{d \sim \mathcal{D}}[v \in V_d] \geq 1 - \eps$. If $\epsilon \leq \frac{1}{3}$ then we say that $\mathcal{D}$ induces the distance function $d_{\mathcal{D}}$ on $V$, defined as
$$d_{\mathcal{D}}(u,v) :=  \E_{d \sim \mathcal{D}}\left[d(u,v) \cdot \1{ u,v \in V_d} \right ].$$
\end{definition}
 
It is easy to verify that $d_\mcD$ is indeed a distance function. 
    In particular, we trivially have that $d_\mcD(v,v) = 0$ since $d(v, v) = 0$ for all $d$ in the support of $\mcD$.
    An exclusion probability bounded above by $\frac{1}{2}$ guarantees that $\Pr_{d \sim \mathcal{D}}[u,v \in V_d] > 0$ for any $u,v \in V$. This guarantees that $d_{\mathcal{D}}(u,v) > 0$ for $u\neq v$ which makes $d_{\mathcal{D}}$ a valid distance function. It may be useful for the reader to think of $d_{\mathcal{D}}(u,v)$ as a conditional expected distance where we condition on $u$ and $v$ both being in the metric drawn from $\mcD$.
 
With these definitions in place we can define what it means for a distribution of partial metrics to approximate hop-constrained distances. 

\begin{definition}[Stretch of Partial Metric Distribution]\label{def:partialmetricapproximation}
A distribution $\mathcal{D}$ of partial metrics on $V$ with exclusion probability at most $\frac{1}{3}$ approximates $d^{(h)}$ on weighted graph $G=(V,E,w)$ for hop constraint $h \ge 1$ with \textbf{worst-case distance stretch} $\alpha_{WC} \ge 1$ and \textbf{hop stretch} $\beta \ge 1$ if each $d$ in the support of $\mathcal{D}$ approximates $d^{(h)}_G$ on $V_d$ with distance stretch $\alpha_{WC}$ and hop stretch $\beta$, i.e. for each $d$ in the support of $\mcD$ and all $u,v \in V_d$ we have \[d_G^{(\beta h)}(u,v) \leq d(u,v) \leq \alpha \cdot d_G^{(h)}(u,v).\]
Furthermore, $\mcD$ has \textbf{expected distance stretch} $\alpha_{\E}$ if for all $u,v \in V$ we have \[d_{\mathcal{D}}(u,v) \leq \alpha_{\E} \cdot d_G^{(h)}(u,v).\]


\end{definition}

\subsection{Approximating Hop-Constrained Distances with Partial Tree Metrics}\label{sec:ApproximatingHopMetrics}

Even though $h$-hop distances are generally inapproximable by distributions over metrics, we now show that they are well-approximated by distributions over very simple partial metrics, namely well-separated partial tree metrics.
\begin{theorem}\label{thm:mainmetric}
For any (complete) weighted graph $G$, any hop-constraint $h \geq 1$, and any $0 < \eps < \frac{1}{3}$ there is a distribution $\mathcal{D}$ over well-separated tree metrics each of which is a partial metric on $V(G)$ such that $\mathcal{D}$ has exclusion probability at most $\eps$ and approximates $d_G\at{h}$ with expected distance stretch $\alpha_{\E} = O(\log n \cdot \log \frac{\log n}{\eps})$, worst-case distance stretch $\alphawc = O(\frac{\log^2 n}{\eps})$ and hop stretch $\beta = O(\frac{\log^2 n}{\eps})$. 
\end{theorem}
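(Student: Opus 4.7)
The plan is to construct $\mcD$ by recursively applying padded decompositions to a family of \emph{mixture metrics} that penalize high-hop paths, and then to assemble the resulting laminar partitions into a well-separated HST in the spirit of the FRT construction.

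For every scale $\ell \ge 1$ (a running guess for the current cluster diameter), I define the modified edge weights $w_\ell(e) := w_G(e) + \ell/h$ and let $d_\ell$ be the shortest-path metric induced by $w_\ell$. The key two-sided property is: any path in $G$ that uses at most $h$ hops and has $w_G$-weight at most $\ell$ has $d_\ell$-length at most $2\ell$, whereas any $d_\ell$-path of length at most $C\ell$ necessarily uses at most $Ch$ hops (since each hop contributes at least $\ell/h$ to $d_\ell$). Hence $d_\ell$-balls of radius $O(\ell)$ contain exactly those pairs of vertices connected by a genuinely short $O(h)$-hop path in $G$, which is what converts $d_\ell$-diameter guarantees into hop-stretch guarantees.

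At each scale $\ell \in \{2^0, 2^1, \dots, 2^{\lceil \log(nL) \rceil}\}$ I then invoke the decomposition lemma (Lemma~\ref{lem:hopconstrained-decomp}) on $d_\ell$ to obtain a random partition $\mcP_\ell$ of $V$ into clusters of $d_\ell$-diameter $O(\ell)$, together with the padded guarantee that each vertex $v$ lies in a $d_\ell$-ball of radius $\Omega(\eps \ell / \log n)$ contained entirely in its own cluster, with probability at least $1 - \eps / \Theta(\log n)$. Intersecting these partitions from the coarsest scale downward yields a laminar family of depth $O(\log(nL)) = O(\log n)$ (using $L = \poly(n)$). From this laminar family I build a well-separated partial tree metric $d$ exactly as in FRT: internal nodes are clusters, the edge from a scale-$\ell$ cluster to its scale-$\ell/2$ children has weight $\Theta(\ell)$ (ensuring well-separatedness), leaves are the vertices of $V$, and a vertex $v$ is marked excluded (i.e., $v \notin V_d$) as soon as its padding fails at any of the $O(\log n)$ scales.

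The analysis then splits into four threads. \emph{Exclusion}: a union bound over scales yields total exclusion probability at most $\eps$. \emph{Worst-case distance stretch}: if $u,v \in V_d$ have their LCA at scale $\ell$, then $d(u,v) = \Theta(\ell)$; since $u$ was padded at every scale $\ge \ell$, the padding radius $\Omega(\eps \ell/\log n)$ together with the two-sided conversion back to $d_G^{(h)}$ forces $d_G^{(h)}(u,v) = \Omega(\eps \ell / \log n)$, giving $\alphawc = O(\log^2 n / \eps)$. \emph{Hop stretch}: the path realizing $d(u,v)$ in $G$ is formed by concatenating intra-cluster pieces at different scales, each of bounded $d_\ell$-length and hence bounded in hops by the two-sided property; summing across scales gives $\beta = O(\log^2 n / \eps)$. \emph{Expected distance stretch}: the standard FRT telescoping, using that at scale $\ell$ the pair $u,v$ is separated with probability $O(d_G^{(h)}(u,v)/\ell \cdot \log n)$ and that the relevant sum effectively terminates after $O(\log(\log n/\eps))$ scales (before the typical endpoint gets excluded), delivers $\alpha_\E = O(\log n \cdot \log(\log n / \eps))$.

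The main technical obstacle is that the "ruler" used to decompose at each scale $\ell$ is the mixture metric $d_\ell$, which itself depends on $\ell$; the nested partitions at different scales therefore live in \emph{different} metrics. One must verify that padding at scale $\ell$ remains meaningful when zooming into a sub-cluster and applying $d_{\ell/2}$, and that the combined padded guarantees simultaneously control hop count \emph{and} length at every level. Lemma~\ref{lem:hopconstrained-decomp} is designed precisely to produce a single decomposition whose padding is valid against this family of mixture metrics; the recursive stitching then hinges on showing that the two-sided property above propagates cleanly across levels so that the per-level hop and length stretches aggregate to the claimed overall bounds.
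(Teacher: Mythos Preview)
Your high-level plan matches the paper's: define a mixture metric, use padded decompositions at geometrically decreasing scales, and assemble the resulting laminar family into a well-separated tree. However, there is a genuine gap in your parametrization that breaks both the worst-case and the expected stretch analyses.

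The issue is the hop scale in $w_\ell(e) := w_G(e) + \ell/h$. With this choice, being $d_\ell$-padded at radius $r = \Omega(\eps\ell/\log n)$ only says that every $u$--$v$ path $P$ satisfies $w_G(P) + \hop(P)\cdot \ell/h > r$; for paths with $\hop(P)\le h$ the hop term can already be as large as $\ell \gg r$, so you get \emph{no} lower bound on $d_G^{(h)}(u,v)$ (you only lower bound $d_G^{(\Theta(\eps h/\log n))}$, the distance under a much tighter hop budget). Thus the step ``padding forces $d_G^{(h)}(u,v) = \Omega(\eps\ell/\log n)$'' does not follow, and $\alpha_{WC}=O(\log^2 n/\eps)$ is unjustified. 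The same choice makes your claimed separation probability $O(d_G^{(h)}(u,v)/\ell \cdot \log n)$ false: an $h$-hop path of $w_G$-weight $\delta$ has $d_\ell$-length $\delta+\ell$, so at diameter $\Theta(\ell)$ the padded-decomposition bound is $\Theta(\log n)$, independent of $\delta/\ell$. The FRT telescoping then does not converge to the stated $\alpha_{\E}$. The paper avoids both problems by running the decomposition with an \emph{inflated} hop parameter $h' := h\cdot\kappa$ with $\kappa = \Theta(\eps^{-1}\log^2 n)$ held fixed across all scales; this makes the hop contribution to the mixture weight negligible ($\hop(P)/h' \le 1/\kappa$), so the path-preservation bound becomes $\pad\,(1/\kappa + 2\delta/\Delta_l)$ and the paddedness translates into a genuine lower bound on $d_G^{(h)}$.

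Your explanation of the $\log(\log n/\eps)$ factor is also off. It is not that the sum ``terminates before the typical endpoint gets excluded''; exclusion plays no role there. The paper instead shows (via the worst-case argument just established) that $u,v$ \emph{cannot} be separated at any scale $\Delta_l > 2\kappa\delta$, so those levels contribute zero deterministically; the levels with $\Delta_l \le 2\pad\,\delta$ sum geometrically to $O(\pad\,\delta)$; and only the $O(\log(\kappa/\pad)) = O(\log(\log n/\eps))$ intermediate levels each contribute $O(\pad\,\delta)$. This three-phase split, driven by the worst-case cutoff rather than by exclusion, is what produces the stated expected stretch.
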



The rest of \Cref{sec:ApproximatingHopMetrics} is dedicated to the proof of \Cref{thm:mainmetric}. In \Cref{sec:decomposition-lemma} we define simple ``mixture metrics'' and show how combining these metrics with padded decompositions leads to random decompositions with desirable properties regarding hop-constrained distances. In \Cref{sec:mainproof} we show how recursively refining these partitions gives a random partial tree metric which proves \Cref{thm:mainmetric}.


\subsubsection{Mixture Metrics and Padded Decompositions for Hop-Constrained Distances}\label{sec:decomposition-lemma}

To better understand the structure of hop-constrained distances, we develop a decomposition lemma which gives structure both in terms of weights and hops. In particular, we call a collection of disjoint vertex sets $C_1 \sqcup C_2 \sqcup \ldots \sqcup C_k$ a \textbf{partial vertex partition}; $C_1 \sqcup C_2 \sqcup \ldots \sqcup C_k$ is a \textbf{complete vertex partition} if $\bigcup_i C_i = V$. In a nutshell, we decompose the vertices of a weighted graph $G$ into a partial vertex partition where (1) both the hop diameter and weight diameter of all $C_i$'s is small, (2) $C_i$ and $C_j$ for $i \neq j$ are well-separated both in terms of hops and weight and (3) almost every vertex is in the partial vertex partition. Our decomposition combines two simple ingredient. 

Our first ingredient is what we call the mixture metric which is obtained by mixing together hop lengths and weights in the following way.
\begin{definition}[Mixture Metric]
  Given a weighted graph $G = (V, E, w)$, a hop scale $h > 0$, and a weight scale $b > 0$, we define a \textbf{mixture weight} $w' : E \to \mathbb{R}_{\ge 0}$ of an edge $e \in E$ as $w'(e) := 1/h + w(e)/b$. The shortest path metric induced by $w'$ is called the \textbf{mixture metric} $d' : V \times V \to \mathbb{R}_{\ge 0}$.
\end{definition}
The utility of the mixture metric is given by three easy to verify facts: It is a metric and so is amenable to standard metric decomposition theorems; if $d'(u, v) \le \alpha$ in the mixture metric with hop scale $h$ and weight scale $b$, then $d\at{\alpha \cdot h}(u, v) \le \alpha \cdot b$; if $d'(u, v) > \alpha$, then $d\at{\alpha \cdot h / 2}(u, v) > \alpha \cdot b / 2$.

Our second ingredient is the well-studied padded decomposition~\cite{gupta2003bounded,abraham2019cops}. Given a metric space $(V, \dist)$ we denote the ball of radius $r \ge 0$ around $x \in V$ with $B_{\dist}(x, r) := \{ y \in V \mid \dist(x, y) \le r \}$ . Next, let $C = C_1 \sqcup \ldots \sqcup C_k$ be a (partial or complete) vertex partition. Then, for a subset $U \subseteq V$, we say that \textbf{$U$ is broken in $C$} if $|\{ i \mid U \cap C_i \neq \emptyset \}| > 1$. We also denote this event by $U \not \subseteq C$ and its logical negation by $U \subseteq C$. With this notation, we define padded decompositions:
\begin{definition}[Padded Decompositions]
  \label{def:padded}
  Let $(V, \dist)$ be a metric space and let $\mcC$ be a distribution over complete vertex partitions. $\mcC$ is a $(\pad,\Delta)$-padded decomposition if:
  \begin{enumerate}
  \item \textbf{Diameter:} $\max_{u, v \in C_i} \dist(u, v) \le \Delta$ for each $C = C_1 \sqcup C_2 \sqcup \ldots \sqcup C_k$ in the support of $\mcC$ and $i \in [k]$.
  \item \textbf{Paddedness:} $\Pr_{C\sim \mcC}[B_{\dist}(v, r) \not \subseteq C] < \frac{r \cdot \pad}{\Delta}$ for each $v \in V$ and every $r > 0$.
  \end{enumerate}
\end{definition}
In other words, each part of a partition in $\mcC$ has diameter at most $\Delta$ and the probability of a node being within $r$ from a node in a different part is at most $\frac{r \pad}{\Delta}$. The value $\pad$ is known as the \textbf{padding parameter}.\footnote{We note that our definition of $\pad$ slightly differs from that of other papers, albeit only by a constant factor.}
Combining padded decompositions with our mixture metric and its properties as observed above gives our decomposition lemma.
\begin{restatable}{lemma}{hopconstrainedDecomp}\label{lem:hopconstrained-decomp}
  Let $G = (V, E, w)$ be a weighted graph with padding parameter $\pad$. For any hop constraint $h > 0$, weight diameter $b > 0$, and exclusion probability $\gamma > 0$, there exists a distribution $\calC$ over partial vertex partitions where for every $C = C_1 \sqcup \ldots \sqcup C_k$ in the support of $\mcC$:
\begin{enumerate}
    \item \textbf{Hop-Constrained Diameter:} $\dist\at{h}_{G}(u, v) \le b$ for $i \in [k]$ and $u,v \in C_i$;
    \item \textbf{Hop-Constrained Paddedness:} $\dist\at{h \frac{\gamma}{2\pad}}_{G}(u, v) \ge b \cdot \frac{\gamma}{2\pad}$ for every $u \in C_i$ and $v \in C_j$ where $i \neq j$.
\end{enumerate}
And:
\begin{enumerate}\setcounter{enumi}{2}
    \item \textbf{Exclusion probability:} $\Pr_{C\sim \calC}[ v \not \in \bigcup_{i \in [k]} C_i ] \le \gamma$ for each $v \in V$ where $C = C_1 \sqcup \ldots \sqcup C_k$;
    \item \textbf{Path preservation:} $\Pr_{C \sim \calC}[V(P)\text{ is broken in } C] \le ( \hop(P)/h + w(P)/b ) \cdot \pad$ for each path $P$.
\end{enumerate}
\end{restatable}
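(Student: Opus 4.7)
The plan is to apply a padded decomposition to the \emph{mixture metric} $d'$ (with hop scale $h$ and weight scale $b$) using diameter $\Delta = 1$, and then convert the resulting complete partition into a partial one by removing every vertex whose $d'$-ball of a carefully chosen radius crosses its cluster's boundary. The reason the mixture metric is the right object is that the mixture length of any path $P$ is exactly $\hop(P)/h + w(P)/b$, so $d'$-distances translate cleanly into \emph{simultaneous} bounds on hop counts and weights, and vice versa.

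Concretely, set $w'(e) := 1/h + w(e)/b$ and let $d'$ denote the induced shortest-path metric on $V$. Draw a random complete partition $C = C_1 \sqcup \cdots \sqcup C_k$ from a $(\pad, 1)$-padded decomposition of $(V, d')$, pick exclusion radius $r := \gamma/\pad$, and output the partial partition $C' = C'_1 \sqcup \cdots \sqcup C'_k$ where $C'_i := \{\, v \in C_i : B_{d'}(v, r) \subseteq C_i \,\}$. Call the resulting distribution $\calC$.

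For Property 1, if $u, v \in C'_i$ then $d'(u, v) \le 1$, so there exists a path $P$ in $G$ with $\hop(P)/h + w(P)/b \le 1$; both summands being nonnegative forces $\hop(P) \le h$ and $w(P) \le b$ simultaneously, yielding $\dist_G\at{h}(u,v) \le b$. For Property 2, take $u \in C'_i$ and $v \in C'_j$ with $i \ne j$: since $u \in C'_i$ we have $B_{d'}(u, r) \subseteq C_i$, so $v \notin B_{d'}(u, r)$ and thus $d'(u,v) > \gamma/\pad$. For any path $P$ with $\hop(P) \le h\gamma/(2\pad)$ this yields $\gamma/\pad < \hop(P)/h + w(P)/b \le \gamma/(2\pad) + w(P)/b$, forcing $w(P) > b\gamma/(2\pad)$, so $\dist_G\at{h\gamma/(2\pad)}(u,v) \ge b\gamma/(2\pad)$.

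For Property 3, a vertex $v$ is excluded from $C'$ precisely when $B_{d'}(v, r)$ is broken in $C$, which by paddedness of the underlying decomposition occurs with probability at most $r \cdot \pad / 1 = \gamma$. For Property 4, since $V(P) \subseteq B_{d'}(v_0, w'(P))$ for the first vertex $v_0$ of $P$, any breaking of $V(P)$ in the partial partition $C'$ also breaks $V(P)$ in the complete partition $C$ (removing vertices cannot create new inter-cluster pairs), which in turn breaks $B_{d'}(v_0, w'(P))$ in $C$; paddedness then gives probability at most $w'(P) \cdot \pad = (\hop(P)/h + w(P)/b) \cdot \pad$. The only real conceptual obstacle is identifying the mixture metric as the right object—once that is in hand, all four properties fall out of a single off-the-shelf padded decomposition together with two easy ``translation'' facts between $d'$ and hop-constrained distances, and everything else is parameter bookkeeping around the choices $r = \gamma/\pad$ and $\Delta = 1$.
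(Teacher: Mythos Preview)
Your proposal is correct and follows essentially the same approach as the paper: apply a padded decomposition to the mixture metric, then peel off boundary vertices to obtain the partial partition. The only difference is parameterization---you take $\Delta = 1$ with exclusion radius $\gamma/\pad$, while the paper takes $\Delta = 2\pad$ with exclusion radius $2\gamma$ (effectively a rescaled mixture metric)---and your normalization is arguably the cleaner of the two.
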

\gdef\hopconstrainedDecompProof{
\begin{proof}
  Let $\dist'$ be the mixture metric of $G$ with hop scale $h$ and weight scale $b$ and let $\Delta := 2 \pad$.
  We first take a (distribution over) $(\pad, \Delta)$-padded decompositions $C' = C_1' \sqcup C_2' \sqcup \ldots \sqcup C_k'$ using $\dist'$ as the underlying metric. Next, we construct $C_i \subseteq C_i'$ by starting with $C_i := C_i'$ and removing all vertices $v \in C_i'$ where $B_{\dist'}(v, 2 \gamma) \not \subseteq C_i'$. Now $\Pr[v \not \in \bigcup_{i \in [k]} C_i] \le \frac{2 \gamma \cdot \pad}{\Delta} \le \gamma$ for each vertex $v \in V$, as stipulated by (3).
  
  Fix $u, v \in C_i$. Since every $C_i'$ has $d'$-diameter at most $\Delta$, there exists a sequence of edges $P = (e_1, e_2, \ldots, e_\ell)$ between $u$ and $v$ whose $\dist'$-length is at most $\Delta$. Therefore:
  \begin{align*}
    \Delta \ge \sum_{i=1}^\ell\left( \frac{\Delta}{h} + \frac{\Delta \cdot w(e_i)}{b} \right) = \frac{\Delta \cdot \hop(P)}{h} + \frac{\Delta \cdot w(P)}{b} .
  \end{align*}
  In other words, $\hop(P) \le h$ and $w(P) \le b$, implying that $d_G\at{h}(u, v) \le b$ for any $u, v \in C_i'$. Therefore, the same claim holds for $u, v \in C_i \subseteq C_i'$, giving (1).



  For $u \in C_i$ and $v \in C_j$ where $i \neq j$ we argue that $d\at{\gamma h / \Delta}(u, v) > \gamma b / \Delta$, i.e.\ (2). Suppose for the sake of contradiction that $d_G\at{\gamma h / \Delta}(u, v) \le \gamma b / \Delta$. It follows that there exists a path $P$ with $\hop(P) \le \gamma h / \Delta$ and $w(P) \le \gamma b / \Delta$. However, the $d'$-length of $P$ is at most $\frac{\hop(P) \Delta}{h} + \frac{w(P) \Delta}{b} \le 2 \gamma$. Thus, we have contradicted how we constructed $C_i$ from $C_i'$. Hence $d_G\at{\gamma h / 2\pad}(u, v) > \gamma b / 2 \pad$ since $\Delta = 2 \pad$.

  Finally, consider a path $P$ from $u$ to $v$ and let $\delta' := \hop(P)\Delta/h + w(P)\Delta/b$. If $P$ is broken in $C_1 \sqcup \ldots \sqcup C_k$ then $B_{d'}(u, \delta') \not \subseteq P$. We therefore have (4), namely \[\Pr[P\text{ is broken in }C_1 \sqcup \ldots \sqcup C_k] \le \Pr[B_{\dist'}(u, \delta') \not \subseteq P] < \frac{\delta' \cdot \pad}{\Delta} = \frac{\delta'}{2} \le ( \hop(P)/h + w(P)/b ) \cdot \pad. \qedhere\]
\end{proof}
}
\fullOnly{\hopconstrainedDecompProof}

Lastly, we note that it is known that every metric has padded decompositions with padding parameter $O(\log n)$ and so our decomposition lemma holds with $\pad = O(\log n)$.
\begin{lemma}[\cite{gupta2003bounded,abraham2019cops}]
    \label{lemma:abraham-padded}
    Every metric on $n$ points admits a $(\pad,\Delta)$-padded decomposition for $\pad=O(\log n)$ and any $\Delta>0$. Furthermore, such a decomposition can be computed in polynomial time.
\end{lemma}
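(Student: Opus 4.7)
The plan is to invoke the standard CKR/FRT-style random decomposition and analyze it. I would not reprove this from scratch in the body of the paper (since the result is classical), but the proof sketch I have in mind is the following.

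First, I describe the random partition. Given the metric $(V, \dist)$ and target diameter $\Delta > 0$, sample a radius $R$ uniformly from the interval $[\Delta/4, \Delta/2]$ and an independent uniformly random permutation $\pi$ of $V$. For each $u \in V$, assign $u$ to the cluster centered at the vertex $\pi(i)$ where $i$ is the smallest index such that $\dist(u, \pi(i)) \le R$. Since $\Delta > 0$ and the metric is finite, every vertex is covered, yielding a complete vertex partition $C = C_1 \sqcup \ldots \sqcup C_k$. The diameter property is immediate: any two vertices in the same cluster are both within distance $R \le \Delta/2$ of the common center, so by the triangle inequality they are within distance $2R \le \Delta$ of each other.

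The heart of the argument is the paddedness bound. Fix $v \in V$ and $r > 0$, and let $u_1, u_2, \ldots, u_n$ be an enumeration of $V$ in nondecreasing order of $\dist(v, u_k)$; write $a_k := \dist(v, u_k)$. I would define, for each $k$, the event $E_k$ that $u_k$ is the first center (in the $\pi$-order) whose ball of radius $R$ intersects $B_{\dist}(v, r)$ and that this center splits $B_{\dist}(v, r)$, i.e. does not contain all of it. A necessary condition for $E_k$ is $R \in [a_k - r, a_k + r]$, for otherwise $u_k$'s ball either captures none or all of $B_{\dist}(v, r)$. A second necessary condition is that $\pi$ ranks $u_k$ before all of $u_1, \ldots, u_{k-1}$ (since those are at least as close to $v$ and their balls would capture at least as much of $B_{\dist}(v,r)$ as $u_k$'s does). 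By independence of $R$ and $\pi$,
\[
\Pr[E_k] \;\le\; \Pr\bigl[R \in [a_k - r, a_k + r]\bigr] \cdot \frac{1}{k} \;\le\; \frac{2r}{\Delta/4} \cdot \frac{1}{k} \;=\; \frac{8\,r}{k\,\Delta}.
\]
Since the ball $B_{\dist}(v, r)$ is broken only if some $E_k$ occurs, a union bound gives
\[
\Pr\bigl[B_{\dist}(v, r) \not\subseteq C\bigr] \;\le\; \sum_{k=1}^{n} \frac{8\,r}{k\,\Delta} \;=\; \frac{8\,r\,H_n}{\Delta} \;=\; O(\log n) \cdot \frac{r}{\Delta},
\]
which is exactly the paddedness guarantee with $\pad = O(\log n)$.

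Finally, polynomial-time computability is immediate from the sampling description: drawing $R$ takes constant time, drawing $\pi$ takes $O(n)$ time, and computing the cluster assignments takes $O(n^2)$ distance queries. The main subtle step is identifying the right ``cutter'' event $E_k$ and seeing that the condition ``$u_k$ is first in $\pi$ among $u_1, \ldots, u_k$'' is a valid relaxation of ``$u_k$ is the first center capturing any point of $B_{\dist}(v,r)$''; everything else is an elementary probability calculation. Since this is a well-known result, I would present it as a citation to \cite{gupta2003bounded,abraham2019cops} with a short remark referencing the sketch above rather than reproducing the full proof.
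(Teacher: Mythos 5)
Your sketch is correct: the diameter bound, the identification of the cutting events $E_k$ with the two necessary conditions ($R \in [a_k - r, a_k + r]$ and $u_k$ preceding $u_1,\ldots,u_{k-1}$ in $\pi$), the independence-based bound $\Pr[E_k] \le \frac{8r}{k\Delta}$, and the harmonic-sum union bound are exactly the standard CKR-style analysis underlying the cited results. The paper does not reprove \Cref{lemma:abraham-padded} at all --- it simply cites \cite{gupta2003bounded,abraham2019cops}, which is precisely the treatment you propose, so your approach matches the paper's.
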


\subsubsection{Constructing Tree Metrics for Hop-Constrained Distances and the Proof of \Cref{thm:mainmetric}}\label{sec:mainproof}

Next, we recursively apply the random partial vertex partitions of \Cref{lem:hopconstrained-decomp} to obtain a distribution over families of laminar subsets of nodes of $G$. This distribution will naturally correspond to a distribution over well-separated tree metrics which approximate $h$-hop constrained distances. In particular, a rough outline of our construction is as follows: we start with a large weight diameter $\Delta \le \poly(n)$ and hop constraint about $h$ and compute the partial vertex partition $C_1 \sqcup \ldots \sqcup C_k \subseteq V(G)$ of \Cref{lem:hopconstrained-decomp}. We remove from our process any vertices not in our partial vertex partition. We then recurse on each part $C_i$ while keeping our hop constraint constant but shrinking $\Delta$ by a factor of $2$. We combine the recursively constructed trees by hanging the roots of the returned trees off of the root of a fixed but arbitrary tree with edges of length $\Delta$. The recursion stops when each $C_i$ is a singleton. The resulting tree metric is partial since each application of \Cref{lem:hopconstrained-decomp} removes a small fraction of nodes. We illustrate our construction in \Cref{fig:mainEmbed} and proceed to prove \Cref{thm:mainmetric}.

\begin{figure}
    \centering
    \begin{subfigure}[b]{0.24\textwidth}
    \centering
    \includegraphics[width=\textwidth,trim=130mm 0mm 130mm 0mm, clip]{./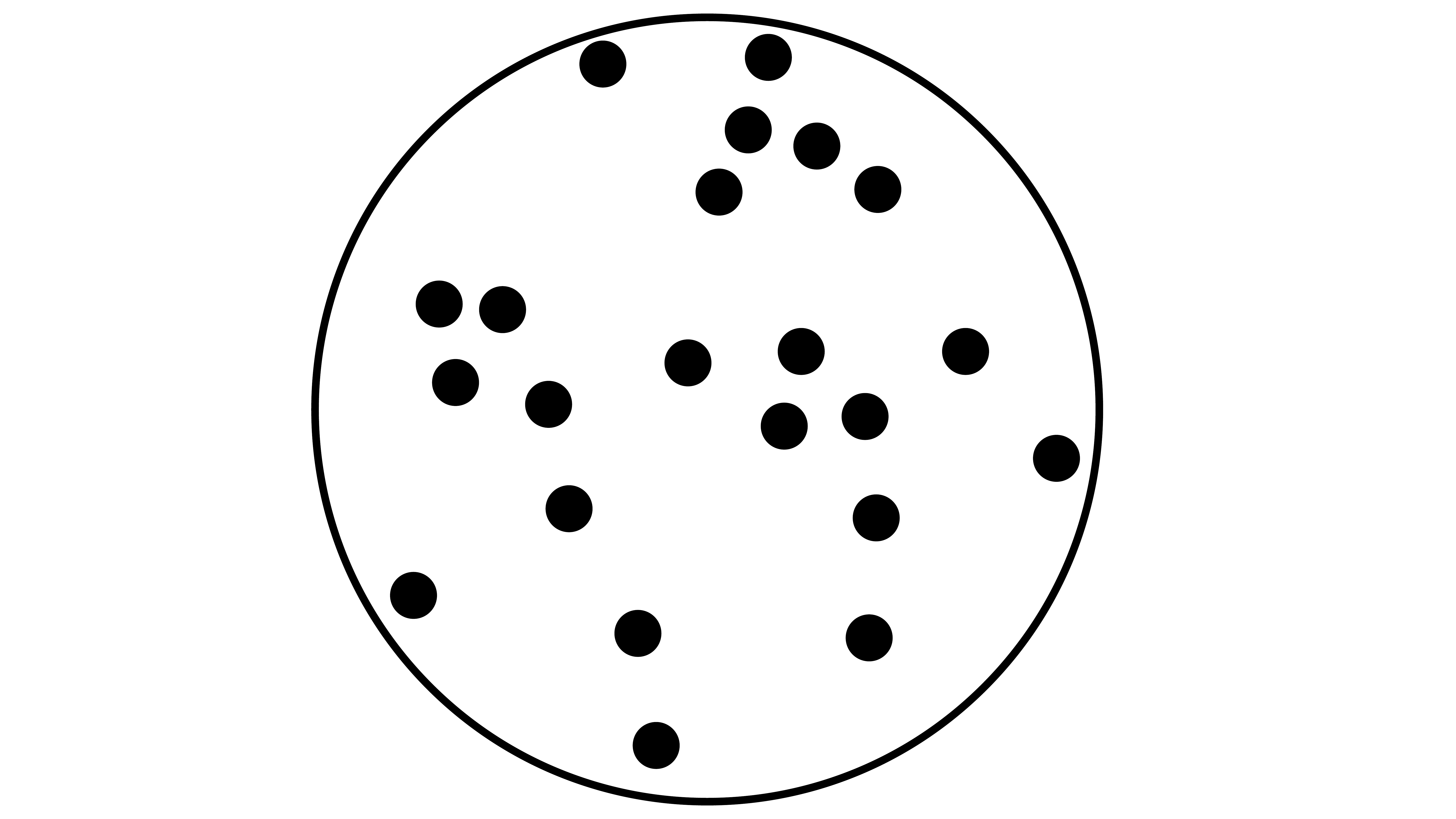}
    \caption{Graph $G$}
\end{subfigure}
    \hfill
    \begin{subfigure}[b]{0.24\textwidth}
    \centering
    \includegraphics[width=\textwidth,trim=130mm 0mm 130mm 0mm, clip]{./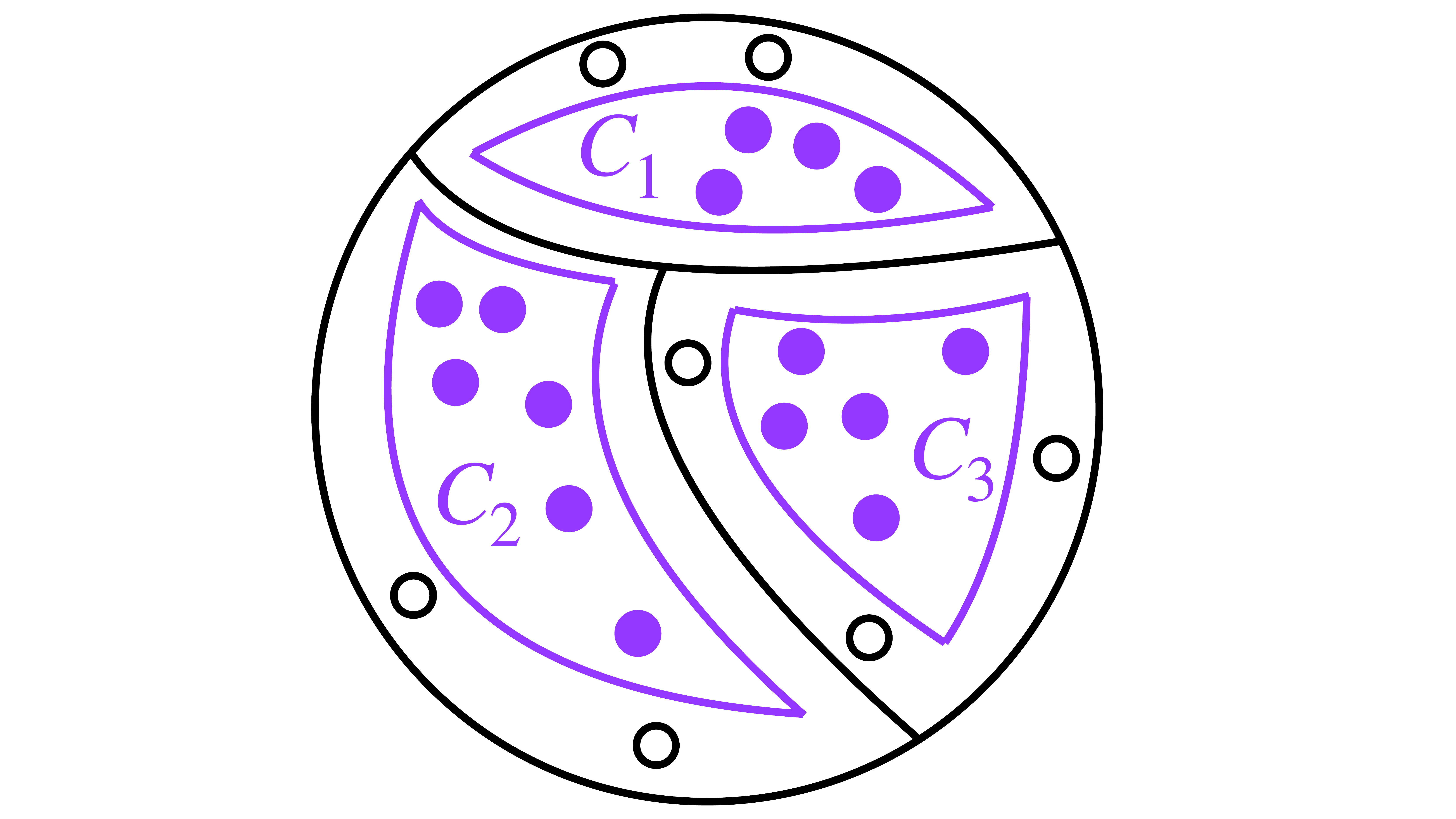}
    \caption{Lem. \ref{lem:hopconstrained-decomp} decomposition}
\end{subfigure}
    \hfill
    \begin{subfigure}[b]{0.24\textwidth}
        \centering
        \includegraphics[width=\textwidth,trim=130mm 0mm 130mm 0mm, clip]{./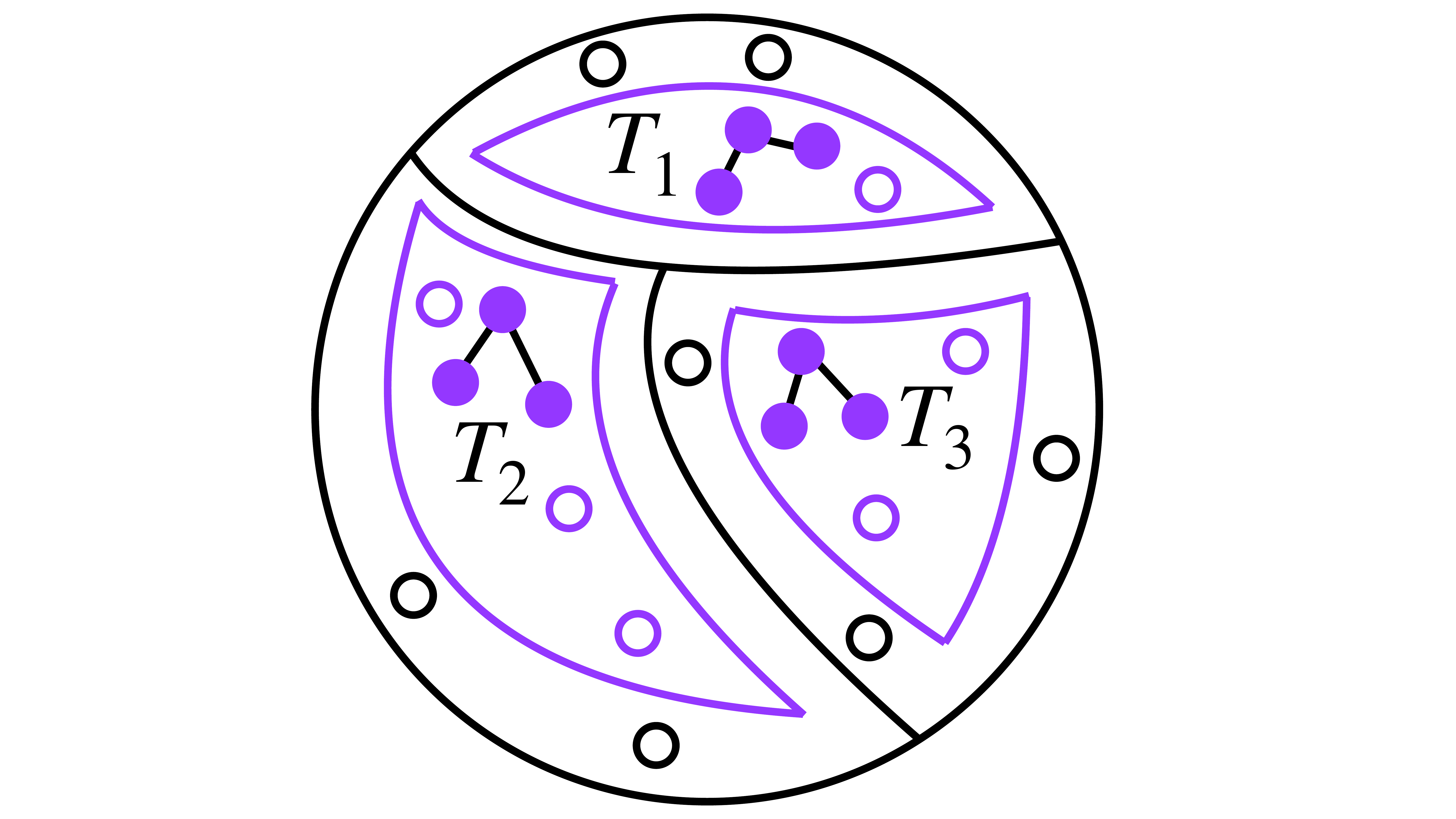}
        \caption{Recursing}
    \end{subfigure}
    \hfill
    \begin{subfigure}[b]{0.24\textwidth}
    \centering
        \includegraphics[width=\textwidth,trim=130mm 0mm 130mm 0mm, clip]{./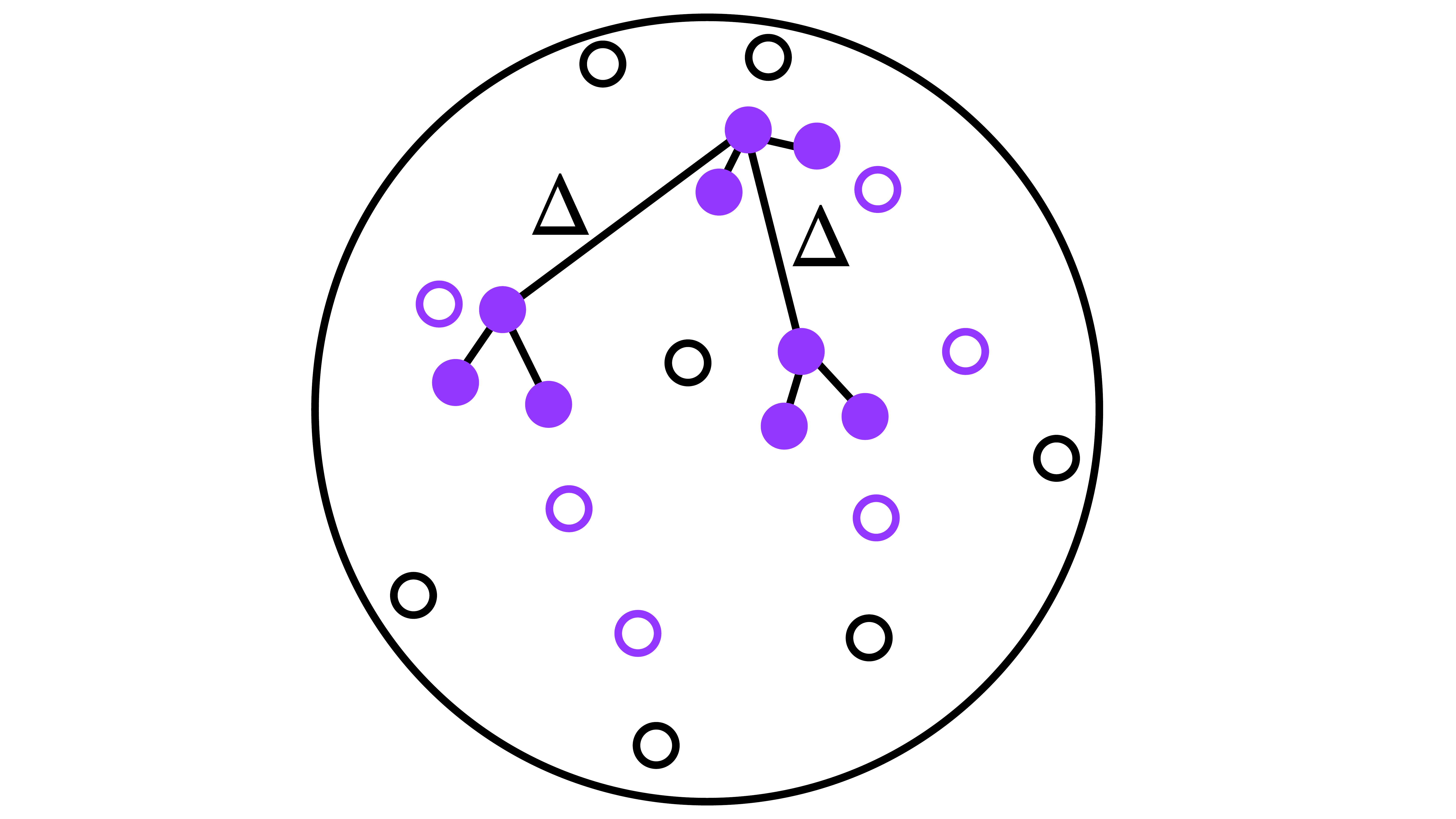}
    \caption{Merging recursions}
\end{subfigure}
    \caption{An illustration of the top-level recursive call of the embedding of \Cref{thm:mainmetric} on graph $G$ (edges omitted from illustration). Vertices in the partial vertex partition of \Cref{lem:hopconstrained-decomp} given in purple. Vertices removed from the process given as empty circles and all other vertices given as filled-in circles.}
    \label{fig:mainEmbed}
\end{figure}

\begin{proof}[Proof of \Cref{thm:mainmetric}]
 
  We describe a recursive and randomized procedure that induces a distribution over well-separated rooted trees where each tree can be interpreted as a partial tree metric with the required properties. Given hop constraint $h'$, weight diameter $\Delta$ and vertex set $V' \subseteq V$ where $d_G\at{h'}(u,v) \leq \Delta$, our procedure returns a rooted tree $(V(T), E(T), w_T)$ satisfying $V(T) \subseteq V'$. Let $\pad$ be the padding parameter of $G$; we will give our proofs in terms of $\pad$ and then conclude by applying \Cref{lemma:abraham-padded}. We fix $h' := h \cdot \kappa$ where we define $\kappa := O(\eps^{-1} \pad \log n)$ throughout the procedure. We emphasize that $h'$ will also be the same for all of our recursive calls. The construction procedure is initially invoked with the parameters $V' := V$ and weight scale $\Delta$ equal to the smallest power of $2$ which is at least the aspect ratio $L \le \poly(n)$. That is, $\Delta \in [L, 2L) \ge \max_{u, v} \dist_G\at{h'}(u, v)$.

  \textbf{Construction procedure:} We use the decomposition of \Cref{lem:hopconstrained-decomp} with hop constraint $h'$, weight diameter $\Delta / 2$, and exclusion probability $\gamma :=\eps/O(\log n)$ (for a sufficiently large hidden constant) to obtain a partial vertex partition $C_1 \sqcup C_2 \sqcup \ldots \sqcup C_k \subseteq V'$ where, plugging in our choice of parameters and the guarantees of \Cref{lem:hopconstrained-decomp}, we have:
  \begin{enumerate}
    \item $\max_{u, v \in C_i} \dist_G\at{h'}(u, v) \le \Delta / 2$;
    \item $\dist_G\at{h}(C_i, C_j) = \dist_G\at{h'/\kappa}(C_i, C_j) \ge \Delta/(2\kappa)$ for each $i, j \in [k]$ where $j \neq i$;
    \item $\Pr[v \not \in \bigcup_{i=1}^k C_i] \le \frac{\eps}{O(\log n)}$ for all $v \in V$.
  \end{enumerate}
  We recursively construct $k$ rooted trees $T_1 = (V_1, E_1, w_1), \ldots, T_k = (V_k, E_k, w_k)$ by calling the same procedure with our distance scale set to $\Delta' \gets \Delta / 2$ on sets $C_1, \ldots, C_k$. We construct the tree $T = (V(T), E(T), w_T)$ returned by the procedure by connecting the roots of $T_2, \ldots, T_k$ to the root of $T_1$ via a tree edge of weight $\Delta$.
  The procedure is stopped when the set of nodes $V'$ is a singleton, at which point the trivial one-node tree is returned.

  \textbf{Exclusion probability analysis:} Consider a recursive call  with $v \in V'$ and suppose that the partial vertex partition in the call is $C_1 \sqcup \ldots \sqcup C_k$. By the properties of the partition, $\Pr[v \not \in \bigcup_{i=1}^k C_i] \le \eps/O(\log n)$ (for a sufficiently large constant). First, we note that $v \not \in V(T)$ if and only if there is a recursive call where $v \in V' \setminus (\bigcup_i C_i)$, which happens with probability $\eps / O(\log n)$. Since $v$ is in a unique recursive call on each level and there are $O(\log n)$ levels, we conclude via a union bound that this happens in at least one level with probability at most $\eps$, proving that the exclusion probability of each node $v \in V$ is a most $\eps$.

  \textbf{Worst-case distance stretch and hop stretch analysis:} In the final tree $T$, for two nodes $u, v \in V(T)$ let $e_{u, v} := \arg\max \{ w_T(e) \mid e \in T_{u, v} \}$ be the heaviest weight tree edge on the unique tree path between $u$ and $v$. The weights $w_T$ are strictly decreasing powers of 2 on any root-leaf path. Therefore, $w_T(e_{u, v}) \le d_T(u, v) \le O(w_T(e_{u, v}))$. Edge $e_{u,v}$ was created via a recursive call with the parameters $V'$ and $\Delta$ where $V' \subseteq V$, $u, v \in V'$ and $d\at{h'}(u', v') \le \Delta = w_T(e_{u,v})$ for all $u', v' \in V'$. Let $C_1 \sqcup \ldots \sqcup C_k$ be the partial vertex partition created by this recursive call where each $C_i$ has weight diameter $\Delta / 2$ and $d\at{h}(C_i, C_j) \ge \Delta / (2\kappa)$ when $i \neq j$. Since $u, v \in V(T)$ we have that $u \in C_i$ and $v \in C_j$ for $i \neq j$ (since otherwise $e_{u, v}$ would not be created by this recursive call), hence $d_G\at{h}(u, v) \ge \frac{\Delta}{2\kappa} = \frac{w_T(e_{u,v})}{2\kappa} = \Theta(\frac{d_T(u, v)}{\kappa}$). Consequently, $d_T(u, v) \le O(\kappa \cdot  \dist_G\at{h}(u, v))$. Furthermore, since $u, v \in V'$ we have that $d\at{h'}(u, v) \le \Delta \le d_T(u, v)$, which can be rewritten as $d\at{\beta h}(u, v) \le d_T(u, v)$ for $\beta := O(\kappa)$. Combining the two bounds on $d_T$ we have that both the worst-case distance stretch $\alphawc$ and hop stretch $\beta$ are $O(\kappa) = O(\eps^{-1} \log n \cdot \pad)$ which gives the desired bound when we plug in the $\pad = O(\log n)$ padded decomposition of \Cref{lemma:abraham-padded}.

  \textbf{Expected distance stretch analysis:} Let $\Delta_l$ be the weight diameter of recursive calls at level $l \in [O(\log n)]$. In particular, $\Delta_1 \in (L, 2L]$ and $\Delta_{l+1} = \Delta_l / 2$. Fix $u, v \in V$, let $P$ be a path in $G$ between $u$ and $v$ with at most $h$ hops and weight $\delta := \dist_G\at{h}(u, v)$ and let $e_{u,v}$ be defined---as in the worst-case distance stretch analysis---as the heaviest weight tree edge between $u$ and $v$. As in the worst-case stretch analysis, it suffices to bound $w_T(e_{u,v})$. We now partition the $O(\log n)$ levels into three phases $H_1 \sqcup H_2 \sqcup H_3$ where $l \in H_1$ iff $\Delta_l > \delta \cdot (2\kappa)$, $l \in H_3$ iff $\Delta_l \le \delta \cdot (2 \pad)$ and $l \in H_2$ in the remaining case where $\Delta_l \in (\delta \cdot 2\pad, \delta \cdot 2\kappa]$. We proceed to bound the probability that $e_{u,v}$ is created by a recursive call in $H_1$, $H_2$ and $H_3$ which, in turn, gives a bound on the expected distance between $u$ and $v$.
 
 We begin with calls at levels in $H_1$. In particular, we argue that a call at level $l \in H_1$ cannot create the edge $e_{u, v}$ (i.e., it cannot be that $\Delta_l = w_T(e_{u, v})$). This follows from the worst-case distance stretch analysis, which stipulates that $d_G\at{h}(u, v) \ge \Delta_l / (2\kappa)$. However, this would yield $d_G\at{h}(u, v) > \delta$, which is a contradiction. Therefore, the contribution of edges corresponding to levels in $H_1$ to $w_T(e_{u,v})$ is $0$:
  \begin{align*}
    \sum_{l \in H_1} \Pr[e_{u,v} \text{ created by level $l$ call}] \cdot \Delta_l \cdot \1{u, v \in V(T)} = 0
  \end{align*}
  
  Next, suppose that $l \in H_2$ and suppose $e_{u, v}$ was created via a level $l$ call with the vertex set $V'$ and partial vertex partition $C_1 \sqcup \ldots \sqcup C_k \subseteq V'$. If this is the case, the path $P$ between $u$ and $v$ is broken in $C_1 \sqcup \ldots \sqcup C_k$, which by \Cref{lem:hopconstrained-decomp} happens with probability at most
  \begin{align*}
    \pad\left(\frac{\hop(p)}{h'} + \frac{w_G(p)}{\Delta_l/2}\right) \le \pad \left(\frac{h}{h'} + \frac{\delta}{\Delta_l / 2}\right) = \frac{\pad}{\kappa} + \frac{\pad \delta}{\Delta_l/ 2}.
  \end{align*}
  Moreover, note that $|H_2| = O(\log ( \kappa / \pad )) = O(\log ( \eps^{-1} \log n ) )$ since $\Delta_{l+1} = \Delta_l / 2$. Therefore:
  \begin{align*}
    \sum_{l \in H_2} \Pr[e_{u,v} \text{ created by level $l$ call}] \cdot \Delta_l \cdot \1{u, v \in V(T)} & \le \sum_{l \in H_2} \left( \frac{\pad}{\kappa} + \frac{\pad\delta}{\Delta_l/2} \right) \cdot \Delta_l \cdot 1 \\
                                                                                                           & \le \sum_{l \in H_2} \left( \Delta_l \cdot \frac{\pad}{\kappa} + 2\pad\delta \right) \\
                                                                                                           & \le \frac{\pad}{\kappa} \cdot (\delta \cdot 2\kappa) + 2\pad\delta |H_2| ) \\
                                                                                                           & \le \delta \cdot O(\pad \log(\eps^{-1} \log n)).
  \end{align*}
  
  Lastly, for $H_3$ notice that we can coarsely upper bound $\sum_{l \in H_3} \Pr[e_{u,v} \text{ created by level $l$ call}] \cdot \Delta_l \cdot \1{u, v \in V(T)}$ as $\sum_{l \in H_3} \Delta_l \leq \delta \cdot 4 \pad$ by our choice of $H_3$ and the fact that our weight diameters are geometrically decreasing.
  
  Combining our upper bounds on the probability that $e_{u,v}$ is created in each level gives an upper bound on the expectation of $w_T(e_{u, v})$, which in turn bounds the expected value of $d_T(u, v)$ since $\dist_T(u, v) = O(w_T(e_{u, v}))$. In the following we let $(\ldots)$ stand for $\Pr[e_{u,v} \text{ created by level $l$ call}] \cdot \Delta_l \cdot \1{u, v \in V(T)}$.
  \begin{align*}
    \E[w_T(e_{u, v}) \cdot \1{u, v \in V(T)}] & \le \sum_{l=1}^{O(\log n)} \Pr[e_{u,v} \text{ created by level $l$ call}] \cdot \Delta_l \cdot \1{u, v \in V(T)} \\
                                        & \le \sum_{l \in H_1} (\ldots) + \sum_{l \in H_2} (\ldots) + \sum_{l \in H_3} (\ldots) \\
                                        & \le 0 + \delta \cdot O(\pad \log(\eps^{-1} \log n)) + \delta \cdot (4\pad) \\
                                        & = \delta \cdot O(\pad \log(\eps^{-1} \log n)) 
  \end{align*}
  Plugging in the padded decompositions of \Cref{lemma:abraham-padded}, we conclude that the expected distance stretch is $O(\pad \log(\eps^{-1} \log n)) = O(\log n \log(\eps^{-1} \log n))$, as required.
\end{proof}

\section{$h$-Hop Partial Tree Embeddings}\label{sec:hop-bounded-hsts}
In the preceding section we demonstrated that hop-constrained distances can be well-approximated by distributions over partial tree metrics. In this section we describe how this result gives embeddings which can be used for hop-constrained network design problems. In particular, in \Cref{sec:projection-injection} we will define $h$-hop partial tree embeddings which are partial tree metrics along with a mapping of each edge in the tree metric to a path in $G$. As an (almost) immediate corollary of our results in the previous section, we have that one can produce such an embedding where $h$-hop distances are approximately preserved by $T$ and each path to which we map an edge has a low number of hops and less weight than the corresponding edge in $T$.

However, ultimately we are interested in using these embeddings to instantiate the usual tree embedding template and the above properties alone are not sufficient to do so. In particular, recall that in the usual tree embedding template for network design we embed our input graph into a tree, solve our problem on the tree and then project our solution back onto the input graph. If the problem which we solve on the tree has a much greater cost than the optimal solution on our input graph then our solution has no hope of being competitive with the optimal solution. Thus, we require some way of projecting the optimal solution of $G$ onto our embeddings in a way that produces low-cost, feasible solutions for our tree problems.

When tree embeddings are not partial---as in FRT---such a projection is trivial. However, the partial nature of our embeddings along with the fact that we must preserve ``$h$-hop connectivity'' makes arguing that such a low cost solution exists significantly more challenging than in the FRT case. Somewhat surprisingly, we show that a natural projection of the optimal solution onto $T$ produces an appropriate subgraph of $T$, despite the fact that an FRT-like charging argument seems incapable of proving such a result. Our proofs will be based on what may be viewed as a hop-constrained version of Euler tours which we call $h$-hop connectors. We give further intuition and details in \Cref{sec:cost-preservence}. Thus, while \Cref{sec:projection-injection} is a straightforward extension of our results from the previous section, the primary technical contribution of this section is the projection result of \Cref{sec:cost-preservence} which shows that, indeed, these embeddings may be used for tree-embedding algorithms in the usual way.

\subsection{Defining $h$-Hop-Partial Tree Embeddings}\label{sec:projection-injection}

%
%
%
%


We begin by defining our partial tree embeddings and proceed to argue that we can map from the trees in these embeddings to our graphs in a weight and connectivity-preserving fashion.

\begin{definition}[Partial Tree Embedding]\label{def:PartialTreeEmbedding}
A partial tree embedding on weighted graph $G = (V(G), E(G), w_G)$ consists of a rooted and weighted tree $T = (V(T), E(T), w_T)$ with $V(T) \subseteq V(G)$ and a path $T_e^{G} \subseteq G$ for every $e \in E(T)$ between $e$'s endpoints satisfying $w_G(T^G_e) \le w_T(e)$.
\end{definition}

We extend the notation from \Cref{def:PartialTreeEmbedding} to nodes in $T$ which are not adjacent: for any two vertices $u, v \in V(T)$, if $e_i$ is the $i$th edge in $T_{uv}$ (ordered, say, from $u$ to $v$) then $T_{uv}^G := T_{e_1}^G \oplus T_{e_2}^G \oplus \ldots$ where $\oplus$ is concatenation.


We now define hop and distance stretch of partial tree embeddings analogously to how we defined these concepts for partial metrics.

\begin{definition}[$h$-Hop Partial Tree Embedding]\label{def:HopAndDistanceStretch}
A partial tree embedding $(T, \{T_e^G\}_{e \in E(T)})$ is an $h$-hop partial tree embedding with \textbf{distance stretch} $\alpha \geq 1$ and \textbf{hop stretch} $\beta \ge 1$ for graph $G = (V(G), E(G), w_G)$ if
\begin{enumerate}
    \item $d_G\at{\beta h} \leq d_T(u, v) \le \alpha \cdot d^{(h)}(u,v)$ for all $u, v \in V(T) \subseteq V(G)$;
    \item $\hop(T^G_{uv}) \le \beta h$ for all $u, v \in V(T) \subseteq V(G)$.
\end{enumerate}
\end{definition}
Notice that the above definitions show that one can map subgraphs of a partial tree embedding $(T, \{T_e^G\}_{e \in E(T)})$ for $G$ to subgraphs of $G$ in a cost and connectivity preserving way. In particular, given a $T' \subseteq T$ we have that $H := \bigcup_{e \in E(T')}T_e^G$ satisfies (1) $w_G(H) \leq w_T(T')$ and (2) if $u$ and $v$ are connecting in $T'$ then $\hop_H(u,h) \leq \beta h$. In the next section we give a much more involved and interesting proof showing that one can also project from subgraphs of $G$ to $T$ in a cost and connectivity preserving way.

The next observation confirms that, up to an $O(\log n)$, hop stretch and distance stretch for $h$-hop partial tree embeddings and partial metrics are equivalent, provided the relevant trees are well-separated.

\begin{lemma}\label{lem:equivStretch}
    Let $G$ be a weighted graph and let $h \geq 1$ be a hop constraint. 
    \begin{itemize}
        \item If $(T, \{T_e^G\}_{e \in E(T)})$ is a partial tree embedding with distance stretch $\alpha$ and hop stretch $\beta$ then $T$ is a partial tree metric which approximates $d_G\at{h}$ with distance stretch $\alpha$ and hop stretch $\beta$.
        \item Conversely, if $T$ is a partial tree metric with hop diameter $D_T := \hop(T)$ which approximates $d_G\at{h}$ with distance stretch $\alpha$ and hop stretch $\beta$ then there is a collection of paths $\{T_e^G\}_{e \in E(T)}$ where $(T, \{T_e^G\}_{e \in E(T)})$ is a partial tree embedding with distance stretch $\alpha$ and hop stretch $D_T \cdot \beta$.
    \end{itemize}
\end{lemma}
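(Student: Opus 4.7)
The plan is to handle the two directions separately, showing that the definitions of partial tree metric approximation and partial tree embedding are essentially the same up to a hop-blowup factor of $D_T$ in the metric-to-embedding direction; that factor arises because turning tree paths into walks in $G$ requires concatenating the paths associated with each constituent tree edge.

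For the forward direction, I would simply observe that the tree $T$ in a partial tree embedding already defines a tree metric on $V(T) \subseteq V(G)$. The two conditions in \Cref{def:HopAndDistanceStretch} include exactly the chain $d_G^{(\beta h)}(u,v) \le d_T(u,v) \le \alpha \cdot d_G^{(h)}(u,v)$ for all $u,v \in V(T)$, which is precisely the definition of a partial tree metric approximating $d_G^{(h)}$ with distance stretch $\alpha$ and hop stretch $\beta$. So this direction follows just by unpacking the definitions, and the second condition on $\hop(T_{uv}^G)$ is not even needed.

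For the converse direction, the task is to construct the paths $\{T_e^G\}_{e \in E(T)}$ from the raw tree metric. The key observation is that for any tree edge $e=\{u,v\} \in E(T)$, its weight satisfies $w_T(e) = d_T(u,v) \ge d_G^{(\beta h)}(u,v)$ by the hop-stretch lower bound assumed for the partial tree metric. Hence there exists a path $P_e$ in $G$ from $u$ to $v$ with $\hop(P_e) \le \beta h$ and $w_G(P_e) = d_G^{(\beta h)}(u,v) \le w_T(e)$; I would set $T_e^G := P_e$.

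Verifying the two conditions of \Cref{def:PartialTreeEmbedding} and \Cref{def:HopAndDistanceStretch} for the resulting object is then mechanical. The weight requirement $w_G(T_e^G) \le w_T(e)$ holds per edge by construction. For any $u,v \in V(T)$ the tree path $T_{uv}$ uses at most $D_T = \hop(T)$ tree edges, so $T_{uv}^G$ is a concatenation of at most $D_T$ walks each of at most $\beta h$ hops, giving $\hop(T_{uv}^G) \le D_T \cdot \beta \cdot h$ and hence the claimed hop stretch of $D_T \cdot \beta$. The distance-stretch condition is inherited from the partial tree metric: the upper bound $d_T(u,v) \le \alpha \cdot d_G^{(h)}(u,v)$ is assumed directly, and the lower bound follows from $d_G^{(D_T \beta h)}(u,v) \le d_G^{(\beta h)}(u,v) \le d_T(u,v)$, since increasing the hop budget can only decrease hop-constrained distances. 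There is no serious obstacle in this argument; the only genuine subtlety is the unavoidable blow-up by $D_T$ in the hop stretch, since a partial tree metric carries no information about how individual tree-edge weights are realized by low-hop walks in $G$ beyond what the stretch assumption gives edge by edge.
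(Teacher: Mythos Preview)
Your proof is correct and follows essentially the same approach as the paper: the forward direction is immediate from the definitions, and for the converse you construct $T_e^G$ as a $\beta h$-hop shortest path realizing $d_G^{(\beta h)}(u,v) \le w_T(e)$, then concatenate at most $D_T$ such paths to bound $\hop(T_{uv}^G)$. Your verification of the lower bound $d_G^{(D_T\beta h)}(u,v) \le d_T(u,v)$ via monotonicity of hop-constrained distances is a small detail the paper leaves implicit.
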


%
\begin{proof}
Let $(T, \{T_e^G\}_{e \in E(T)})$ be a partial tree embedding. Then we immediately have that $T$ is a partial tree embedding with distance stretch $\alpha$ and hop stretch $\beta$ by definition of a partial tree embedding and partial tree metric.


On the other hand, let $T$ be a partial tree metric which approximates the $h$-hop constrained distances $d_G^{(h)}$ of $G$ on $V(T)$ with distance stretch $\alpha$ and hop stretch $\beta$. By definition, for every edge $e \in E(T)$ with $e =\{u,v\}$ we have $d_G^{(\beta h)}(u,v) \leq w_T(e)$. In particular, there exists a path between the endpoints of $e$ with at most $\beta h$ hops and length at most $w_T(e)$ in $G$. Defining $T_e^G$ to be this path for every edge $e \in E(T)$ completes $T$ into a partial tree embedding. The distance stretch of this partial tree embedding is trivial by definition. Similarly, for $u$ and $v$ not adjacent in $T$ we have, by definition of $T_{uv}^G$ that $T_{uv}^G$ consists of at most $\beta D_T \cdot h$ hops as required.
\end{proof}



Analogously to our results for partial metrics, we will also talk about the exclusion probability of distributions over partial tree, the distances they induce and how well they approximate hop-constrained distances; in particular, the following definitions are analogous to \Cref{lem:partialMetricDist} and \Cref{def:partialmetricapproximation} respectively.
For the sake of presentation, here and later in the paper we let $(T, \cdot)$ be shorthand for $(T, \{T_e^G\}_{e \in E(T)})$.

\begin{definition}[Distances of Partial Tree Embedding Distributions]
    Let $\mathcal{D}$ be a distribution of partial tree embeddings on weighted graph $G = (V, E, w)$. We say $\mathcal{D}$ has \textbf{exclusion probability} $\eps$ if for all $v \in V$ we have $\Pr_{(T, \cdot) \sim \mathcal{D}}[v \in V(T)] \geq 1 - \eps$. If $\epsilon \leq \frac{1}{3}$ then we say that $\mathcal{D}$ induces the distance function $d_{\mathcal{D}}$ on $V$, defined as
    $$d_{\mathcal{D}}(u,v) :=  \E_{(T, \cdot) \sim \mathcal{D}}\left[d_T(u,v) \cdot \1{ u,v \in V(T)} \right ].$$
\end{definition}

\begin{definition}[Stretch of Partial Tree Embedding Distribution]
    A distribution $\mathcal{D}$ of $h$-hop partial tree embeddings on $V$ with exclusion probability at most $\frac{1}{3}$ approximates $d^{(h)}$ on weighted graph $G=(V,E,w)$ for hop constraint $h \ge 1$ with \textbf{worst-case distance stretch} $\alpha_{WC} \ge 1$ and \textbf{hop stretch} $\beta \ge 1$ if each $(T, \cdot)$ in the support of $\mathcal{D}$ approximates $d^{(h)}_G$ on $V(T)$ with distance stretch $\alpha_{WC}$ and hop stretch $\beta$, i.e. for each $(T, \cdot)$ in the support of $\mcD$ and all $u,v \in V(T)$ we have \[d_G^{(\beta h)}(u,v) \leq d_T(u,v) \leq \alpha \cdot d_G^{(h)}(u,v).\]
    Furthermore, $\mcD$ has \textbf{expected distance stretch} $\alpha_{\E}$ if for all $u,v \in V$ we have \[d_{\mathcal{D}}(u,v) \leq \alpha_{\E} \cdot d_G^{(h)}(u,v).\]
\end{definition}


Concluding, we have that there exists an efficiently-computable distribution over partial tree embeddings with poly-logarithmic stretches.

\begin{theorem}\label{thm:mainEmbed}
  Given weighted graph $G = (V, E, w)$, $0 < \epsilon < \frac{1}{3}$ and root $r \in V$, there is a poly-time algorithm which samples from a distribution over $h$-hop partial tree embeddings whose trees are well-separated and rooted at $r$ with exclusion probability $\eps$, expected distance stretch $\alpha_{\E} = O(\log n \cdot \log \frac{\log n}{\epsilon})$, worst-case distance stretch $\alpha_{WC} = O(\frac{\log ^ 2 n}{ \epsilon})$ and hop stretch $\beta = O(\frac{\log ^ 3 n}{\epsilon})$.
\end{theorem}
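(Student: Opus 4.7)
The plan is to obtain the distribution over $h$-hop partial tree embeddings by converting the distribution over partial tree metrics from \Cref{thm:mainmetric} into a distribution over partial tree embeddings via \Cref{lem:equivStretch}, and then to bound the hop diameter of the trees produced by the construction underlying \Cref{thm:mainmetric}.

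First, I would invoke \Cref{thm:mainmetric} with the given $\epsilon$ to produce (in polynomial time) a distribution $\mathcal{D}'$ over well-separated partial tree metrics approximating $d_G\at{h}$ with exclusion probability $\epsilon$, expected distance stretch $\alpha_{\E} = O(\log n \cdot \log(\log n / \epsilon))$, worst-case distance stretch $\alpha_{WC} = O(\log^2 n / \epsilon)$, and hop stretch $\beta' = O(\log^2 n / \epsilon)$. Then for each $T$ in the support of $\mathcal{D}'$ and each $e = \{u,v\} \in E(T)$, I would choose $T_e^G$ to be any path in $G$ between $u$ and $v$ with at most $\beta' h$ hops and weight at most $w_T(e)$; such a path is guaranteed to exist by the partial-metric approximation inequality $d_G^{(\beta' h)}(u,v) \leq w_T(e)$. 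By the second bullet of \Cref{lem:equivStretch}, the resulting $(T,\{T_e^G\}_e)$ is an $h$-hop partial tree embedding with distance stretch unchanged at $\alpha_{WC}$ and hop stretch $D_T \cdot \beta'$, where $D_T := \hop(T)$ is the hop diameter of $T$ as a tree.

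The key observation is that $D_T = O(\log n)$: the recursive construction underlying \Cref{thm:mainmetric} begins with weight diameter $\Delta \in [L, 2L)$ for $L = \poly(n)$, halves $\Delta$ at each level of the recursion, and terminates when the current vertex set is a singleton. Hence the recursion depth—and consequently the height of the produced well-separated tree, and therefore $\hop(T) = D_T$—is $O(\log L) = O(\log n)$. Plugging this bound into $D_T \cdot \beta'$ yields the claimed hop stretch of $O(\log^3 n / \epsilon)$. The expected and worst-case distance stretches are inherited unchanged from $\mathcal{D}'$, and the trees are well-separated by construction.

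Finally, to ensure that every tree is rooted at the designated vertex $r$, I would make a small modification to the construction: at every recursive call whose vertex set $V'$ contains $r$, never exclude $r$ during the paddedness step, and fix a tie-breaking rule that places $r$'s cluster first among $C_1, \ldots, C_k$. Inductively, the recursion bottoms out at the singleton $\{r\}$, whose trivial tree is rooted at $r$; each subsequent merge glues the other subtree roots to $r$, so the returned tree is rooted at $r$. This adjustment only affects the exclusion probability of $r$ itself (which is irrelevant, since $r$ is a prescribed vertex rather than one whose inclusion we need a probabilistic bound on) and preserves well-separatedness on all root-to-leaf paths. The main obstacle is very mild—essentially just carefully assembling the two ingredients and verifying the $O(\log n)$ hop diameter bound—since the heavy lifting was done in \Cref{thm:mainmetric}.
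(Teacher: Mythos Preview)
Your proposal is correct and follows essentially the same route as the paper: invoke \Cref{thm:mainmetric}, use that the well-separated trees have hop diameter $O(\log n)$, and apply \Cref{lem:equivStretch} to convert the partial tree metrics into partial tree embeddings with hop stretch $O(\log^3 n/\eps)$. The only minor difference is in ensuring $r$ is the root---the paper resamples until $r \in V(T)$ (and then designates $r$'s cluster as $T_1$ at each recursion level), whereas you force $r$ never to be excluded; both are valid, though your assertion that forcing $r$ ``only affects the exclusion probability of $r$'' slightly underplays the check that the distance-stretch bounds for pairs involving $r$ are unaffected (they are, since $r$ still lies in its original part $C_i'$ of the complete padded partition, so both the diameter and paddedness properties of \Cref{lem:hopconstrained-decomp} continue to hold for it).
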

\begin{proof}
  We begin by remarking that \Cref{thm:mainmetric} can be adapted so that all trees are rooted at $r$ in the following way. First, we can assume that $r \in V(T)$ by resampling trees until $r$ is in $V(T)$. By a union bound, this increases the exclusion probability by a factor of at most 2, leaves the hop stretch and worst-case distance stretch unchanged, and increases the expected distance stretch by a factor of at most $\frac{1}{1-\eps} = O(1)$; these modifications to our sampling process leave the statement of our theorem unchanged. 
  
  Now, suppose that a sampled tree has $r \in V(T)$; we will observe that $r$ can be assumed to be the root of $T$. In particular, recall that in the construction of $T$ in \Cref{thm:mainmetric} we recursively constructs trees $T_1, \ldots, T_k$ on the parts of a partial vertex partition and then outputs a tree by connecting the root of $T_2, \dots, T_k$ to the root of $T_1$. We note that $T_1$ is chosen arbitrarily, and so we can choose $T_1$ to be the tree containing $r$. Since we may assume inductively that $r$ is the root of $T_1$, the tree we return has $r$ as its root. Choosing a root in this way does not change the guarantees of our partial tree metrics.
    
  Our result then follows immediately from the fact that well-separated trees have hop diameter $O(\log n)$, \Cref{lem:equivStretch}, \Cref{thm:mainmetric} and the observation that the construction procedures of \Cref{thm:mainmetric} and \Cref{lem:equivStretch} are poly-time.
\end{proof}

\subsection{Projecting From The Graph to $h$-Hop Partial Tree Embeddings}
\label{sec:cost-preservence}
In this section we show how to project the optimal solution for a hop-constrained problem onto a partial tree embedding to get a low-cost subgraph which will be feasible for the optimization problems on trees which we later solve. In particular, we show that it is possible to project \emph{any} subgraph $H \subseteq G$ onto an $h$-hop partial tree embedding $(T, \cdot)$ with worst-case distance stretch $\alpha$ in a way that $\alpha$-approximately preserves the cost of $H$ and preserves ``$h$-hop connectivity'': that is, the projection of $H$ will have cost at most $O(\alpha \cdot w_G(H))$ and if $u$ and $v$ are within $h$ hops in $H$ then they will be connected by the projection of $H$ onto our embedding.

In the (non-partial) tree embedding setting where we typically only care about the connectivity structure of nodes---as in FRT---such a projections is trivial. In particular, if $T$ is a tree drawn from the FRT distribution then an edge $e \in E(G)$ can be projected onto the simple tree path $T_{uv} \subseteq T$ between $u$ and $v$ in $T$ and the resulting path will have expected weight $O (\log n \cdot w_G(e))$. Thus, we can project a subgraph $H \subseteq G$ to $T(H) := \bigcup_{\{u,v\} \in E(H)} T_{uv}$. If $u$ and $v$ are connected in $H$ then they are connected in $T(H)$ and so the connectivity of nodes is preserved. Moreover, we can upper bound the weight of $T(H)$ by summing up $w_T(T_{uv})$ over all $\{u,v\} \in E(H)$ to get that, in expectation, $w_T(T(H)) \leq O(\log n \cdot w_G(H))$ and so the cost of the projection is appropriately low.

We might naturally try to use the same projection as is used in the FRT case but only for the nodes embedded by $T$. Specifically, suppose that $T$ is now the tree of a partial tree embedding with worst-case distance stretch $\alpha$. Then, we could project $H$ to $T(H) := \bigcup T_{uv}$ where  the $\bigcup$ is taken over all $u,v$ such that $\{u,v\} \in E(H)$ \emph{and} $ u, v \in V(T)$. Although we trivially have that $w_T(T(H)) \leq \alpha \cdot w_G(H)$ by summing up over edges in $E(H)$, such a projection has no hope of preserving $h$-hop-connectivity as required: if, for example, $u$ and $v$ are connected by exactly one path in $H$ with $h$ hops then if there is even a single node along this path which is not in $V(T)$ then $u$ and $v$ may not be connected in $T(H)$.

We could try to fix these connectivity issues by forcing all vertices in $T$ which are within $h$ hops in $H$ to be connected in $T$ as captured by the following definition.
\begin{definition}[$T(H, h)$]\label{dfn:proj}
    Let $(T, \cdot)$ be a partial tree embedding. Then $T(H, h) := \bigcup T_{uv}$ where the $\bigcup$ is taken over $u,v$ such that $u,v \in V(T)$ and $\hop_H(u,v) \leq h$.
\end{definition}
$T(H, h)$ trivially preserve $h$-hop connectivity as needed: if $u$ and $v$ are connected by an $h$-hop path in $H$ then they will be connected in $T(H, h)$. However, while $T(H, h)$ preserves $h$-hop connectivity, it seems to yield a subgraph of $T$ of potentially unboundededly-bad cost. For example, let $h = 3$ and suppose $H$ is a spider graph with $O(n)$ nodes in which one leg connects vertex $r$ to center $c$ with a cost $1$ edge and the remaining $i$th leg connects $c$ to $u_i$ to $v_i$ with a sufficiently small $\epsilon > 0$ cost edge. Further, suppose that $V(T)$ consists of $r$ and all $v_i$. This example is illustrated in \Cref{fig:spider}. $T(H, h)$ will buy $T_{rv_i}$ for every $i$ since there is an $h$-hop path from $r$ to $v_i$  (all such pairs illustrated in \Cref{subfig:congestionSpiderPairs}). Our worst-case distance guarantee ensures that $w_T(T_{rv_i})$ is at most $\alpha \cdot d_G(v_i, r) \approx \alpha$ and so we might hope to bound the cost of $T(H, h)$ as within $O(\alpha)$ times $w_G(H)$. However, if we try to apply the usual FRT-type proof and upper bound the cost of $T(H, h)$ in $T$ as $\sum d_T(r, v_i)$ then our sum comes out to $O(\alpha \cdot n)$. On the other hand, $w_G(H) \approx 1$ and so $w_T(T(H, h))$ is a factor of $O(n \cdot \alpha)$ larger than $w_G(H)$ while we would like it to only be an $O(\alpha)$ factor larger. Thus, whereas FRT can charge each path in the projection of $H$ to a unique edge of $H$, the partialness of our embedding means that we must charge paths in $T(H, h)$ to paths in $H$. These paths in $H$ may induce large congestion---as illustrated in \Cref{subfig:congestionSpider}---which causes us to ``overcharge'' edges of $H$.
\begin{figure}
        \begin{subfigure}[b]{0.32\textwidth}
        \centering
        \includegraphics[width=\textwidth]{./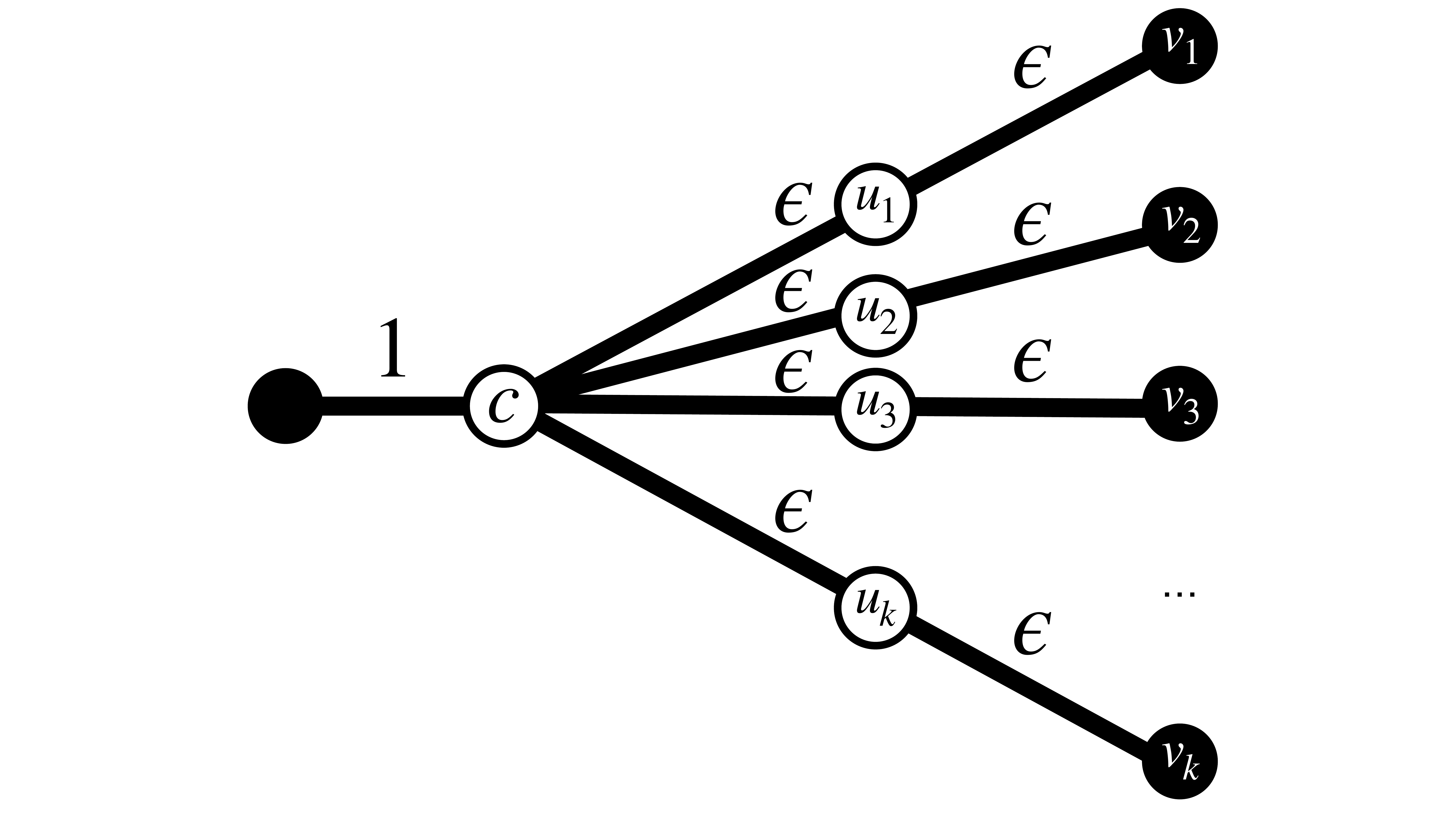}
        \caption{Graph $H$}\label{fig:spider}
    \end{subfigure}
    \hfill
    \begin{subfigure}[b]{0.32\textwidth}
        \centering
        \includegraphics[width=\textwidth]{./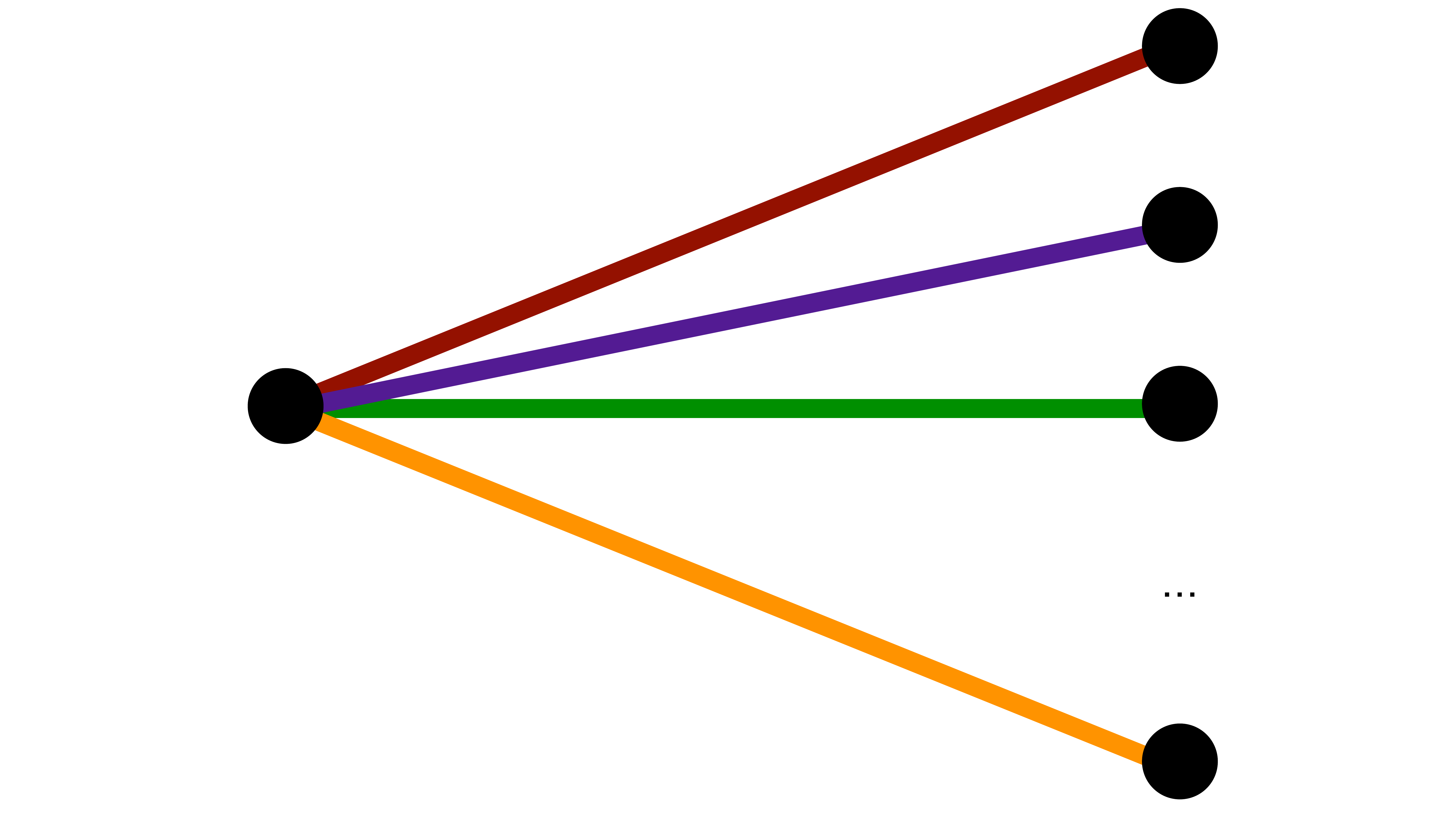} 
        \caption{Pairs in Charged Sum}\label{subfig:congestionSpiderPairs}
    \end{subfigure}
    \hfill
    \begin{subfigure}[b]{0.32\textwidth}
        \centering
        \includegraphics[width=\textwidth]{./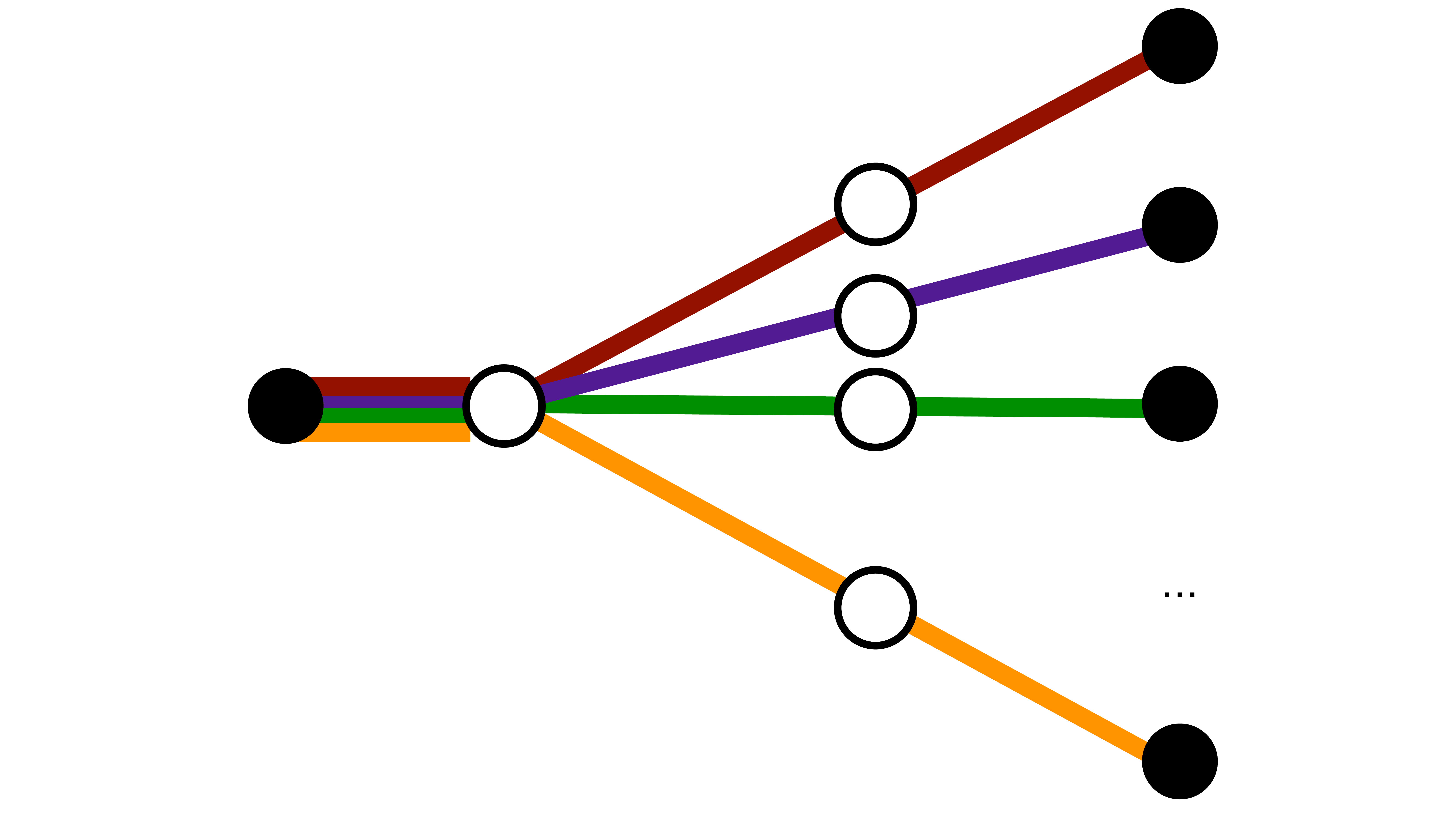} 
        \caption{Congestion of Charged Paths}\label{subfig:congestionSpider}
    \end{subfigure}
    \centering
    \caption{A counter-example to the naive charging argument for $T(H, h)$ where $\Theta(k) = \Theta(n)$. Edges labeled with their weights, vertices of $V(T)$ given as solid black circles and vertices of $V \setminus V(T)$ given as white-filled circles. Paths colored according to their corresponding pair.}
    
\end{figure}

Surprisingly, in what follows we show that, while the above naive charging argument cannot succeed, a more nuanced proof shows that the above $T(H, h)$ is, in fact, competitive with the optimal solution up to small constants in the hop and distance stretch.

\begin{restatable}{theorem}{projectionStruct}\label{thm:projSoln}
    Fix $h \geq 1$, let $H$ be a subgraph of weighted graph $G = (V, E, w_G)$ and let $(T, \cdot)$ be an $8h$-hop partial tree embedding of $G$ with worst-case distance stretch $\alpha$. Then $w_T(T(H,h)) \leq 4 \alpha \cdot w_G(H)$.
\end{restatable}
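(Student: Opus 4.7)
My plan is to argue component-by-component. Since $\hop_H(u,v) < \infty$ only when $u,v$ share a connected component of $H$, we can write $T(H,h) = \bigcup_C T_C$ where $C$ ranges over the connected components of $H$ and $T_C := \bigcup_{u,v \in V(T)\cap V(C),\ \hop_H(u,v) \le h} T_{uv}$; it therefore suffices to prove $w_T(T_C) \le 4\alpha\, w_G(C)$ for each $C$ and sum.

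For a fixed component $C$, the key object I would construct is an \emph{$h$-hop connector}: a walk $W = (w_1, \ldots, w_m)$ with each $w_i \in V(T) \cap V(C)$ satisfying (a) every $u \in V(T) \cap V(C)$ that participates in some pair with $\hop_H \le h$ is visited by $W$; (b) $\hop_H(w_l, w_{l+1}) \le 8h$ for every $l$; and (c) $\sum_l d_G\at{8h}(w_l, w_{l+1}) \le 4\, w_G(C)$. Given such a $W$, the bound follows quickly: in any tree, whenever both endpoints $u,v$ of a pair appear somewhere in $W$, the unique path $T_{uv}$ is contained in $\bigcup_l T_{w_l w_{l+1}}$, so by (a) the paths $T_{w_l w_{l+1}}$ cover $T_C$; and by (b) the worst-case distance stretch of the $8h$-hop partial tree embedding gives $d_T(w_l, w_{l+1}) \le \alpha\, d_G\at{8h}(w_l, w_{l+1})$. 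Chaining these together,
\begin{align*}
w_T(T_C) \;\le\; \sum_l w_T(T_{w_l w_{l+1}}) \;\le\; \sum_l d_T(w_l, w_{l+1}) \;\le\; \alpha \sum_l d_G\at{8h}(w_l, w_{l+1}) \;\le\; 4\alpha\, w_G(C),
\end{align*}
and summing over $C$ yields the theorem.

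The main technical work is constructing the connector. The natural starting point is the Euler tour $\tau$ of a spanning tree of $C$, which visits every vertex of $C$ with total $H$-weight at most $2 w_G(C)$. Restricting $\tau$ to its $V(T)$-visits, any two consecutive such visits separated by at most $8h$ tour-edges automatically satisfy (b) via the tour segment itself, and the combined tour contribution to (c) is at most $2 w_G(C)$. The delicate case is the \emph{long-gap} case, where two consecutive $V(T)$-visits $u,v$ in $\tau$ are more than $8h$ tour-edges apart: either $\hop_H(u,v) \le h$ via an off-tree shortcut in $H$, so that $(u,v)$ may still be relevant to $T_C$ and a short ``bridge'' must be inserted into $W$; or $\hop_H(u,v) > h$, in which case the walk should split around the gap while still covering any straddling pair $(u',v')$ with $\hop_H(u',v') \le h$.

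The main obstacle is controlling the total bridge weight so that (c) continues to hold. The plan is to introduce bridges via a hop-constrained Euler tour of $H$---the combinatorial content of the $h$-hop connector idea alluded to in the paper---that traverses each $H$-edge at most a constant number of times while respecting $h$-hop locality. A careful double-counting across tour segments and inserted bridges then yields $\sum_l d_G\at{8h}(w_l, w_{l+1}) \le 4\, w_G(C)$, finishing the proof.
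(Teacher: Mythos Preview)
Your high-level reduction matches the paper exactly: the paper also reduces \Cref{thm:projSoln} to constructing an $h$-hop connector---a collection of paths in $H$ between $V(T)$-vertices that preserves $h$-hop connectivity with bounded edge congestion and hop stretch (\Cref{lem:hHopGivesProj}). Your properties (a)--(c) are this definition specialized to a single walk, and your deduction of the cost bound from them via the $8h$-hop worst-case stretch is precisely the argument of \Cref{lem:hHopGivesProj}.

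The gap is the connector construction itself, which is the entire technical content of the theorem. Your proposal---Euler tour of an \emph{arbitrary} spanning tree of $C$, restricted to $V(T)$, with bridges or splits at long gaps---is not shown to give congestion $4$, and there are genuine obstacles. In the split case you must still cover every straddling pair $(u',v')$ with $\hop_H(u',v')\le h$, yet an arbitrary spanning tree gives no control over how many such pairs straddle a given gap, so patching them individually destroys any congestion bound; and in the bridge case nothing limits how many bridges reuse the same $H$-edge, so the ``careful double-counting'' you invoke has no charging target. Your final paragraph then appeals to a ``hop-constrained Euler tour of $H$ that traverses each $H$-edge at most a constant number of times while respecting $h$-hop locality,'' but this is just a restatement of what an $h$-hop connector is, not a construction of one. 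The paper's construction is genuinely different from your spanning-tree sketch: it first replaces $H$ by a specially built forest $F\subseteq H$ (\Cref{lem:WLOGTree})---not an arbitrary spanning tree, but one grown path-by-path so that every forest vertex stays within $h$ hops of $W$, which forces $h$-hop connectivity on $W$ to survive in $F$ up to a factor~$2$---and then cuts each tree of $F$ into diameter-$\le 8h$ subtrees via two interleaved families of level cuts at offsets $0$ and $2h$ (\Cref{lem:hHopConst}). Euler tours of these low-diameter subtrees give the connector; the two-offset trick guarantees every $2h$-hop pair in $F$ lands entirely in some subtree, and congestion $4$ falls out because each edge lies in at most two subtrees (one per offset), each contributing congestion~$2$ from its tour.
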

The basic idea of our proof will be to identify a collection of low congestion paths in $H$ to which we can charge $T(H, h)$.

\subsubsection{Warm-Up: Low Diameter Tree Case}
To illustrate this idea we begin by showing how to prove \Cref{thm:projSoln} in the simple case where $G$ is a tree with diameter at most $h$. In particular, on a tree of diameter at most $h$ we can mitigate the congestion of charged paths by buying an Euler tour restricted to our embedded nodes; conveniently $T(H, h)$ will also be a subgraph of the projection of such an Euler tour onto $T$.

More specifically, suppose $G$ is a tree with diameter at most $h$ and let $(T, \cdot)$ be a partial tree embedding of $G$. Let $G_2$ be the multigraph of $G$ where each edge is doubled. Let $t = (v_1, v_2, \ldots)$ be an Euler tour of $G_2$ and let $t' = (w_1, w_2, \ldots)$ be the vertices of $V(T)$ visited by this tour in the order in which they are visited. That is, $t'$ is gotten from $t$ be deleting from it all vertices not in $V(T)$ while leaving the ordering of the remaining vertices unchanged. Notice that vertices in $V(T)$ might occur multiple times in $t'$. We let $P_\ell$ be the path in $G$ between $w_\ell$ and $w_{\ell+1}$ and let $\mcP := \{P_\ell\}_\ell$. Next, consider $T(\mcP)$ which is the union of $T_{uv}$ for every $u,v$ where $u$ and $v$ form the endpoints of some path in $\mcP$.

First, notice that $T(H, h) \subseteq T(\mcP)$. This follows since every $u, v \in V(T)$ which are within $h$ hops (namely all $u,v \in V(T)$) are also visited by $t'$ and so if $T_{uv}$ is included in $T(H,h)$ then it will also be included in $T(\mcP)$. Next, notice that $w_T(\mcP) \leq 2  \alpha \cdot w_G(H)$ since our Euler tour when projected onto $G$ visited each edge at most twice. This proves \Cref{thm:projSoln} for the $h$-diameter tree case. 

\subsubsection{$h$-Hop Connectors}
The key observation of the above warm-up is that Euler tours allow us to mitigate the congestion induced in our charging arguments by providing a low-congestion collection of paths. We abstract such a collection of paths out in the form of what we call $h$-hop connectors.

For undirected and unweighted graph $G=(V,E)$ with $W \subseteq V$, we let $\mcP^{(h)}(W)$ be all simple paths between vertices in $W$ with at most $h$ hops. That is, each $P \in \mcP^{(h)}(W)$ has vertices in $W$ as its first and last vertices and satisfies $|P \cap W| = 2$ and $\hop(P) \leq h$. Given a collection of paths $\mcP$ in $G$ between vertices in $W$, we abuse notation and let $(W, \mcP)$ be the graph with vertex set $W$ and an edge $\{u,v\}$ iff there is a $P \in \mcP$ with endpoints $\{u,v\}$. We will refer to $(W, \mcP\at{h}(W))$ as the $h$-hop connectivity graph of $W$. We let $c_e(\mcP) := |P \in \mcP: e \in P|$ be the congestion of $e$ with respect to a collection of paths $\mcP$. With this notation in hand, we give our definition of $h$-hop connectors which we illustrate in \Cref{fig:hHopConnEGs}.


\begin{definition}[$h$-Hop Connector]
    Let $G = (V,E)$ be an undirected and unweighted graph, let $h \geq 1$ and let $W \subseteq V$. An $h$-hop connector $\mcP$ of $W$ with congestion $C$ and hop stretch $\beta$ is a collection of paths in $G$ between vertices of $W$ such that:
    \begin{enumerate}
        \item \textbf{Connecting:} all $u,v \subseteq W$ which are connected in $(W, \mcP^{(h)}(W))$ are connected in $(W, \mcP)$;
        \item \textbf{Edge Congestion:} For all $e \in E$ we have $c_e(\mcP) \leq C$;
        \item \textbf{Hop Stretch:} $\hop(P) \leq \beta \cdot h$ for all $P \in \mcP$.
    \end{enumerate}
\end{definition}

\begin{figure}
    \centering
    \begin{subfigure}[b]{0.22\textwidth}
        \centering
        \includegraphics[width=\textwidth]{./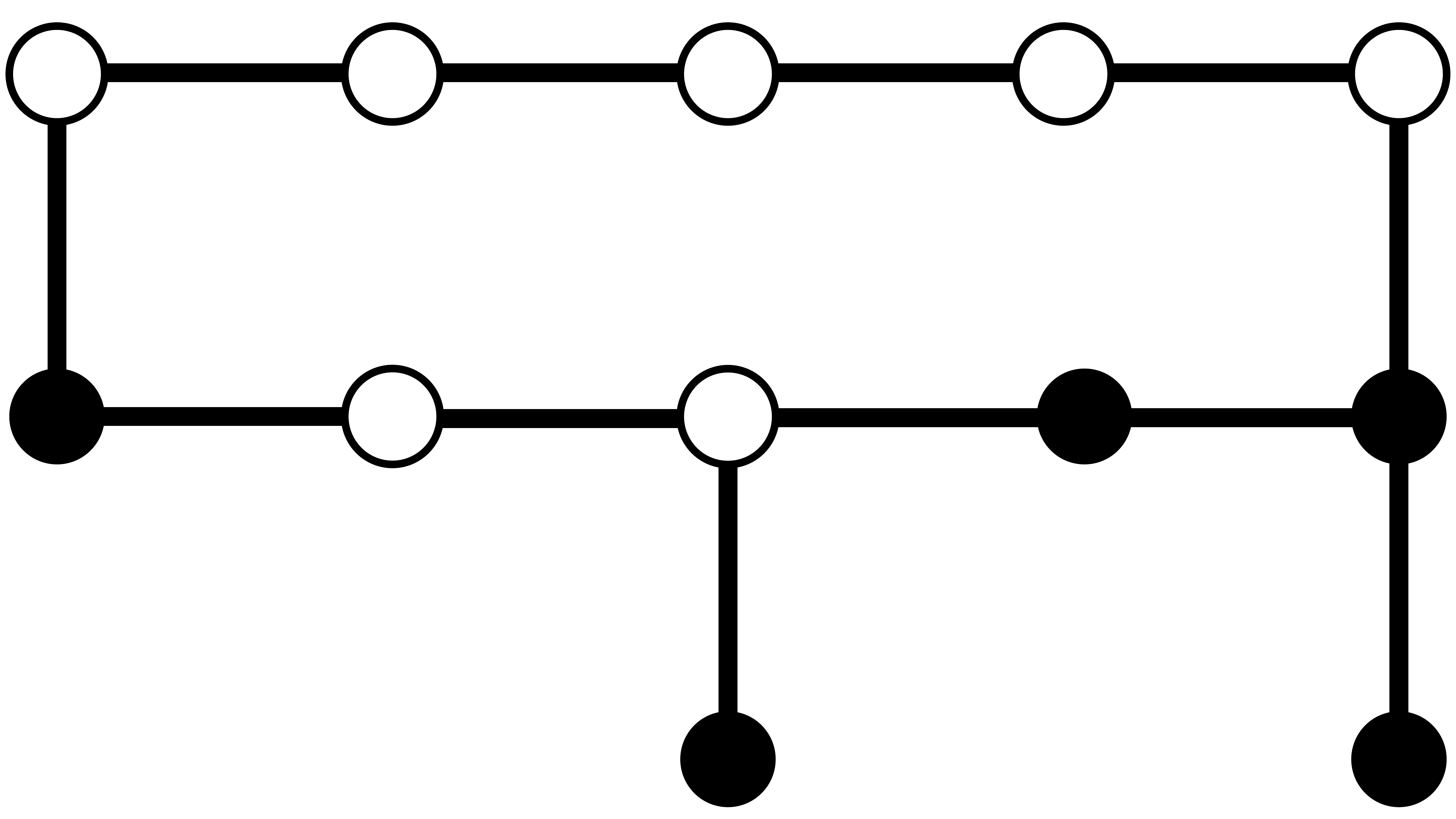}
        \caption{Graph $G$}
    \end{subfigure}
    \hfill
    \begin{subfigure}[b]{0.22\textwidth}
        \centering
        \includegraphics[width=\textwidth]{./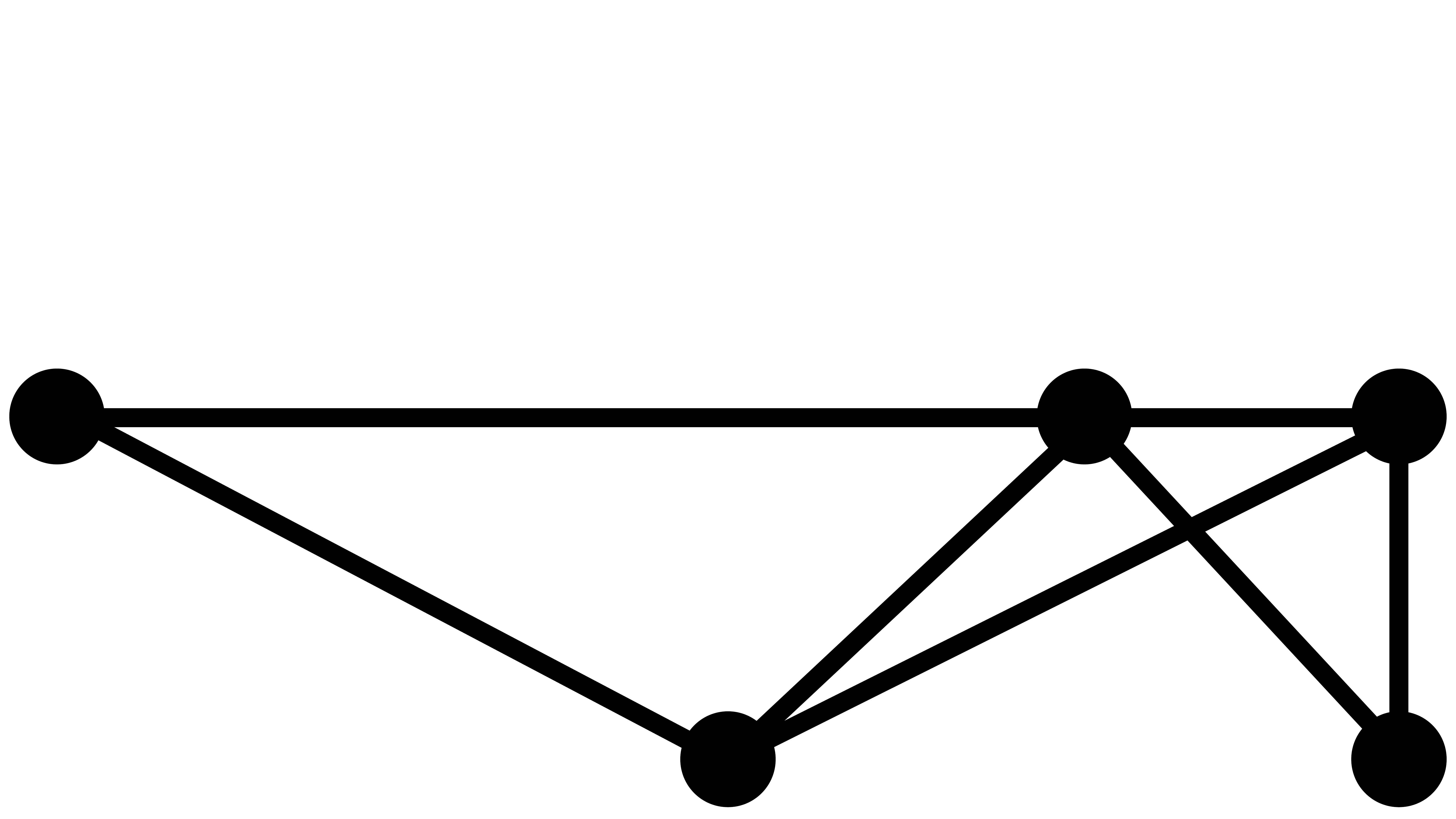}
        \caption{$(W, \mcP_h)$}
    \end{subfigure}
    \hfill
        \begin{subfigure}[b]{0.22\textwidth}
        \centering
        \includegraphics[width=\textwidth]{./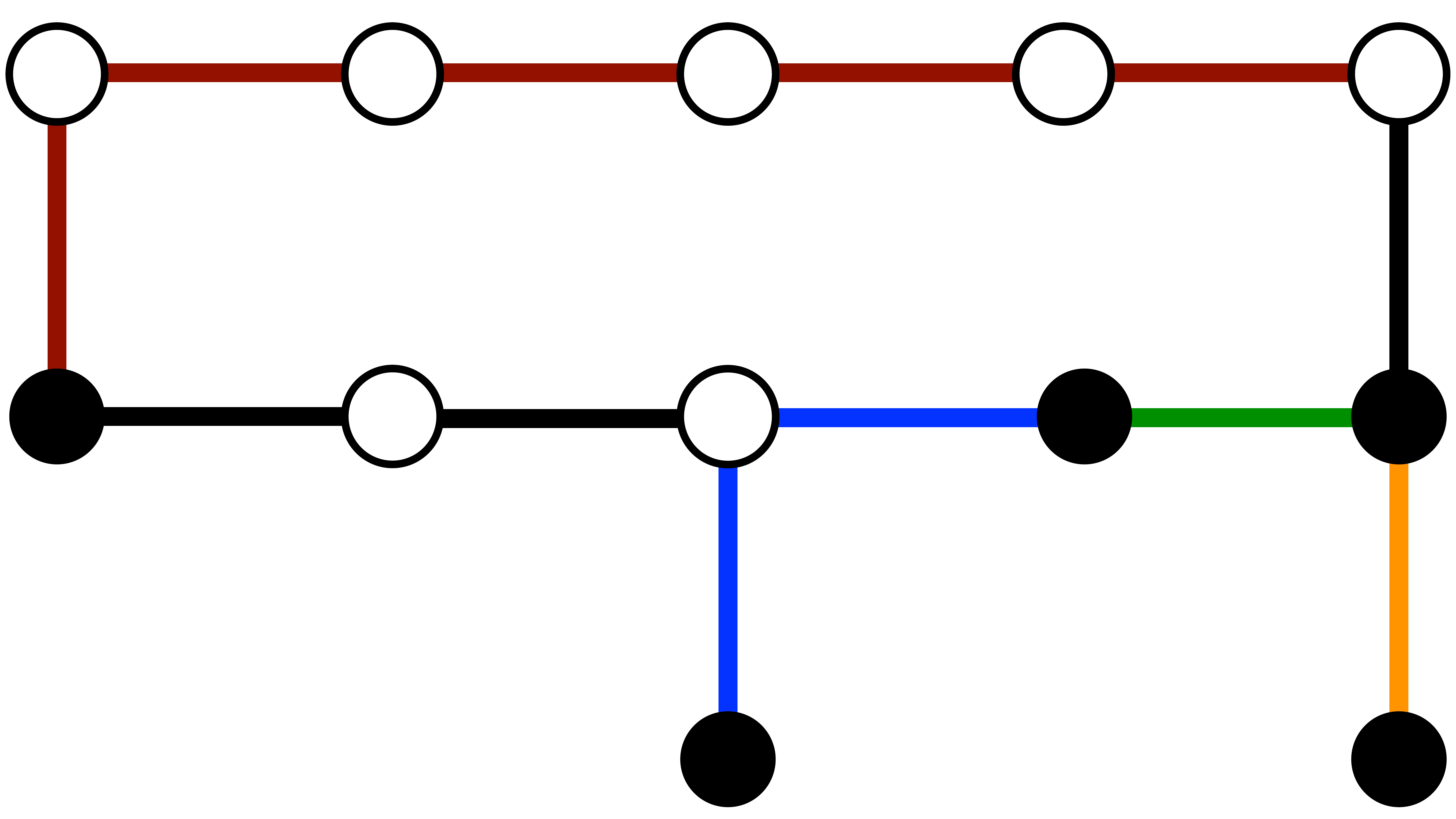}
        \caption{$\mcP$ in $G$}
    \end{subfigure}
    \hfill
    \begin{subfigure}[b]{0.22\textwidth}
        \centering
        \includegraphics[width=\textwidth]{./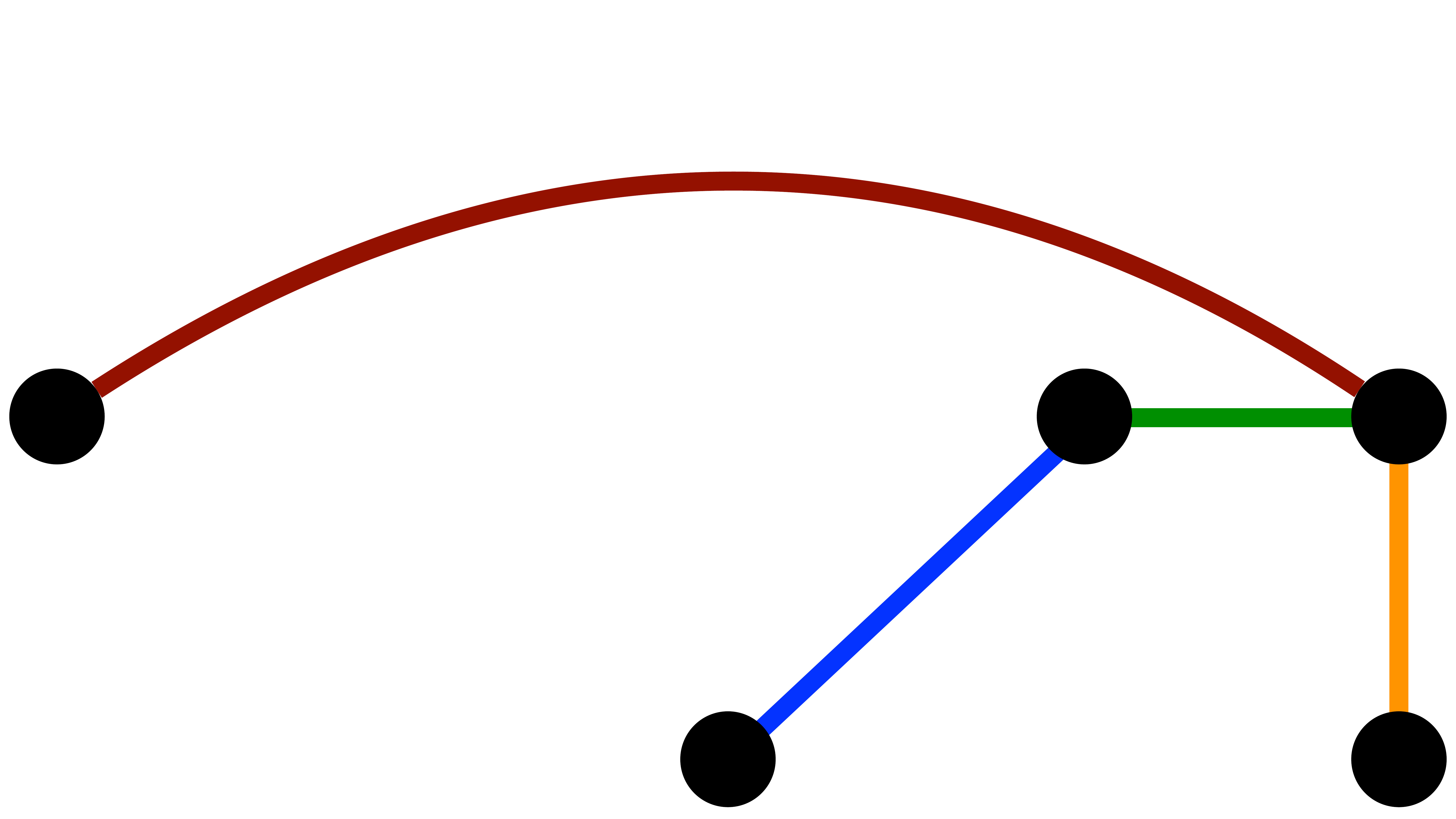}
        \caption{$(W, \mcP)$}
    \end{subfigure}
    \caption{An illustration of an $h$-hop connector with congestion $1$ and hop stretch $2$ on a graph $G$ for a vertex set $W \subseteq V(G)$ with $h = 3$. Vertices of $W$ given as solid black circles; all other vertices of $G$ given as white circles. Edges in $(W, \mcP)$ and paths in $\mcP$ colored according to their correspondence.}
    \label{fig:hHopConnEGs}
\end{figure}
    
It is easy to observe that the existence of good $h$-hop connectors are sufficient to show \Cref{thm:projSoln}.
\begin{lemma}\label{lem:hHopGivesProj}
    Fix $h \geq 1$, let $H \subseteq G$ be a subgraph of weighted graph $G = (V, E, w_G)$ and let $(T, \cdot)$ be a $(\beta h)$-hop partial tree embedding of $G$ with worst-case distance stretch $\alpha$. If $H$ has an $h$-hop connector on $V(T)$ with hop stretch $\beta$ and congestion $C$ then $w_T(T(H,h)) \leq C \alpha \cdot w_G(H)$.
\end{lemma}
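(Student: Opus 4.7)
The plan is to use the connector $\mathcal{P}$ as a ``routing scheme'' that certifies every tree-path appearing in $T(H,h)$, and then to charge the total $T$-weight to $w_G(H)$ via the congestion bound.

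First I would establish the containment
\[
T(H,h) \;\subseteq\; \bigcup_{P \in \mathcal{P}} T_{st(P)},
\]
where $st(P)$ denotes the endpoint pair of $P$. Fix any $u,v \in V(T)$ with $\hop_H(u,v) \le h$; these are exactly the pairs whose tree paths constitute $T(H,h)$. Take a witnessing path $Q$ in $H$ of at most $h$ hops from $u$ to $v$, and let $u = x_0, x_1, \ldots, x_m = v$ be the vertices of $V(T)$ that $Q$ visits, in order. Each sub-path of $Q$ between $x_i$ and $x_{i+1}$ has at most $h$ hops and meets $V(T)$ only at its endpoints, so it lies in $\mathcal{P}^{(h)}(V(T))$. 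Thus $u$ and $v$ are connected in $(V(T), \mathcal{P}^{(h)}(V(T)))$, and by the Connecting property of the $h$-hop connector they are connected in $(V(T), \mathcal{P})$. So there is a sequence $u = w_0, w_1, \ldots, w_k = v$ in $V(T)$ where each consecutive pair $(w_i, w_{i+1})$ is the endpoint pair of some $P_i \in \mathcal{P}$. Since $T$ is a tree, the unique tree path $T_{uv}$ is contained in $T_{w_0 w_1} \cup \cdots \cup T_{w_{k-1} w_k}$, and each $T_{w_i w_{i+1}} = T_{st(P_i)}$ appears on the right-hand side of the containment.

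Next I would convert the containment into a weight bound. By subadditivity,
\[
w_T(T(H,h)) \;\le\; \sum_{P \in \mathcal{P}} w_T\!\left(T_{st(P)}\right) \;=\; \sum_{P \in \mathcal{P}} d_T(u_P, v_P),
\]
where $(u_P,v_P)$ are the endpoints of $P$. Because $(T,\cdot)$ is a $(\beta h)$-hop partial tree embedding with worst-case distance stretch $\alpha$, we have $d_T(u_P,v_P) \le \alpha \cdot d_G^{(\beta h)}(u_P,v_P)$. Since $P$ itself is a path in $H \subseteq G$ with $\hop(P) \le \beta h$ (by the Hop Stretch property of the connector), $d_G^{(\beta h)}(u_P,v_P) \le w_G(P)$. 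Combining these gives
\[
w_T(T(H,h)) \;\le\; \alpha \sum_{P \in \mathcal{P}} w_G(P).
\]

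Finally I would apply the Edge Congestion bound. Swapping the order of summation,
\[
\sum_{P \in \mathcal{P}} w_G(P) \;=\; \sum_{e \in E(H)} w_G(e) \cdot c_e(\mathcal{P}) \;\le\; C \sum_{e \in E(H)} w_G(e) \;=\; C \cdot w_G(H),
\]
yielding $w_T(T(H,h)) \le C \alpha \cdot w_G(H)$.

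The main obstacle is the first step: translating the $h$-hop connectivity condition in $H$ (which defines the pairs appearing in $T(H,h)$) into connectivity via the simple-path family $\mathcal{P}^{(h)}(V(T))$ required by the Connecting property. The care needed is that a short path in $H$ between two $V(T)$-vertices may pass through several other $V(T)$-vertices, so one must subdivide it at those points to ensure each segment lies in $\mathcal{P}^{(h)}(V(T))$; once that is done, the Connecting property of $\mathcal{P}$ does the rest, and the weight bookkeeping is mechanical.
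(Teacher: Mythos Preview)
Your proof is correct and follows essentially the same route as the paper's: define $T(\mcP) := \bigcup_{P \in \mcP} T_{st(P)}$, argue $T(H,h) \subseteq T(\mcP)$ via the Connecting property, and then bound $w_T(T(\mcP))$ using the hop stretch, the worst-case distance stretch of the $(\beta h)$-hop embedding, and the congestion bound. You have spelled out one point the paper leaves implicit, namely that any $u,v \in V(T)$ with $\hop_H(u,v) \le h$ are connected in $(V(T),\mcP^{(h)}(V(T)))$ by subdividing a short simple $H$-path at its $V(T)$-vertices; this is exactly the right justification.
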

\begin{proof}
    Let $\mcP$ be the stated $h$-hop connector, let $S := \{(u,v) : (u, \ldots, v) \in \mcP \}$ be the endpoints of its path and let $T(\mcP) := \bigcup_{(u,v) \in S} T_{uv}$ be the subgraph of $T$ corresponding to $\mcP$. By the connecting property of our $h$-hop connector any $u,v$ which are within $h$ hops in $H$ must also be connected in $T(\mcP)$ and so $T(H, h) \subseteq T(\mcP)$. Combining this with the edge congestion and hop stretch of our $h$-hop connector with the worst-case distance stretch of $T$ we have $w_T(T(H, h)) \leq w_T(T(\mcP)) \leq C \alpha \cdot w_G(H)$.
\end{proof}

Thus, we devote the remainder of this section to showing that every graph has an $h$-hop connector with hop stretch $8$ and congestion $4$.


A simple proof similar to the above warm-up shows that trees with low diameter have good $h$-hop connectors.
\begin{lemma}\label{lem:hHopTree}
    Let $G = (V,E)$ be a tree with diameter at most $\beta h$ for $h \geq 1$. Then, $G$ has an $h$-hop connector with congestion at most $2$ and hop stretch at most $\beta$ for every $W \subseteq V$.
    
\end{lemma}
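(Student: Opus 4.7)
The plan is to adapt the Euler-tour construction from the warm-up to build $\mcP$. First, I would form the multigraph $G_2$ by doubling every edge of $G$; since every vertex of $G_2$ then has even degree, $G_2$ admits an Eulerian circuit $t$. I would extract the subsequence $t' = (w_1, w_2, \ldots, w_k)$ of visits to vertices of $W$ in the order they appear along $t$, and set $\mcP := \{P_1, \ldots, P_{k-1}\}$ where $P_\ell$ is the unique simple tree path between $w_\ell$ and $w_{\ell+1}$ in $G$.

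Two of the three required properties would then be immediate. The connecting property follows because the Eulerian circuit visits every edge (hence every vertex) of $G$, so every $w \in W$ appears in $t'$ and the paths $P_\ell$ link consecutive entries of $t'$; thus $(W,\mcP)$ is in fact connected, which is stronger than what the definition requires. The hop stretch property is also immediate: each $P_\ell$ is a path in the tree $G$ between two vertices, so $\hop(P_\ell)$ is at most the diameter of $G$, which is at most $\beta h$.

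The main step is the congestion bound, which I would establish by a parity argument. Fix an edge $e \in E$, and for each $\ell$ let $a_\ell(e)$ be the number of times the sub-walk of $t$ from $w_\ell$ to $w_{\ell+1}$ traverses $e$. Since $e$ is doubled in $G_2$, it is traversed exactly twice by $t$ in total, so $\sum_\ell a_\ell(e) = 2$. In a tree, back-and-forth traversals of an edge along any walk cancel out, so $e$ lies on the unique tree path $P_\ell$ iff $a_\ell(e)$ is odd, equivalently $a_\ell(e) = 1$ (since $a_\ell(e) \le 2$). Hence $|\{\ell : e \in P_\ell\}| \le \sum_\ell a_\ell(e) = 2$, giving $c_e(\mcP) \le 2$.

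The only subtlety worth double-checking is the standard claim that, in a tree, the edges of a walk from $u$ to $v$ appearing an odd number of times are exactly the edges of the unique simple $u$-$v$ path; this is a short induction on the length of the walk using that removing an immediate backtrack $\ldots x y x \ldots$ preserves both the parity of each edge's traversal count and the walk's endpoints. I do not anticipate any real obstacles beyond this bookkeeping, since the argument is a clean generalization of the low-diameter warm-up via the Euler-tour idea.
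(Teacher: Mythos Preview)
Your proposal is correct and follows essentially the same Euler-tour construction as the paper's proof. The only difference is that you justify the congestion bound more carefully via the parity argument, whereas the paper simply asserts $c_e(\mcP)\le 2$ from the fact that the tour uses each edge of $G_2$ once; one minor quibble is that $\sum_\ell a_\ell(e) \le 2$ rather than $=2$ (the segments of $t$ before $w_1$ and after $w_k$ are not counted), but this only helps your bound.
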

\begin{proof}
    Suppose $G$ is a tree. Let $G_2$ be the multigraph of $G$ where each edge is doubled. Let $t = (v_1, v_2, \ldots)$ be an Euler tour of $G_2$ and let $t' = (w_1, w_2, \ldots)$ be the vertices of $W$ visited by this tour in the order in which they are visited. That is, $t'$ is gotten from $t$ be deleting from it all vertices not in $W$ while leaving the ordering of the remaining vertices unchanged. Notice that vertices in $W$ might occur multiple times in $t'$. We let $P_\ell$ be the path in $G$ between $w_\ell$ and $w_{\ell+1}$ and let $\mcP := \{P_\ell\}_\ell$. 
    
    Since every vertex in $W$ occurs at least once in $t'$ we have that all vertices in $W$ are connected in $(W, \mcP)$. Since $t$ used each edge of $G_2$ once, it follows that $c_e(\mcP) \leq 2$. Lastly, $\hop(P_\ell) \leq \beta h$ for all $P_\ell \in \mcP$ since each $P_\ell$ is a simple path in a tree with diameter at most $\beta h$.
\end{proof}

%


We proceed to show how to construct an $h$-hop connector with congestion $4$ and hop stretch $8$ on any graph. We first reduce the general graph case to the forest case: we show that, up to a factor of $2$ in the hop stretch, every graph $G$ has as a subgraph a forest $F$ where an $h$-hop connector for $F$ is an $h$-hop connector for $G$. We then reduce the forest case to the low diameter tree case by cutting each tree in $F$ at $O(h)$-spaced annuli from an arbitrary root so that the resulting trees have low diameter. We apply \Cref{lem:hHopTree} to the resulting low-diameter trees. More specifically, we perform these cuts and applications of \Cref{lem:hHopTree} twice with two different offsets to get back paths $\mcP_1$ and $\mcP_2$; we then take our $h$-hop connector to be $\mcP := \mcP_1 \cup \mcP_2$. We illustrate this strategy in \Cref{fig:hHopConnConst}.

\begin{figure}
    \centering
    \begin{subfigure}[b]{0.32\textwidth}
        \centering
        \includegraphics[width=\textwidth]{./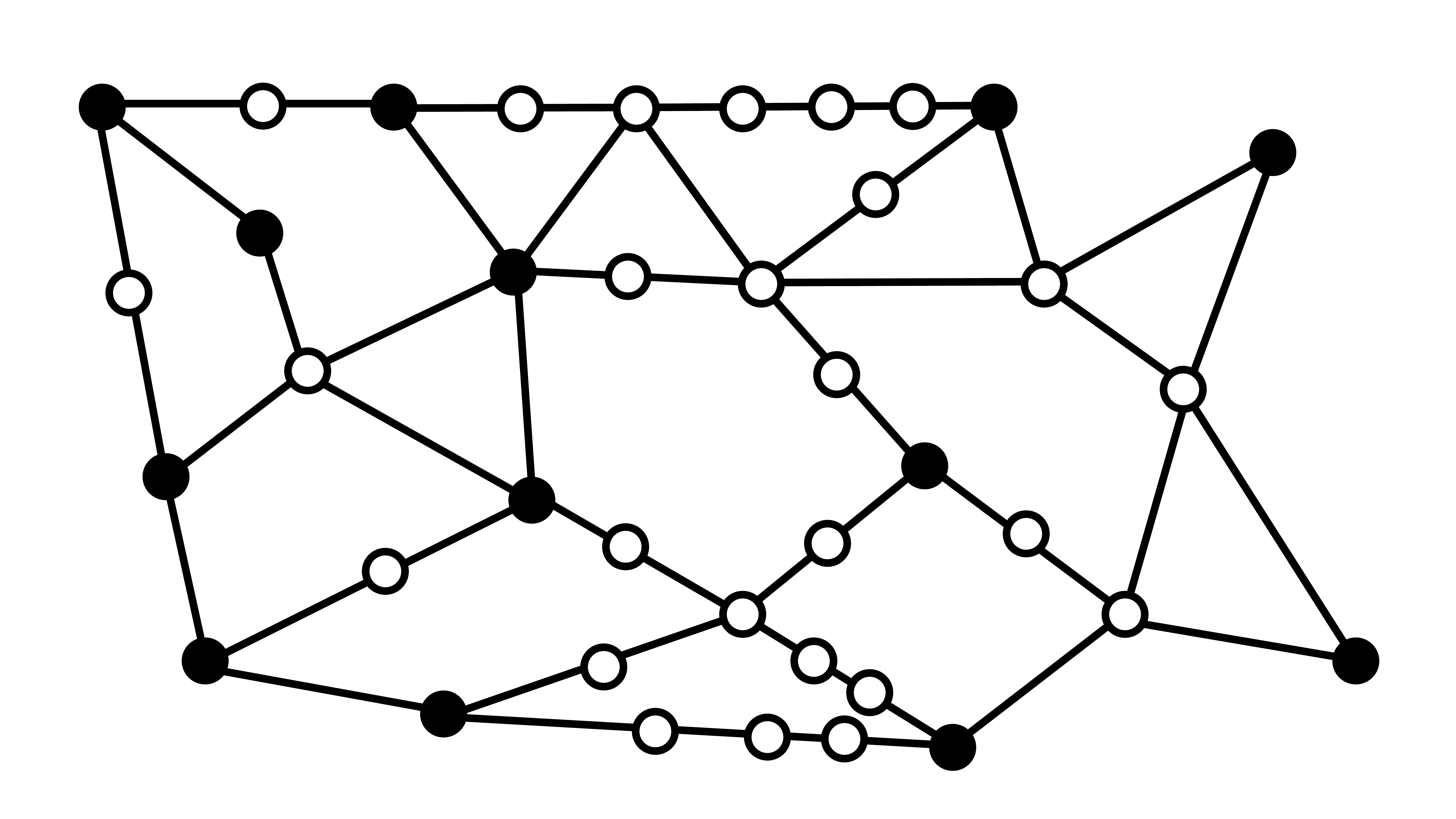}
        \caption{Graph $G$}
    \end{subfigure}
    \hfill
    \begin{subfigure}[b]{0.32\textwidth}
        \centering
        \includegraphics[width=\textwidth]{./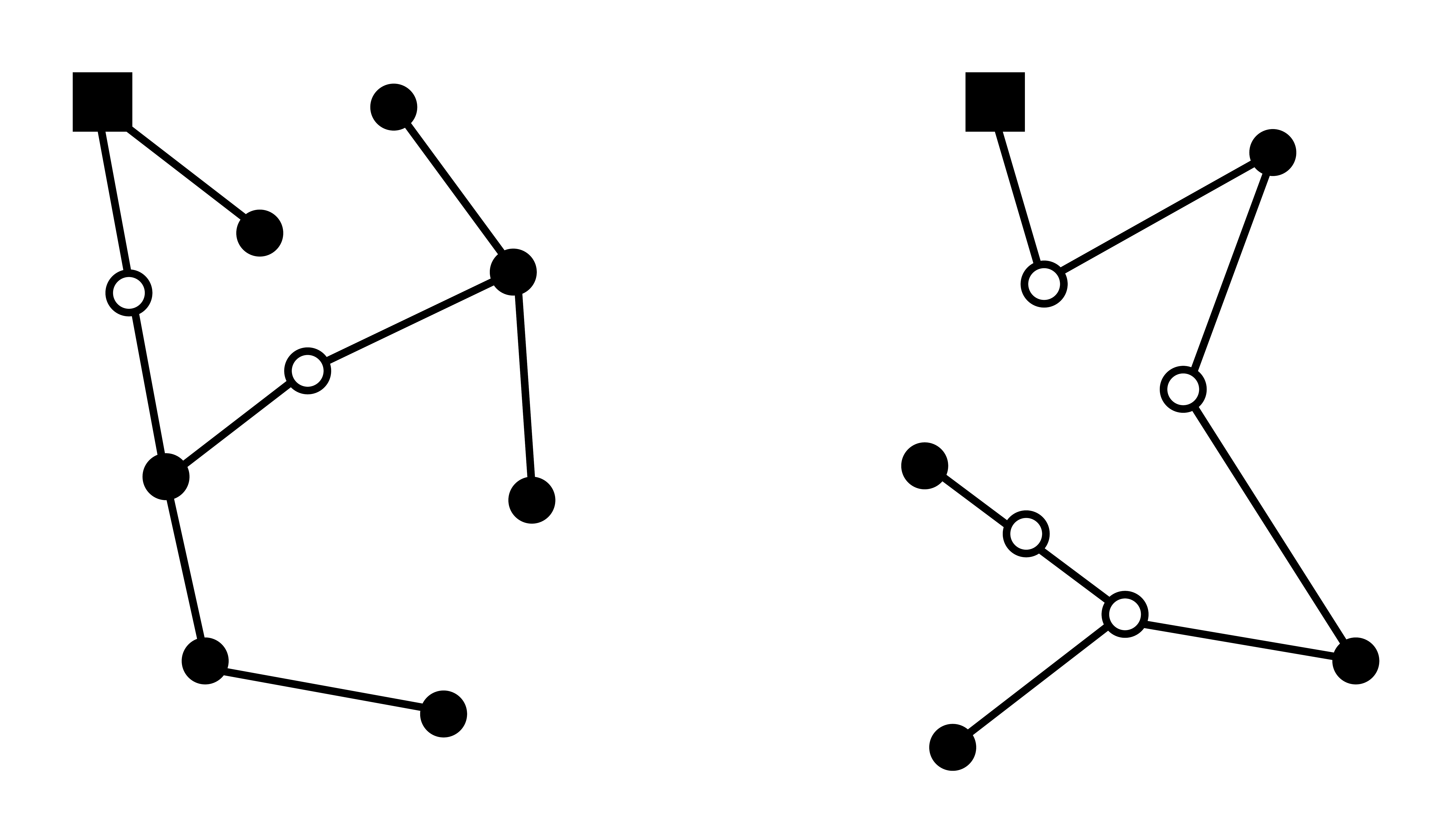}
        \caption{Forest $F$}
    \end{subfigure}
    \hfill
    \begin{subfigure}[b]{0.32\textwidth}
        \centering
        \includegraphics[width=\textwidth]{./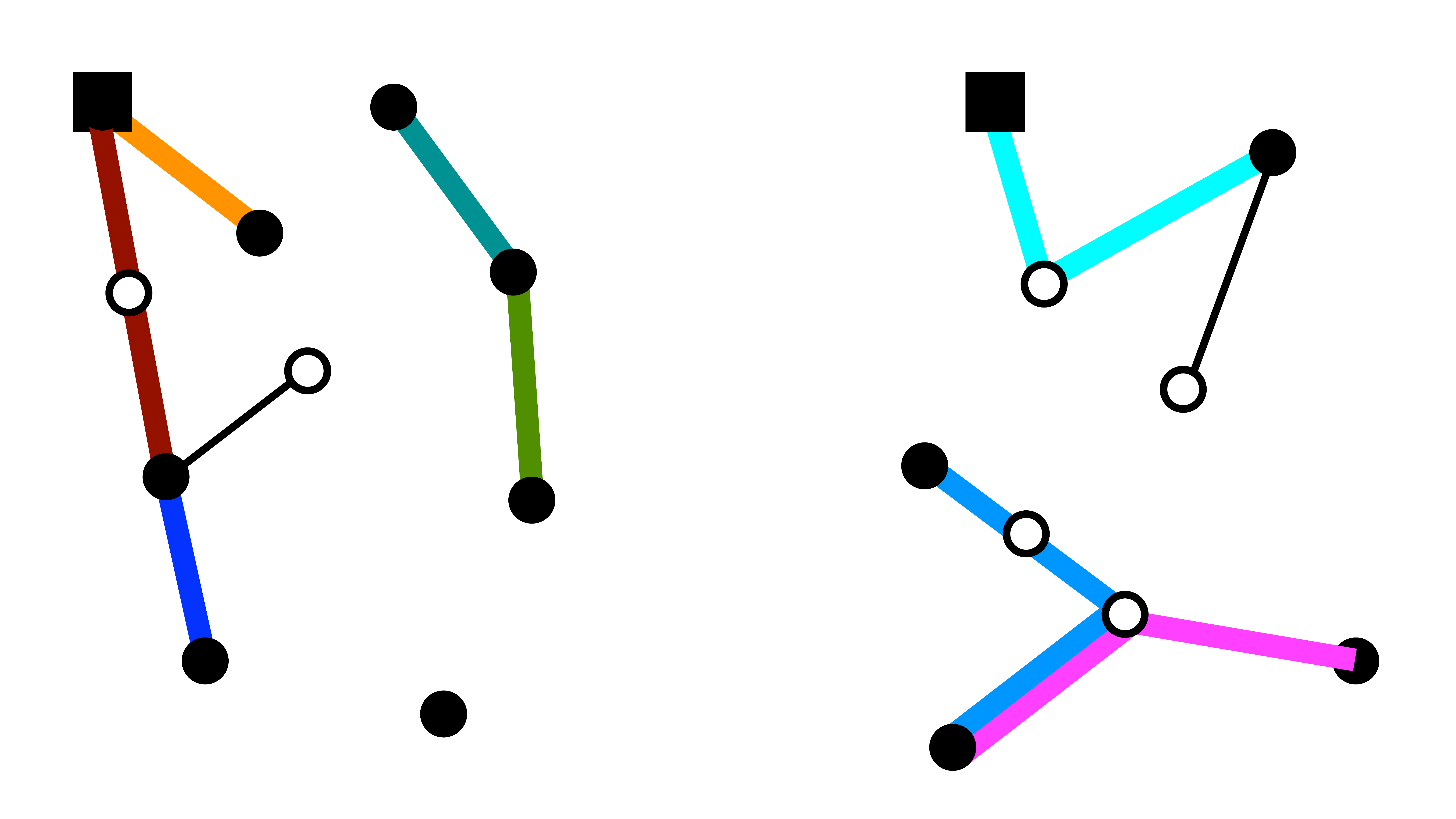}
        \caption{First offset and $\mcP_1$}
    \end{subfigure}
    \hfill
    \begin{subfigure}[b]{0.32\textwidth}
        \centering
        \includegraphics[width=\textwidth]{./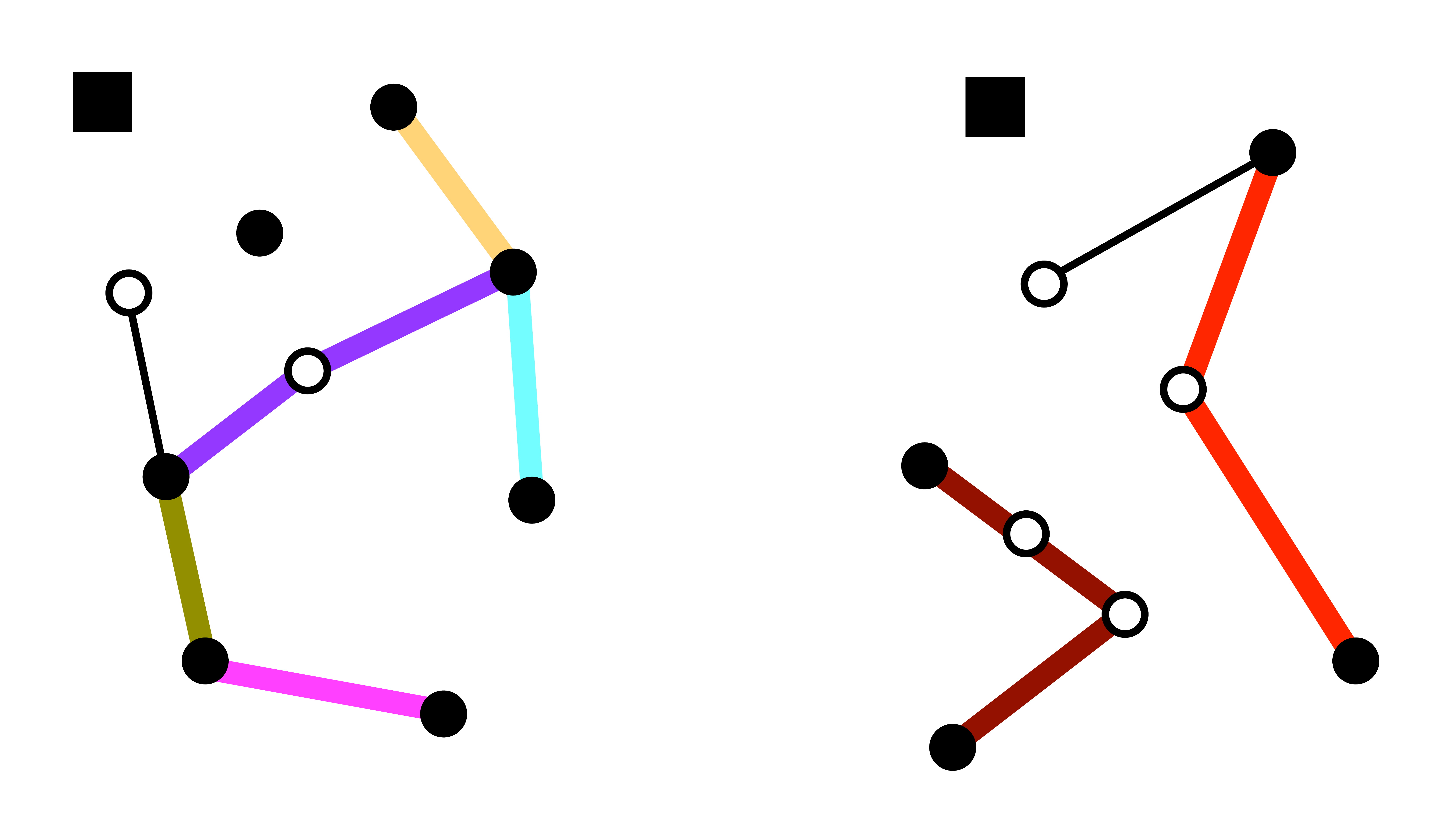}
        \caption{Second offset and $\mcP_2$}
    \end{subfigure}
    \begin{subfigure}[b]{0.32\textwidth}
      \centering
        \includegraphics[width=\textwidth]{./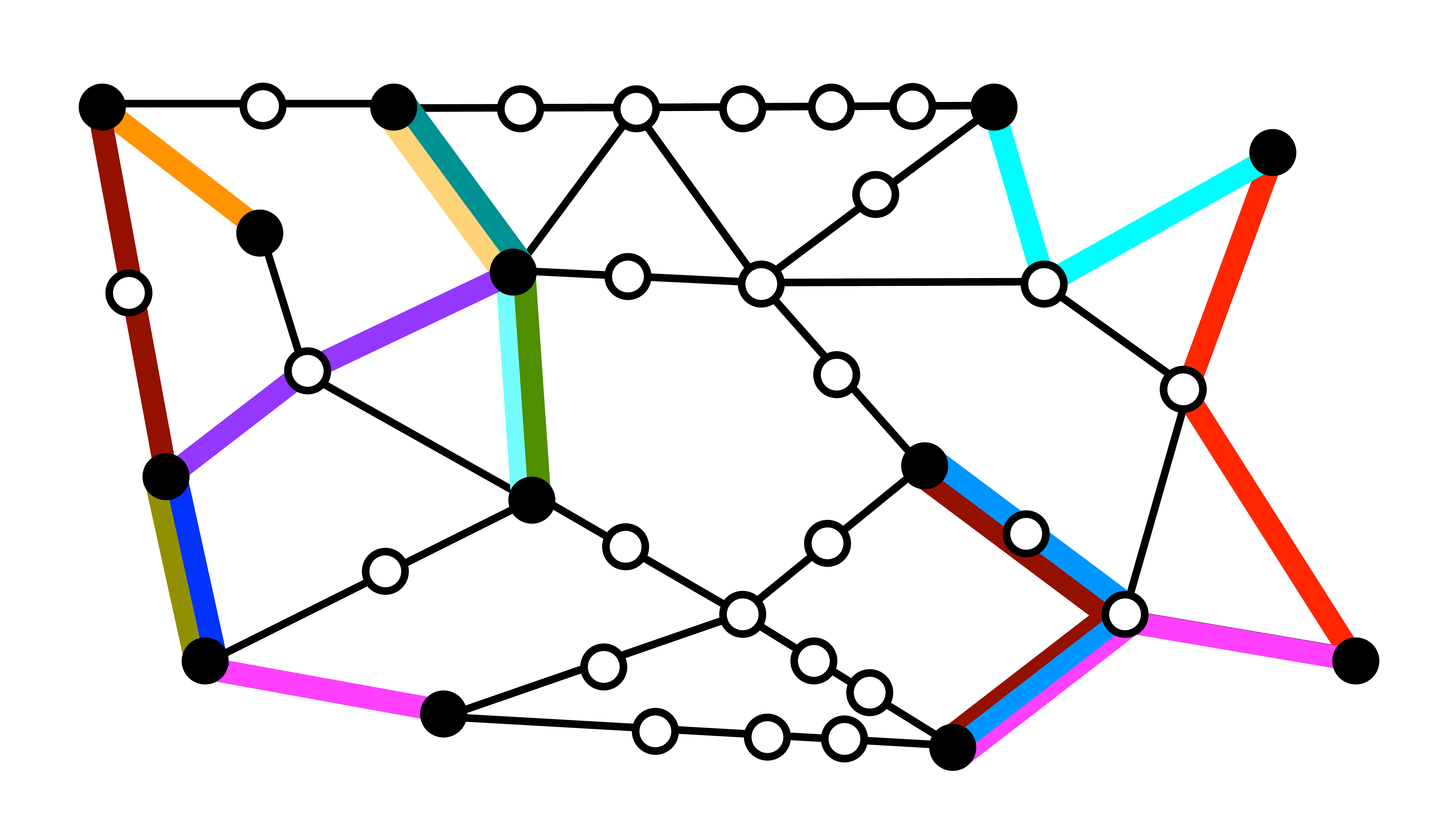}
        \caption{$\mcP = \mcP_1 \cup \mcP_2$ on $G$}
    \end{subfigure}
    \begin{subfigure}[b]{0.32\textwidth}
        \centering
        \includegraphics[width=\textwidth]{./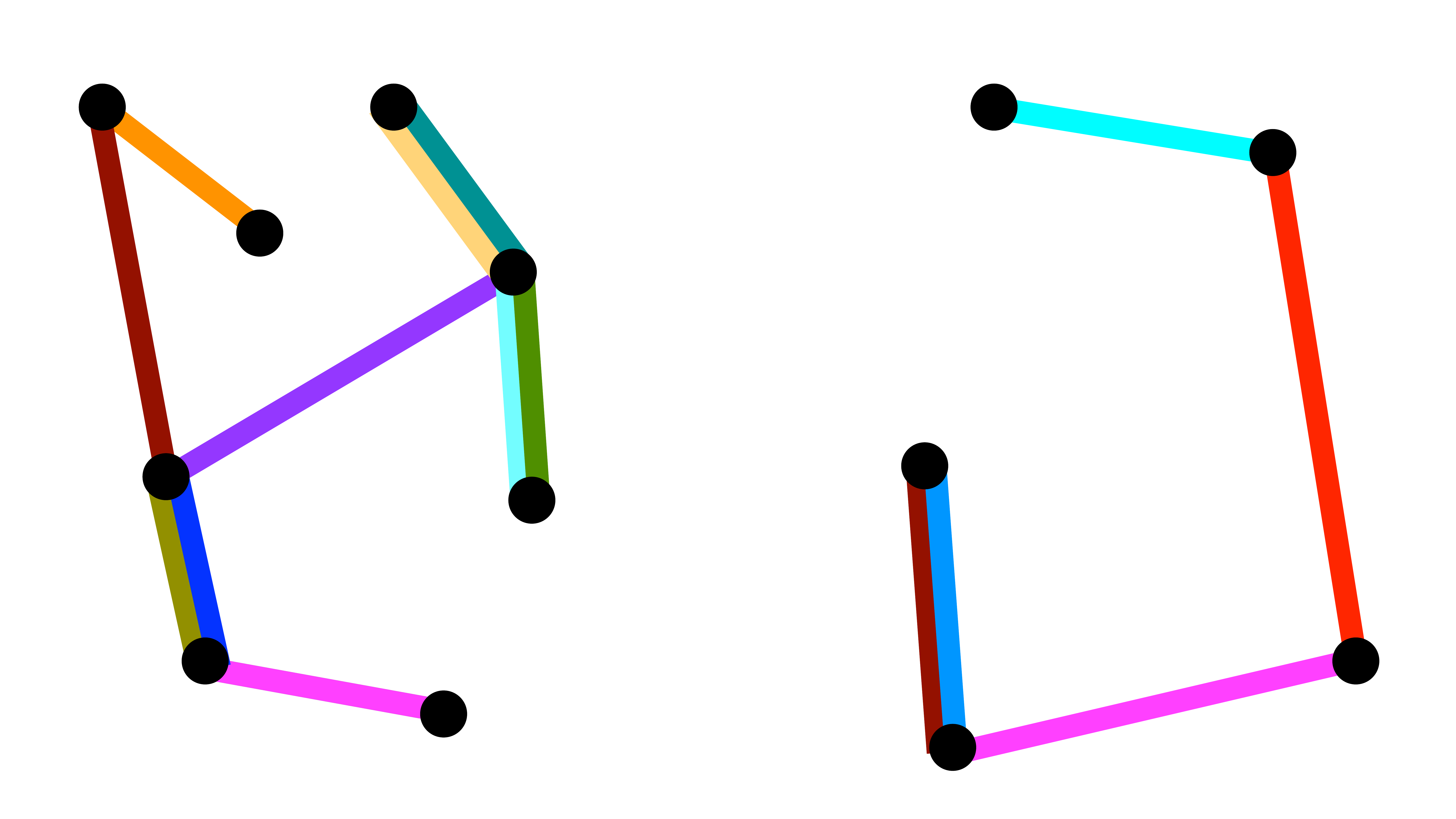}
        \caption{$(W, \mcP)$}
    \end{subfigure}
    \caption{An illustration of how to compute an $h$-hop connector on an arbitrary graph $G$ for $W \subseteq V(G)$ with $h = 3$. Vertices of $W$ given as solid black circles; roots of $F$ given as black squares; all other vertices of $G$ given as white circles. Paths in $\mcP_1$ and $\mcP_2$ colored to correspond to their edges in $(W, \mcP)$.}
    \label{fig:hHopConnConst}
\end{figure}

We begin with a simple technical lemma which shows that the graphs induced by the connected components of the $h$-hop connectivity graph are disjoint. For a collection of paths $\mcP$ in $G$ we let $G[\mcP] := G[\bigcup_{P \in \mcP} V(P)]$ be the graph induced by the union of all such paths.

\begin{lemma}\label{lem:disjointParts}
    Let $G =(V, E)$ be a graph, let $W \subseteq V$, and let $U$ and $U'$ be the vertices of two distinct connected components of $(W, \mcP^{(h)}(W))$. Then $G[\mcP^{(h)}(U)]$ and $G[\mcP^{(h)}(U')]$ are vertex-disjoint.
\end{lemma}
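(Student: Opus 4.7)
My plan is to prove the lemma by contradiction: I will assume there is some vertex $x$ lying on two paths $P \in \mcP^{(h)}(U)$ and $P' \in \mcP^{(h)}(U')$, and derive a contradiction by exhibiting a single path in $\mcP^{(h)}(W)$ that would join $U$ and $U'$ inside $(W, \mcP^{(h)}(W))$. The proof will rest on one structural observation about paths in $\mcP^{(h)}(U)$, followed by a case analysis on whether $x$ lies in $W$.

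First I would establish the key structural claim: for any $P \in \mcP^{(h)}(U)$ with endpoints $u_1, u_2 \in U$, we have $V(P) \cap W = \{u_1, u_2\}$; that is, $P$ has no interior vertex in $W$. To see this, list the $W$-vertices of $P$ in the order they appear along $P$. Since $\hop(P) \le h$, each subpath of $P$ between two consecutive $W$-vertices has length at most $h$ and, by the ``consecutive'' choice, no interior vertex in $W$, so it belongs to $\mcP^{(h)}(W)$. Thus every vertex in $V(P) \cap W$ is adjacent in $(W, \mcP^{(h)}(W))$ to the next one, and hence all of $V(P) \cap W$ sits in the same connected component as $u_1$, namely $U$. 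Combined with the defining equation $|V(P) \cap U| = 2$, this forces $V(P) \cap W = \{u_1, u_2\}$. An identical argument shows $V(P') \cap W = \{u_1', u_2'\} \subseteq U'$.

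Now suppose for contradiction that $x \in V(G[\mcP^{(h)}(U)]) \cap V(G[\mcP^{(h)}(U')])$, witnessed by $P$ and $P'$ as above. If $x \in W$, the structural claim places $x$ in both $U$ and $U'$, contradicting $U \cap U' = \emptyset$. Otherwise $x \notin W$, so $x$ is an interior vertex of both $P$ and $P'$. Applying pigeonhole to $\hop(P[u_1 \to x]) + \hop(P[x \to u_2]) = \hop(P) \le h$, one of the two endpoints of $P$ (call it $u$) satisfies $\hop(P[u \to x]) \le h/2$; analogously pick $u' \in \{u_1', u_2'\}$ with $\hop(P'[x \to u']) \le h/2$. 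The concatenated walk $W_0 := P[u \to x] \oplus P'[x \to u']$ then has at most $h$ hops, and by the structural claim combined with $x \notin W$, every internal vertex of $W_0$ lies outside $W$.

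It remains to turn $W_0$ into a genuine element of $\mcP^{(h)}(W)$: extract any simple $u$--$u'$ path $Q$ from $W_0$ by shortcutting cycles. Then $\hop(Q) \le \hop(W_0) \le h$, and the interior vertices of $Q$ form a subset of the interior of $W_0$, hence avoid $W$. So $Q \in \mcP^{(h)}(W)$ witnesses adjacency of $u \in U$ and $u' \in U'$ in $(W, \mcP^{(h)}(W))$, contradicting that $U$ and $U'$ are distinct connected components. The main (mildly) subtle step is the structural claim, since it relies on the slightly indirect observation that intermediate $W$-vertices on a path in $\mcP^{(h)}(U)$ would themselves lie in $U$; once that is in hand, the rest of the proof is a clean pigeonhole-and-shortcut argument.
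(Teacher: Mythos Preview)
Your proof is correct and follows essentially the same approach as the paper: locate a shared vertex $x$ on paths $P$ and $P'$, use pigeonhole to pick a short half of each, and concatenate to obtain an $h$-hop path in $\mcP^{(h)}(W)$ linking $U$ to $U'$. Your version is in fact somewhat more careful than the paper's: the paper asserts ``it suffices to show that for any $P, P' \in \mcP^{(h)}(W)$ \ldots'' without justifying why paths in $\mcP^{(h)}(U)$ belong to $\mcP^{(h)}(W)$---exactly the content of your structural claim---and it also asserts that the concatenation lies in $\mcP^{(h)}(W)$ without explicitly handling simplicity via shortcutting or separating out the $x \in W$ case.
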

\begin{proof}
    It suffices to show that for any $P, P' \in \mcP^{(h)}(W)$ if $V(P) \cap V(P') \neq \emptyset$ then the edges corresponding to $P$ and $P'$ in $(W, \mcP^{(h)}(W))$ are in the same connected components in $(W, \mcP^{(h)}(W))$. Let $u, v \in W$ be the endpoints of $P$ and let $u', v' \in W$ be the endpoints of $P'$. Suppose some $x \in V(P) \cap V(P')$. Let $P_{ux}$ and $P_{xv}$ be the subpaths of $P$ from $u$ and $v$ to $x$ respectively and define $P_{u'x}'$ and $P_{xv'}'$ symmetrically. Then, without loss of generality $P_{ux}$ and $P_{u'x}'$ both have at most $h/2$ edges meaning the concatenation of $P_{ux}$ and $P_{u'x}'$ at $x$ is in $\mcP\at{h}(W)$. It follows that $u$ and $u'$ are in the same connected component in $(W, \mcP^{(h)}(W))$ and therefore, $v$ and $v'$ are also in this component.
\end{proof}

Applying the above lemma, we show that, up to a factor of $2$ in the hop stretch, we may assume that our graph is a forest. We let $\mcP_G^{(h)}(W)$ be all paths with at most $h$ hops between vertices in $W$ \emph{in graph $G$}.
\begin{lemma}\label{lem:WLOGTree}
    Let $G =(V, E)$ be a graph, let $W \subseteq V$. Then there exists a subgraph $F \subseteq G$ which is a forest where $u,v \in W$ are connected in $(W, \mcP^{(h)}_G(W))$ iff $u, v$ are connected in $(W, \mcP\at{2h}_F(W))$.
\end{lemma}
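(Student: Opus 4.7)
The plan is to decompose by the connected components $U_1, U_2, \ldots$ of $(W, \mcP^{(h)}_G(W))$ and build a tree $F_{U_j} \subseteq G[\mcP^{(h)}(U_j)]$ for each component independently. By \Cref{lem:disjointParts}, the vertex sets $V(G[\mcP^{(h)}(U_j)])$ are pairwise disjoint, so $F := \bigcup_j F_{U_j}$ is automatically a forest and any path of $F$ between two $W$-vertices lies entirely inside one $F_{U_j}$. This already gives the ``$\Leftarrow$'' direction: two $W$-vertices connected in $(W, \mcP^{(2h)}_F(W))$ must share a $U_j$ and hence be connected in $(W, \mcP^{(h)}_G(W))$. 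A useful side observation (proved by breaking a path at its $W$-vertices and invoking \Cref{lem:disjointParts}) is that $W \cap V(G[\mcP^{(h)}(U_j)]) = U_j$, so inside any $F_{U_j}$ the phrases ``no internal $W$-vertex'' and ``no internal $U_j$-vertex'' are equivalent.

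For a fixed non-singleton component $U$, I construct $F_U$ greedily. Pick any root $r \in U$, initialize $F_U := \{r\}$ and $S := \{r\}$. While $S \neq U$: choose any $v \in U \setminus S$ that is adjacent in $(U, \mcP^{(h)}_G(U))$ to some $u \in S$ (this is possible since the component is connected), let $P \in \mcP^{(h)}_G(U)$ be a path realizing that edge, walk along $P$ from $v$ and let $x$ be the first vertex already in $V(F_U)$, then append the $v$-to-$x$ sub-path $P'$ of $P$ to $F_U$ and add $v$ to $S$. Because $P'$ meets $V(F_U)$ only at $x$, $F_U$ remains a tree throughout, and the loop terminates in $|U|-1$ steps.

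The key structural claim is that each newly added $v$ has a $U$-vertex $v^\star$ at $F_U$-distance at most $2h$ whose unique $F_U$-path to $v$ avoids $U$ internally---i.e., $(v, v^\star) \in \mcP^{(2h)}_{F_U}(U)$. Case on the attachment point $x$: if $x \in U$, take $v^\star := x$ (then $P'$ itself witnesses the edge, length at most $h$, no internal $U$-vertex inherited from $P$); otherwise $x$ was introduced as an internal vertex of some earlier attached sub-path $P'_j$ with $U$-endpoint $v_j$, in which case the unique $F_U$-path from $v$ to $v_j$ is $P'$ followed by the $x$-to-$v_j$ sub-piece of $P'_j$, of total length at most $2h$ with all internal vertices non-$U$. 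The subtle point to check is that this edge survives the rest of the construction, which it does because every subsequent iteration only attaches fresh vertices at a single point of $F_U$ and therefore cannot alter any existing unique tree path or inject a new $U$-vertex onto it. A simple induction on iteration number then shows every vertex of $U$ lies in the same component of $(U, \mcP^{(2h)}_{F_U}(U))$ as $r$, completing the ``$\Rightarrow$'' direction.
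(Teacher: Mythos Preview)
Your proof is correct and follows essentially the same approach as the paper's. Both arguments build a tree for each component $U$ of $(W,\mcP_G^{(h)}(W))$ greedily, attaching one new $U$-vertex at a time via the prefix of an $h$-hop path up to its first intersection $x$ with the current tree, and both obtain the $2h$ bound from ``$\leq h$ hops along the new piece'' plus ``$\leq h$ hops from $x$ to a $U$-vertex already in the tree.'' The only difference is presentational: the paper packages the latter fact as an explicit running invariant (every vertex of the tree is within $h$ tree-hops of $U$), whereas you recover it for the specific point $x$ by case analysis on whether $x\in U$ or $x$ is an internal vertex of some earlier attached sub-path $P'_j$. Your version is also a bit more careful in verifying that the resulting $2h$-hop witness has no internal $U$-vertices, a point the paper leaves implicit.
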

\begin{proof}
    We will iteratively construct $F$. Specifically, for each connected component of $(W, 
    \mcP^{(h)}_G(W))$ with vertex set $U$ we will maintain a collection of paths $\mcP_U$ where these paths are all contained in $G[\mcP^{(h)}(U)]$ and $F$ is the graph induced by the union of all these paths. It follows that by \Cref{lem:disjointParts} if $G[\mcP_U]$ is a tree then  the connected components of our final solution are indeed a forest. We will maintain the following invariants for our $\mcP_U$s where $\hop_G(v, U) := \min_{u \in U} \hop_G(v, u)$:
    \begin{enumerate}
        \item $U' := U \cap V(G[\mcP_U])$ is connected in $(U', \mcP_U)$;
        \item $G[\mcP_U]$ is a tree;
        \item $\hop(P) \leq 2h$ for every $P \in \mcP_U$;
        \item $\hop_{G[\mcP_U]}(v, U) \leq h$ for every $v \in V(G[\mcP_U])$.
    \end{enumerate}

    We initialize $\mcP_U$ to contain a path consisting of exactly one (arbitrary) vertex in $U$. Notice that our construction trivially satisfies these invariants initially.
    
    Next, we repeat the following until $U' = U$. Let $u$ be a vertex in $U \setminus U'$ where $u$ has a path $P$ of at most $h$ hops to a vertex in $U'$; such a $u$ and $P$ must exist by the definition of $U$. Let $x$ be the first vertex in $P \cap G[\mcP_U]$ where we imagine that $P$ starts at $u$ and let $P_{ux}$ be the subpath of $P$ from $u$ to $x$. By invariant $4$ we also know there is some path in $G[\mcP_U]$ from $x$ to a $u' \in U'$ with at most $h$ hops; call this path $P_{xu'}$ and let $P'$ be the concatenation of $P_{ux}$ and $P_{xu'}$; we add $P'$ to $\mcP_U$. Notice that this adds $u$ to $U'$ and so this process will eventually terminate at which point $U' = U$.
    
    Let us argue that our invariants hold. Our first invariant holds since before adding $u$ to $U'$, $U'$ was connected and after adding $u$ to $U'$, $u$ is connected to $u'$ by $P'$. Our second invariant holds since $x$ was the first vertex in $G[\mcP_U]$ incident to $P$. Our third invariant holds since $P_{ux}$ and $P_{xu'}$ were each of at most $h$ hops. Our fourth invariant holds since the only new vertices we add to $G[\mcP_U]$ are the vertices of $P_{ux}$, all of which are within $h$ hops of $u$.
    
    Lastly, notice that once $U' = U$ for every $U$, our claim follows from invariants 1,2 and 3.
\end{proof}

By turning our graph into a forest with \Cref{lem:WLOGTree} and then cutting the constituent trees at $O(h)$-spaced level sets with two different initial offsets, we can conclude that every graph has $h$-hop connectors with constant congestion and hop stretch.
\begin{restatable}{lemma}{hHopConst}\label{lem:hHopConst}
    Let $G = (V,E)$ be a graph. Then $G$ has an $h$-hop connector with congestion $4$ and hop stretch $8$ for every $W \subseteq V$.
\end{restatable}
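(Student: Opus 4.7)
The plan is to combine the two reductions already established in this section with a careful choice of subtree partition of $F$. Applying \Cref{lem:WLOGTree} reduces the problem to that of building an appropriate path collection in a forest $F \subseteq G$: we need only connect every pair of $W$-vertices connected in $(W,\mcP\at{2h}_F(W))$, at which point the connecting property transfers to $(W,\mcP\at{h}_G(W))$ for free. \Cref{lem:hHopTree} handles the case in which the underlying graph is a tree of diameter at most $\beta \cdot 2h$, so it remains to split each tree of $F$ into pieces of diameter $O(h)$ in such a way that any two $W$-vertices connected by a $2h$-hop path in $F$ end up co-resident in at least one such piece.

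\textbf{Two shifted cuttings.} The pieces are produced by two BFS-based cuttings: root each tree of $F$ arbitrarily and let $\ell(v)$ denote the depth of $v$. Offset $0$ removes every tree edge whose lower endpoint lies at level $4kh+1$ for $k \ge 1$, and offset $1$ removes every tree edge whose lower endpoint lies at level $(4k+2)h+1$ for $k \ge 0$. Each offset partitions $F$ into a forest whose components each span a window of at most $4h$ consecutive levels, and hence have diameter at most $8h$. Invoking \Cref{lem:hHopTree} on each component with hop parameter $2h$ and $\beta = 4$ yields path collections $\mcP_1,\mcP_2$ of edge-congestion at most $2$ each; we set $\mcP := \mcP_1 \cup \mcP_2$. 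The congestion of $\mcP$ on any edge of $G$ is at most $4$ (and is $0$ on edges outside $F$), and each of its paths is a simple path in a subtree of diameter $8h$ and hence has at most $8h$ hops, giving hop stretch $8$ with respect to the parameter $h$.

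\textbf{Connecting property and main obstacle.} The remaining --- and technically most interesting --- step is to check that $\mcP$ is a valid $h$-hop connector on $G$. By \Cref{lem:WLOGTree} it suffices to show that any two $w_i, w_{i+1} \in W$ joined by a $2h$-hop path $P$ in $F$ are connected in $\mcP$. The path $P$ passes through $\mathrm{LCA}(w_i,w_{i+1}) =: w$ and visits exactly the levels $\ell(w),\ell(w)+1,\ldots,\max(\ell(w_i),\ell(w_{i+1}))$. Offset $j$ cuts $P$ precisely when some cut level of offset $j$ lies in the integer interval $[\ell(w),\max(\ell(w_i),\ell(w_{i+1}))-1]$, which has length strictly less than $2h$. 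Since the combined cut-level set $\{2kh : k \ge 1\}$ of the two offsets has consecutive elements exactly $2h$ apart, this short window contains a cut level of at most one offset; the other offset leaves $P$ entirely inside a single subtree, inside which \Cref{lem:hHopTree} connects $w_i$ to $w_{i+1}$ (they are within $2h$ hops of each other in that subtree and hence adjacent in its $2h$-hop connectivity graph). Concatenating over a witnessing chain $u = w_0,\ldots,w_k = v$ in $(W,\mcP\at{2h}_F(W))$ yields connectivity of $u,v$ in $\mcP$. The main obstacle is precisely this level-arithmetic: the offset spacing $2h$ within the $4h$-periodic partition is chosen exactly so that any $2h$-long level interval escapes the cuts of at least one offset, and it is this geometric fact that allows us to absorb the connecting requirement without blowing up the constant congestion bound.
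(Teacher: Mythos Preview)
Your proposal is correct and follows essentially the same approach as the paper: reduce to a forest via \Cref{lem:WLOGTree}, take two $4h$-spaced level cuttings of each tree offset by $2h$, apply \Cref{lem:hHopTree} with parameter $2h$ and $\beta=4$ on each piece, and union the two resulting path collections. Your level-arithmetic argument that any $2h$-hop tree path survives at least one of the two offsets is exactly the content of the paper's one-line assertion ``this path must be fully contained in some $T_i$ or $T_i'$,'' just spelled out more carefully.
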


\begin{proof}
    By \Cref{lem:WLOGTree} we know that there is a forest $F$ such that $u,v$ are connected in $(W, \mcP^{(h)}_G(W)$ iff $u, v$ are connected in $(W, \mcP\at{2h}_F(W))$. Let $T$ be a tree in this forest and notice that to get an $h$-hop connector on $G$ with hop stretch $8$ and congestion $4$, it suffices to find a $2h$-hop connector on $T$ with hop stretch $4$ and congestion $4$.
    
    We do so as follows. Root $T$ arbitrarily at root $r$ and let $T_1, T_2, \ldots$ be the subtrees resulting from cutting $T$ once every $4h$ levels and let $T_1', T_2', \ldots$ be the subtrees resulting from cutting $T$ every $4h$ levels with an initial offset of $2h$. That is, $T_i = T[V(T_i)]$ and $v \in V(T_i)$ iff $4h(i-1) \leq d_T(v,r) < 4h \cdot i$ and $T_i' = T[V(T_i')]$ where $v \in V(T_i')$ iff $\max(4h(i-1) - 2h, 0) \leq d_T(v,r) < 4h \cdot i - 2h$. Notice that each $T_i$ and $T_i'$ has diameter at most $4 (2h)$. Thus, by \Cref{lem:hHopTree} we know that each $T_i$ and $T_i'$ have $2h$-hop connectors $\mcP_i$ and $\mcP_i'$ with congestion at most $2$ and hop stretch at most $4$. Thus, we let $\mcP_1 := \{ \mcP_i\}_i$ and $\mcP_2 := \{\mcP_i'\}_i$ and we let our $h$-hop connector for $T$ be $\mcP := \mcP_1 \cup \mcP_2$.
    
    Let us argue that $\mcP$ is a $2h$-hop connector on $T$ with hop stretch $8$ and congestion $4$. Our congestion bound is immediate from \Cref{lem:hHopTree} and the fact that each edge occurs in at most $2$ trees among all $T_i$ and $T_i'$. To see why $\mcP$ is connecting notice that if $u,v$ are within $2h$ hops of one another in $T$ by some path $P$ then this path must be fully contained in some $T_i$ or $T_i'$; it follows that $u$ and $v$ will be connected in some $\mcP_i$ or $\mcP_i'$ and so connected in $\mcP$. Lastly, our hop bound is immediate by \Cref{lem:hHopTree} since each $T_i$ and $T_i'$ has diameter at most $4(2h)$.
\end{proof}

Combining \Cref{lem:hHopConst} with \Cref{lem:hHopGivesProj} immediately gives \Cref{thm:projSoln}.

Before proceeding to our applications, we remark on a subtle issue regarding independence and expected distance stretch versus worst case distance stretch. \Cref{thm:projSoln} bounded the cost of projecting a subgraphs of $G$ onto a partial tree embedding of $G$ based on the tree embedding's worst-case distance stretch; one might naturally wonder if similar results are possible in terms of the expected distance stretch of a distribution over partial tree embeddings. Here, dependence issues and the partialness of our embeddings work against us. Specifically, one would have to argue that $T(H,h)$---and, in particular, the relevant $h$-hop connector for $T(H, h)$---has low cost in expectation where $(T, \cdot)$ is drawn from a distribution. However, while it is true that for a fixed $H$ and $T$ the relevant $h$-hop connector for $H$ and $T$ has low cost in expectation over the entire distribution of tree embeddings, it need not be the case that this $h$-hop connector has low cost when we condition on the fact that $T$ is the tree we drew from our distribution. In short, \Cref{lem:hHopGivesProj} seems to fail to hold for the expectation case.

\section{Applications of $h$-Hop Partial Tree Embeddings}\label{sec:firstAppl}

In this section we apply our embeddings of $d^{(h)}$ to give approximation algorithms for hop-constrained versions of several well-studied network design problems; namely, oblivious hop-constrained Steiner forest, hop-constrained group Steiner tree, hop-constrained $k$-Steiner tree and hop-constrained oblivious network design. If unspecified, $\OPT$ will stand for the optimal value of the relevant hop-constrained problem throughout this section. We improve our results for hop-constrained group Steiner tree in a later section (\Cref{sec:groupSTBetter}).

\subsection{Oblivious Hop-Constrained Steiner Forest}\label{sec:steinFor}



In this section we give our approximation algorithms for oblivious hop-constrained Steiner forest. While we give our results for oblivious hop-constrained Steiner forest, it is easy to see that an approximation algorithm for the oblivious version gives an approximation algorithm with the same approximation ratios for the online and offline versions of the problem; to our knowledge nothing was known for any of these variants prior to our work.

\textbf{Problem:} In Steiner forest we are given a weighted graph $G = (V, E, w)$.
\begin{itemize}
    \item \textbf{Offline:} In offline Steiner forest we are also given a collection of pairs of nodes $\{(s_i, t_i)\}_i$. Our goal is to find a subgraph $H \subseteq G$ so that every $s_i$ is connected to every $t_i$ in $H$.
    \item \textbf{Online:} In online Steiner forest in each time step $t=1,2,\ldots$ a new pair of vertices $(u_t, v_t)$ is revealed and we must maintain a solution $H_t$ for each $t$ where $H_{t-1} \subseteq H_t$ which connects pairs in $\{(u_1, v_1), \ldots, (u_t, v_t)\}$. 
    \item \textbf{Oblivious:} In oblivious Steiner forest we must specify a path $P_{uv}$ for each pair of vertices $(u,v) \in V \times V$ before seeing any demands. The demands $\{(s_i, t_i)\}_i$ are then revealed, inducing our solution $H := \bigcup_{i}P_{s_i t_i}$.
\end{itemize}
In all three problems the cost of our solution $H$ is $w(H) := \sum_{e \in E(H)}w(e)$. In the oblivious and offline versions, our approximation ratio is $w(H) / \OPT$ where $\OPT$ is the cost of the optimal offline solution for the given demand pairs. The competitive ratio of our solution in the online case is $\max_t w(H_t)/ \OPT_t$ where $\OPT_t$ is the minimum cost subgraph of $G$ connecting pairs in $\{(u_1, v_1), \ldots, (u_t, v_t)\}$. 

In the hop-constrained versions of each of these problems we are additionally given a hop constraint $h \geq 1$ and if $(s_i, t_i)$ is a demand pair then our solution $H$ must satisfy $\hop_H(s_i, t_i) \leq h$ for all $i$. The optimal solution against which we measure our approximation ratio is similarly hop-constrained.

Notice that, unlike in the Steiner forest problem where we may assume without loss of generality that each connected component of $H$ is a tree, in hop-constrained Steiner forest each connected component of $H$ might not be a tree.

\textbf{Related Work:} We give some brief highlights from work in Steiner forest and hop-constrained Steiner forest: while NP-hard \cite{agrawal1995trees} gave the first constant approximation for offline Steiner forest; \cite{berman1997line} gave an (optimal) $O(\log k)$ approximation for online Steiner forest and \cite{gupta2006oblivious} gave the first non-trivial approximation algorithm for oblivious Steiner forest, an $O(\log ^ 2 n)$ approximation. There has also been quite a bit of work on approximation algorithms for $h$-spanners which can be seen as a special case of offline hop-constrained Steiner forest; see, for example, \cite{dinitz2016approximating} and references therein. Notably for our purposes, \cite{elkin2000strong} and \cite{dinitzmin} show that unless $\text{NP} \not \subseteq \text{BPTIME}(2^{\poly \log n})$ hop-constrained Steiner forest admits no $O(2^{\log ^ {1-\epsilon}n})$ approximation; this immediately rules out the possibility of a poly-log (unicriteria) approximation for hop-constrained Steiner forest.

\textbf{Algorithm:} Roughly, our algorithm follows the usual tree-embedding template: we first apply our $h$-hop partial tree embeddings to reduce oblivious hop-constrained Steiner forest to oblivious Steiner forest on a tree; we then observe that oblivious Steiner forest is trivially solvable on trees and project our solution back to $G$. The only minor caveats are: (1) since our tree embeddings will only embed a constant fraction of nodes, we must repeat this process $O(\log n)$ times and (2) for each tree embedding we must use \Cref{thm:projSoln} to argue that there is a cheap, feasible solution for the relevant Steiner forest problem on each tree.

Formally, our algorithm to compute our solution $H$ is as follows. We begin by applying \Cref{thm:mainEmbed} to sample $8h$-hop partial tree embeddings $T_1, T_2, \ldots, T_k$ where $k := O(\log n)$ for a sufficiently large hidden constant, $\eps = .1$ and an arbitrary root. Given $u,v \in V$, assign the pair $(u,v)$ to an arbitrary $T_j$ such that $u,v \in V(T_j)$ (we will argue that such a $T_j$ exists with high probability). Next, we let our path for $u, v$ be $P_{uv} := (T_j)^G_{uv}$ the projection of the tree path between $u$ and $v$ onto $G$.

We now give the analysis of our algorithm.
\begin{restatable}{theorem}{steinFor}\label{thm:SFAlg}
    There is a poly-time algorithm which given an instance of $h$-hop-constrained oblivious Steiner forest returns a collection of paths such that the induced solution $H$ for any demand set satisfies $w(H) \leq O(\OPT \cdot \log ^ 3 n )$ and $\hop_H(s_i, t_i) \leq O(h \cdot \log ^ 3 n)$ with high probability.
\end{restatable}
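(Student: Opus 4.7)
\begin{proofsketch}
The plan is to follow the standard tree-embedding template using the $h$-hop partial tree embeddings of \Cref{thm:mainEmbed}, charging the algorithm's cost against the optimal hop-constrained Steiner forest via \Cref{thm:projSoln}. I would first verify that the assignment in the algorithm is well-defined with high probability: for a fixed pair $u,v$, a union bound over the two exclusion events gives $\Pr[u,v \in V(T_j)] \ge 1 - 2\eps \ge 0.8$ for each $T_j$, and since the $k = O(\log n)$ embeddings are sampled independently, the probability that no $T_j$ contains both endpoints is at most $0.2^k = 1/\poly(n)$ for a sufficiently large hidden constant in $k$. A union bound over all $\binom{n}{2}$ pairs then ensures that, with high probability, every pair has a valid $T_j$ and every $P_{uv}$ is well defined.

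The hop bound is essentially immediate: each $T_j$ is an $8h$-hop partial tree embedding with hop stretch $\beta = O(\log^3 n)$, so for every demand pair the path we commit to satisfies $\hop_H(s_i, t_i) \le \hop(P_{s_i t_i}) = \hop\bigl((T_{j(i)})^G_{s_i t_i}\bigr) \le 8\beta h = O(h \log^3 n)$.

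The main work is the cost bound. Let $H^*$ be an optimal $h$-hop Steiner forest for the revealed demands, so $w_G(H^*) = \OPT$, and fix a tree $T_j$. For each demand pair $(s_i, t_i)$ assigned to $T_j$, both endpoints lie in $V(T_j)$ and are within $h$ hops in $H^*$, so by \Cref{dfn:proj} the unique tree path between them in $T_j$ lies entirely in $T_j(H^*, h)$. Hence
\[
\bigcup_{i:\, j(i) = j} (T_j)^G_{s_i t_i} \;\subseteq\; \bigcup_{e \in E(T_j(H^*, h))} (T_j)^G_e,
\]
and since $w_G((T_j)^G_e) \le w_{T_j}(e)$ by the definition of a partial tree embedding, the total weight of the $G$-paths contributed by tree $T_j$ is at most $w_{T_j}(T_j(H^*, h))$. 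Applying \Cref{thm:projSoln} with worst-case distance stretch $\alpha = O(\log^2 n)$ gives $w_{T_j}(T_j(H^*, h)) \le 4\alpha \cdot \OPT$, and summing over $k = O(\log n)$ trees yields $w(H) \le 4\alpha k \cdot \OPT = O(\OPT \log^3 n)$, as required.

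The only genuinely nontrivial ingredient is the appeal to \Cref{thm:projSoln}: an FRT-style edge-by-edge charging breaks because the embedding is partial, and the whole purpose of the $h$-hop connector machinery developed earlier is precisely to bound $w_{T_j}(T_j(H^*, h))$ against $w_G(H^*)$ despite this partialness. Once that projection theorem is in hand, everything else is bookkeeping on top of the algorithmic template above.
\end{proofsketch}
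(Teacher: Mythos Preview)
Your proposal is correct and follows essentially the same approach as the paper: sample $O(\log n)$ partial tree embeddings, charge the cost on each tree to $T_j(H^*,h)$ via \Cref{thm:projSoln}, and sum over the trees. The only minor difference is that the paper establishes coverage by a per-vertex Chernoff bound (each $v$ lies in at least $0.8k$ trees) followed by pigeonhole on pairs, whereas you argue directly per pair using independence of the $T_j$; both are valid and yield the same conclusion.
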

\begin{proof}
    We use the above algorithm. We begin by arguing that $H$ connects every $s_i$ to $t_i$ for every $i$ with high probability with a path of at most $O(\log ^ 3 n \cdot h)$ edges. Fix a vertex $v$. A standard Chernoff-and-union-bound-type argument shows that $v$ is in at least $.8k$ of the $T_j$ with high probability. Specifically, let $X_j$ be the random variable which indicates if $v$ is in $V(T_j)$, let $X := \sum_j X_j$ and apply a Chernoff bound to $X$. 
    
    Taking a union bound over all $v$ we have that with high probability every $v$ is in at least $.8k$ of the $T_j$. Since we have $k$ total $T_j$, by the pigeonhole principle it follows that any pair of vertices $(s_i, t_i)$ simultaneously occur in at least $.6k$ of the $T_j$, meaning that for each such pair there is a $T_j$ where we buy $(T_j)_{s_it_i}^G$ and so $s_i$ will be connected to $t_i$ in our solution. Since $\hop((T_j)_{s_it_i}^G) \leq O(h \cdot \log ^ 3 n)$ by \Cref{thm:mainEmbed}, it follows that $\hop_H(s_i, t_i) \leq O(h \cdot \log ^ 3 n)$.
    
    We next argue that our solution satisfies the stated cost bound. Let $H_{T_j}$ be the minimal subgraph of $T_j$ connecting all pairs assigned to $T_j$ and let $H_j := \bigcup_{e \in H_{T_j}} (T_j)^G_e$ be the projection of $H_{T_j}$ onto $G$. Notice that it suffices to argue that $w_{T_j}(H_{T_j}) \leq O(\OPT \cdot \log ^ 2 n)$ for every $j$ since if this held we would have by \Cref{thm:mainEmbed} that the cost of our solution is $w(H) \leq \sum_j w(H_j) \leq \sum_j \sum_{e \in H_{T_j}} w((T_j)_e^G) \leq \sum_j \sum_{e \in H_{T_j}} w_{T_j}(e) = \sum_j w_{T_j}(H_{T_j}) \leq O(\OPT \cdot \log ^ 3 n)$.  However, applying \Cref{thm:projSoln} to the optimal solution $H^*$ on $G$ shows that $T(H^*, h)$ is a feasible solution for the Steiner forest problem on $T_j$ which connects all pairs assigned to $T_j$ with cost at most $O(\log ^ 2 n \cdot \OPT)$. Since $H_{T_j}$ is the optimal solution for such a Steiner forest problem, it follows that $w_{T_j}(H_{T_j}) \leq O(\OPT \cdot \log ^ 2 n)$ as required.
\end{proof}

\subsubsection{Bicriteria Min-Cost Spanner Approximations}
We end this section by remarking that our hop-constrained Steiner forest algorithm gives new bicriteria approximation algorithms for spanner problems. Notably, the aforementioned $\Omega(2^{\log ^ {1 - \epsilon}n})$ hardness of approximation reductions break down for bicriteria approximation algorithms. For this reason, \cite{chlamtavc2016lowest} state the following regarding bicriteria approximation algorithms for spanner problems:

\begin{quoting}
    Obtaining good bicriteria approximations, or proving that they cannot exist, is an extremely interesting area for future research...
\end{quoting}

As a corollary to our hop-constrained Steiner forest problem result, we give a new such bicriteria approximation algorithms for spanners. Specifically, in the minimum cost client-server $h$-spanner problem we are given a client graph $G_c = (V, E_c)$ and a weighted server graph $G_s = (V_s, E_s, w)$ and integer $h \geq 1$. We must find a subgraph $H$ of $G_s$ which minimizes $w(H)$ subject to the constraint that for each $\{u, v\} \in E_c$ we have $\hop_H(u, v) \leq h$. \cite{elkin1999client} and \cite{elkin2005approximating} studied the unit cost version of this problem for $h= 2$ and $h > 3$, giving bicriteria algorithms in the latter, but, to our knowledge no (poly-log, poly-log) bicriteria approximation algorithms are known for either the unit-cost version of this problem or the min-cost spanner problem (i.e. this problem when $E_s = E_c$).
 
By creating an offline hop-constrained Steiner forest problem which has a demand pair for each client edge, it is easy to see that \Cref{thm:SFAlg} gives such a bicriteria approximation algorithm for min-cost client-server $h$-spanner.
\begin{corollary}
    There is a poly-time algorithm for min-cost client-server $h$-spanner which returns an $H \subseteq G_s$ where $w(H) \leq O(\OPT \cdot \log ^ 3 n)$ and for each $\{u, v\} \in E_c$ we have $\hop_H(u, v) \leq O(h\cdot \log ^ 3 n)$.
\end{corollary}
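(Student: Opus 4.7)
The plan is a direct reduction from min-cost client-server $h$-spanner to offline hop-constrained Steiner forest, followed by invocation of \Cref{thm:SFAlg}. Given an instance $(G_c, G_s, h)$ of the spanner problem, I would construct an instance of offline hop-constrained Steiner forest on the weighted graph $G_s = (V_s, E_s, w)$ with hop constraint $h$ and demand set $\mathcal{R} := \{(u,v) : \{u,v\} \in E_c\}$. The text preceding the corollary already observes that the oblivious algorithm of \Cref{thm:SFAlg} yields an offline algorithm with the same guarantees (since oblivious is a strictly stronger model), so this reduction delivers, in polynomial time, a subgraph $H \subseteq G_s$ satisfying the bicriteria bounds of \Cref{thm:SFAlg} on the Steiner forest instance.

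The key sanity check is that the optimum $\OPT_{\mathrm{SF}}$ of the constructed Steiner forest instance is at most $\OPT$, the optimum of the spanner instance. This is immediate: any feasible spanner $H^*$ satisfies $\hop_{H^*}(u,v) \leq h$ for every $\{u,v\} \in E_c$ by definition, so $H^*$ connects each demand pair in $\mathcal{R}$ via an $h$-hop path and is therefore a feasible Steiner forest solution of the same cost, giving $\OPT_{\mathrm{SF}} \leq \OPT$.

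Combining the two observations, \Cref{thm:SFAlg} guarantees (with high probability) that
\[
    w(H) \leq O(\OPT_{\mathrm{SF}} \cdot \log^3 n) \leq O(\OPT \cdot \log^3 n),
\]
and that $\hop_H(u,v) \leq O(h \log^3 n)$ for every $(u,v) \in \mathcal{R}$, which is exactly the claimed bicriteria bound since $\mathcal{R}$ is in bijection with $E_c$. There is essentially no obstacle here: the corollary is a straightforward repackaging of \Cref{thm:SFAlg} under the client-server spanner vocabulary. The only points worth flagging in the writeup are (i) that the oblivious-to-offline translation preserves both approximation factors (which is noted at the start of \Cref{sec:steinFor}) and (ii) that the high-probability guarantee of \Cref{thm:SFAlg} carries over unchanged, which it does trivially.
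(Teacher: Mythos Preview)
Your proposal is correct and matches the paper's approach exactly: the paper simply notes that creating a demand pair for each client edge yields an offline hop-constrained Steiner forest instance to which \Cref{thm:SFAlg} applies, and your writeup spells out this reduction together with the trivial observation that $\OPT_{\mathrm{SF}} \le \OPT$.
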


\subsection{Hop-Constrained Group Steiner Tree}\label{sec:groupST}

As both set cover and Steiner tree are special cases of it, the group Steiner tree problem is one of the most general covering problems. In this section, we give $(O(\poly \log n), O(\poly \log n))$ bicriteria approximation algorithms for the hop-constrained variant of group Steiner tree. We later give an improved approximation based on ``$h$-hop repetition tree embeddings'' which build on our $h$-hop partial tree embeddings. However, we include this result to highlight the fact that $h$-hop partial tree embeddings alone are sufficient for solving many hop-constrained problems.

\textbf{Problem:} In the group Steiner Tree problem we are given a weighted graph $G = (V, E, w)$ as well as pairwise disjoint groups $g_1, g_2, \ldots, g_k \subseteq V$ and root $r \in V$. We let $N:= \max_i |g_i|$. Our goal is to find a tree $T$ rooted at $r$ which is a subgraph of $G$ and satisfies $T \cap g_i \neq \emptyset$ for every $i$. We wish to minimize our cost, $w(T) := \sum_{e \in E(T)} w(e)$. In hop-constrained group Steiner tree we are additionally given a hop bound $h \geq 1$ and we must ensure that $\hop_T(g_i, r) \leq h$ where $\hop_T(g_i, r) := \min_{v \in g_i \cap V(T)} \hop_T(v, r)$.\footnote{The assumption that the tree is rooted in group Steiner tree is without loss of generality as we may always brute-force search over a root. Similarly, the assumption that all groups are pairwise disjoint is without loss of generality since if $v$ is in groups $\{g_1, g_2, \ldots \}$ then we can remove $v$ from all groups and add vertices $v_1, v_2, \ldots$ to $G$ which are connected only to $v$ so that $v_i \in g_i$ and $w((v, v_i)) = 0$ for all $i$. For the unrooted hop-constrained group Steiner tree problem we might define the problem as unrooted group Steiner tree but with the additional constraint that $T$ has diameter at most $h$; all of our results will hold for this unrooted version of the problem though its worth noting that in this case the optimal solution is no longer a tree without loss of generality. The aforementioned transformation also allows us to assume that groups are pairwise disjoint in hop-constrained group Steiner tree at a possible loss of an additive $1$ in our hop stretch.}

Unlike hop-constrained Steiner forest, the optimal solution for hop-constrained Steiner tree is, in fact, a tree. In particular, if $H$ is a feasible solution, then the shortest path tree on $H$ rooted at $r$ is also a feasible solution of cost at most the cost of $H$.

\textbf{Related Work:} \cite{garg2000polylogarithmic} gave the first randomized poly-log approximation for offline group Steiner tree using linear program rounding. \cite{charikar1998rounding} derandomized this result and \cite{chekuri2006greedy} showed that a greedy algorithm achieves similar results. \cite{demaine2009node} gave improved algorithms for group Steiner tree on planar graphs


    \textbf{Algorithm:} Our algorithm will reduce solving hop-constrained group Steiner tree to a series of group Steiner tree problems on trees. For this reason, we restate the following known result for group Steiner tree on trees.
\begin{theorem}[\cite{garg2000polylogarithmic}]\label{thm:SGonTrees}
    There exists a randomized algorithm which with high probability given an instance of group Steiner tree on a tree returns a solution of cost at most $O(\OPT \cdot \log N \log k)$.
\end{theorem}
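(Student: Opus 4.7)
The plan is to follow the classical LP-rounding approach of Garg, Konjevod, and Ravi, since this theorem is essentially their result. The first step is to formulate a natural LP relaxation on the input tree $T$ rooted at $r$: introduce a variable $x_e \in [0,1]$ for each $e \in E(T)$ (interpreted as the probability that $e$ is bought) and impose, for each group $g_i$, the cut constraint $\sum_{e \in S} x_e \ge 1$ for every edge cut $S$ separating $r$ from $g_i$. The objective is to minimize $\sum_{e \in E(T)} w(e)\, x_e$. Because $T$ is a tree, these constraints can be encoded compactly (e.g.\ as a polynomial-size flow LP with one unit of flow from $r$ to each group), so the LP can be solved in polynomial time, and its optimum is a lower bound on $\OPT$.

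Next I would apply the standard top-down randomized rounding. Process tree edges in a BFS order from the root: an edge $e$ whose parent edge is $p(e)$ is included in the output subtree with probability $x_e / x_{p(e)}$, conditioned on $p(e)$ having been included (edges incident to $r$ are included with probability $x_e$ directly). A straightforward induction shows that each edge $e$ is included with marginal probability exactly $x_e$, so the expected cost of the rounded subtree equals the LP objective and is therefore at most $\OPT$.

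The key analytical step is to lower-bound, for a single group $g_i$, the probability that the rounded subtree intersects $g_i$. The Garg--Konjevod--Ravi argument shows this probability is $\Omega(1/\log|g_i|) = \Omega(1/\log N)$ by a martingale/second-moment argument on the tree: one considers the random variable giving the total LP mass of $g_i$ reached by the rounding and controls its variance through the tree structure. This is the main technical obstacle; everything else is routine.

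Finally I would boost the success probability by repetition. Scale up the LP solution by $\Theta(\log N \log k)$ (equivalently, run the rounding this many independent times and take the union of the resulting subtrees). Each group is then missed with probability at most $1/\poly(k)$, so by a union bound over the $k$ groups all groups are simultaneously covered with high probability, and by linearity of expectation plus a standard concentration argument the total cost is $O(\OPT \cdot \log N \log k)$ with high probability. To convert the union of subtrees back into a single tree one simply takes a spanning subtree of the produced connected subgraph containing $r$, which can only decrease the cost.
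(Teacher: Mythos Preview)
Your outline is a faithful sketch of the Garg--Konjevod--Ravi LP-rounding proof, and that is exactly what the paper does too: it does not prove \Cref{thm:SGonTrees} at all but simply cites \cite{garg2000polylogarithmic} and uses the result as a black box. So your proposal is correct and matches the intended (cited) argument.
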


    We might naively hope to realize the usual tree embedding template: sample $O(\log n)$ partial tree embeddings using \Cref{thm:mainEmbed}, apply \Cref{thm:SGonTrees} to the resulting trees and then project the solutions back to the input graph. However, things are not so simple: since our partial tree embeddings only embed a subset of nodes, the optimal solution on the group Steiner tree problem on each of our partial tree embeddings has no guarantees of being close to optimal. For example, suppose $g_i = \{v_i, v_i'\}$ where $w((r, v_i)) = \epsilon$ for some small $\epsilon > 0$ and $w((r, v_i')) = 1$ for each group $g_i$. Then, if our partial tree embedding embeds $v_i'$ but not $v_i$, connecting $g_i$ to $r$ on our tree will be arbitrarily more expensive than $\OPT$.

    We solve this issue by relaxing the instance of hop-constrained Steiner tree that we solve on each of our trees. Specifically, we will randomly merge groups so that there always exists a low cost group Steiner solution. For example, in the above example we could randomly partition groups into super-groups each consisting of $\Theta(\log n)$ groups. If we then solved group Steiner tree on our tree on these super-groups we would know that---by a standard Chernoff-union bound proof---every super-group has at least one constituent $v_i$ embedded on the tree and so the optimal group Steiner tree problem restricted to embedded nodes on our tree still has low cost.

Formally, our algorithm for constructing our solution $T$ is as follows. Initially every group is active; we let $a$ be the number of active groups throughout our algorithm.
\begin{enumerate}
    \item For phase $j \in [1000 \log n \log k]$ or until $a \leq 10 \log n$
    \begin{enumerate}
        \item For iteration $\ell \in [1000 \log n]$
        \begin{enumerate}
        \item Apply \Cref{thm:mainEmbed} with hop bound $8h$, $\eps = .1$ and root $r$ on $G$ to get partial tree embedding $T_{j\ell}$
        \item Let $(g_{j\iota}')_{\iota=1}^{k'}$ form a uniformly random partition of the vertices of all active groups where $k' = \lceil \frac{a}{10 \log n}\rceil$ and an active $g_i \subseteq g_{j\iota}'$ with probability $1/k'$
        \item Apply \Cref{thm:SGonTrees} to the group Steiner tree instance on $T_{j\ell}$ with root $r$ and groups $(g_{j\iota}')_{\iota=1}^{k'}$ to get back solution $H_{j\ell}$
        \end{enumerate}
    \item Let $j^* := \argmin_\ell w_{T_{jl}}(H_{j\ell})$
    \item Add $T_e^G$ to $T$ for every $e \in E(H_{jj^*})$
    \item Set all $g_i$ which are now connected to $r$ by $T$ as inactive
    \end{enumerate}
    \item For the up to $10 \log n$ remaining active groups we add to $T$ the shortest path in $G$ from $r$ to each such group with at most $h$ hops
    \item Lastly, we set $T$ to be a BFS tree on $T$ rooted at $r$ to ensure that $T$ is a tree
\end{enumerate}

    We apply \Cref{thm:projSoln} and a standard Chernoff-union-bound-type argument to argue that each of our instances of group Steiner tree on a tree have a cheap solution.
    \begin{lemma}\label{lem:HBGST}
        Fix a phase $j$ and let $\OPT_\ell$ be the cost of the optimal group Steiner tree instance on $T_{j\ell}$ with root $r$ and groups $(g_{j\iota}')_{i=1}^{k'}$. Then $\min_\ell \OPT_\ell \leq O(\log ^ 2 n \cdot \OPT)$ with probability at least $1- \frac{1}{n^5}$.
    \end{lemma}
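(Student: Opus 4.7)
The plan is to fix an optimal solution $H^*$ to the original hop-constrained group Steiner tree instance on $G$ (so $w_G(H^*) = \OPT$) and, for each active group $g_i$, pick a representative $v_i^* \in g_i$ that is within $h$ hops of $r$ in $H^*$; such a vertex exists by feasibility of $H^*$. The candidate solution on $T_{j\ell}$ will be the projection $T_{j\ell}(H^*, h)$ from \Cref{dfn:proj}. Applying \Cref{thm:projSoln} to the $8h$-hop partial tree embedding $T_{j\ell}$---which has worst-case distance stretch $\alpha = O(\log^2 n)$ by \Cref{thm:mainEmbed}---immediately yields $w_{T_{j\ell}}(T_{j\ell}(H^*, h)) \leq 4\alpha \cdot \OPT = O(\log^2 n) \cdot \OPT$, so only \emph{feasibility} remains to be checked.

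By the $h$-hop connectivity preservation built into $T_{j\ell}(H^*, h)$, whenever $v_i^* \in V(T_{j\ell})$ (and noting that $r \in V(T_{j\ell})$ always), the unique tree path between $r$ and $v_i^*$ lies inside $T_{j\ell}(H^*, h)$, so $v_i^*$ is connected to $r$ in the projection. Call an active group $g_i$ \emph{good} in iteration $\ell$ if $v_i^* \in V(T_{j\ell})$; super-group $g_{j\iota}'$ is then covered as soon as it contains at least one good active group. So it suffices to show that, with constant probability, every super-group contains a good active group.

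To bound the per-iteration success probability, note that by the $\epsilon = 0.1$ exclusion guarantee of \Cref{thm:mainEmbed}, the expected number of bad active groups is at most $0.1 a$, so by Markov's inequality at most $0.5 a$ groups are bad with probability at least $0.8$. Crucially, the super-group partition is drawn independently of the tree $T_{j\ell}$, so we may condition on the set of good groups (of size $\geq 0.5 a$) and treat the partition as fresh randomness. For any fixed super-group $\iota$, the probability that it contains no good active group is at most $(1 - 1/k')^{0.5 a} \leq \exp(-0.5 a / k')$; since we are inside the inner loop we have $a > 10 \log n$, so $k' \leq a/(5 \log n)$ and hence this is $\exp(-\Omega(\log n)) \leq n^{-c}$ for a sufficiently large constant $c$. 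A union bound over the $k' \leq n$ super-groups, combined with the Markov bound, yields that in a single iteration $\ell$ the projection is feasible with probability at least $1/2$.

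Finally, the $1000 \log n$ iterations within phase $j$ are mutually independent (fresh tree and fresh partition each time), so the probability that all of them fail is at most $(1/2)^{1000 \log n} \ll n^{-5}$, establishing the claim. The main subtlety is that the events $\{v_i^* \in V(T_{j\ell})\}$ are \emph{not} independent across $i$ within a single tree embedding, which prevents a direct Chernoff-style concentration on the number of good groups per super-group. We sidestep this by applying Markov to the \emph{count} of bad groups and only then exploiting the independence between the tree sample and the super-group partition to finish with a simple union bound over super-groups.
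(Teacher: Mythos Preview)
Your proof is correct and follows essentially the same approach as the paper: both project $H^*$ via $T_{j\ell}(H^*,h)$, bound its cost with \Cref{thm:projSoln}, use Markov on the number of ``bad'' groups (since the inclusion events are dependent), and then exploit the independence of the random super-group partition to union-bound over super-groups before amplifying across the $1000\log n$ iterations. The only cosmetic differences are that you fix a single representative $v_i^*$ per group (the paper tracks the event that \emph{some} vertex of $g_i\cap V(T^*)$ is embedded) and you bound $\Pr[\text{no good group in }g'_{j\iota}]$ directly by $(1-1/k')^{0.5a}$ rather than via a Chernoff bound---both slightly more streamlined than the paper's version.
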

    \begin{proof}
        Since we have fixed a $j$, for ease of notation we let $T_\ell := T_{j\ell}$ and $g_{\iota}' = g_{j \iota}'$ for the remainder of the proof.
        
        We will construct a solution $T'_\ell$ for every $\ell$ which has cost at most $O(\log ^ 2 n \cdot \OPT)$ and which is feasible for the aforementioned group Steiner instance with probability at least $\frac{1}{3}$. Our claim will then immediately follow from this and the fact that the feasibility of each $T_\ell'$ will be independent, meaning with probability at least $1 - (\frac{1}{3})^{1000 \log n} \geq 1 - \frac{1}{n^5}$ there is some feasible $T_\ell'$ with cost at most $O(\log ^ 2 n \cdot \OPT)$.
        
        Fix an arbitrary $\ell$. Let $T^*$ be the optimal solution to our $h$-hop-constrained group Steiner tree problem on $G$ and let $W := V(T_{\ell})$ be all vertices embedded by $T_{\ell}$. Let $T_{\ell}' := T_j(T^*, h)$ where $T_j(T^*, h)$ is as defined in \Cref{dfn:proj}. By \Cref{thm:projSoln} we have $w_{T_j}(T_j(T^*, h)) \leq O(\log ^ 2 n \cdot \OPT)$ as desired.

        We now argue that $T'_\ell$ is feasible with probability at least $\frac{1}{3}$. It suffices to show that some vertex from $g_\iota'$ is in $W \cap V(T^*)$ for every $\iota \in [k']$ with probability at least $\frac{1}{3}$.
        
        Let $\mcI := \{i : g_i \cap W \cap V(T^*) \neq \emptyset \}$ be all groups with at least one embedded vertex from the optimal solution. We know by \Cref{thm:mainEmbed} and linearity of expectation that $\E[|\mcI|] \geq .9 a$ but since $|\mcI| \leq a$, it follows by Markov's inequality that $\Pr(|\mcI| \geq .8 a) \geq \frac{1}{2}$.
        
        
        Fix a super-group $g_\iota'$. For group $g_i$, let $X_i$ be the indicator of whether $g_i \subseteq g_\iota'$. Similarly, let $\vec{\mcI}$ be a fixed value in the support of $\mcI$ and let $X^{(\vec{\mcI})}_\iota := \sum_{i \in \vec{\mcI}} X_i$. Notice that $\E[X_\iota^{(\vec{\mcI})}] \geq |\vec{\mcI}| \cdot \frac{1000 \log n}{a}$ and that for a fixed $\vec{\mcI}$ if $\mcI = \vec{\mcI}$ and $X_\iota^{(\vec{\mcI})} \geq 1$ then $T'_\ell$ will connect $g_\iota'$ to $r$.
        
        Since for a fixed $\vec{\mcI}$ we know that each $X_i$ in $\sum_{i \in \vec{\mcI}} X_i$ is independent, a Chernoff-bound shows that
        \begin{align*}
            \Pr\left(X_\iota^{(\vec{\mcI})} \leq |\vec{\mcI}| \cdot \frac{900 \log n}{a}\right) \leq \exp \left(- \frac{(.1)^2 \cdot 1000 \log n \cdot |\vec{\mcI}|}{3a} \right) \leq \exp\left( - \frac{3|\vec{\mcI}| \cdot \log n }{a}\right)\\
        \end{align*}
        
        It follows that if $|\vec{\mcI}| \geq .8a$ we have that $\Pr \left(X_{\vec{\mcI}} = 0 \right) \leq \frac{1}{n^2}$. Combining this with a union bound and the fact that $|\mcI|$ is at least $.8a$ with probability at least $\frac{1}{2}$, we have that $T'_\ell$ contains a vertex from every $g_\iota'$ except with probability
        \begin{align*}
        \sum_{\vec{\mcI}} \Pr(\mcI = \vec{\mcI}) \cdot \Pr(X_\iota^{(\vec{\mcI})} = 0 \text{ for some $\iota$}) &\leq \sum_{(\vec{\mcI})} \Pr(\mcI = \vec{\mcI}) \sum_{\iota} \Pr(X_\iota^{(\vec{\mcI})} = 0)\\
        & = \sum_{\vec{\mcI} : |\vec{I} < .8 a|} \Pr(\mcI = \vec{\mcI}) \sum_{\iota} \Pr(X_\iota^{(\vec{\mcI})} = 0) \\  & \qquad+ \sum_{\vec{\mcI} : |\vec{I} \geq .8 a|} \Pr(\mcI = \vec{\mcI}) \sum_{\iota} \Pr(X_\iota^{(\vec{\mcI})} = 0)\\
        &\leq \frac{1}{2} + \frac{1}{2n^3}\\
        & \leq \frac{2}{3}.
        \end{align*}        

    \end{proof}

    We conclude with our approximation algorithm for hop-constrained group Steiner tree.
    \begin{restatable}{theorem}{HCGS}
        There is a poly-time algorithm which given an instance of $h$-hop-constrained group Steiner tree returns a tree $T$ such that $w(T) \leq O(\log ^ 3 n \log N \log ^2 k \cdot \OPT)$ and $\hop_H(g_i, r) \leq O(h \cdot \log ^ 3 n)$ with high probability for every $g_i$.
    \end{restatable}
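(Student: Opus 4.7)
The plan is to separately establish the cost and hop bounds by combining the per-phase guarantees of Lemma \ref{lem:HBGST} and Theorem \ref{thm:SGonTrees} with a union bound across all $O(\log n \log k)$ phases, and then to argue that the phase budget suffices to leave only $O(\log n)$ residual groups which can be absorbed by a cheap direct connection in $G$.

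For the per-phase cost, fix a phase $j$. Lemma \ref{lem:HBGST} guarantees, with probability at least $1 - 1/n^5$, that some iteration $\ell$ has $\OPT_\ell \le O(\log^2 n) \cdot \OPT$ for the super-group GST instance on $T_{j\ell}$. Applying Theorem \ref{thm:SGonTrees} on each $T_{j\ell}$ produces, with high probability, a solution of cost at most $O(\log N \log k')\cdot \OPT_\ell$, and since the algorithm picks $j^\ast$ minimizing $w_{T_{j\ell}}(H_{j\ell})$, we obtain $w_{T_{jj^\ast}}(H_{jj^\ast}) \le O(\log^2 n \log N \log k) \cdot \OPT$. Projecting back to $G$ edge-by-edge via $T_e^G$ preserves this cost since $w_G(T_e^G) \le w_T(e)$ by Definition \ref{def:PartialTreeEmbedding}. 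Union-bounding the failure probabilities over all iterations and phases (all polynomially many events) gives the per-phase bound with high probability.

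For the phase count, the feasible super-group GST solution on $T_{j\ell}$ hits every one of the $k' = \lceil a/(10\log n)\rceil$ super-groups, so at least one group per super-group gets deactivated. Thus the active count $a$ shrinks by at least a $1 - 1/(10\log n)$ factor per phase, so after $O(\log n \log k)$ phases at most $10 \log n$ groups remain active, matching the phase budget. Multiplying the per-phase cost by the number of phases yields $O(\log^3 n \log N \log^2 k)\cdot \OPT$. The remaining $\le 10\log n$ groups are each connected via the shortest $h$-hop path to $r$, each of cost at most $\OPT$ (since such a path exists inside the optimal feasible solution), contributing $O(\log n)\cdot \OPT$ which is absorbed. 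The final BFS-tree step does not increase cost.

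For the hop bound, each projected tree path from $r$ to any node in $V(T_{j\ell})$ has $\hop(T_{rv}^G) \le \beta \cdot 8h = O(h \log^3 n)$ by the $8h$-hop partial tree embedding guarantee in Definition \ref{def:HopAndDistanceStretch} and Theorem \ref{thm:mainEmbed}. The direct shortest-path connections of the residual groups contribute at most $h$ additional hops, and the final BFS tree construction only decreases the root-to-node hop distance. Hence $\hop_T(g_i,r) = O(h \log^3 n)$ for every $g_i$. The main technical obstacle is the phase-count argument: one must carefully argue that the feasibility of the super-group GST solution (which hinges on Lemma \ref{lem:HBGST} combined with the correctness of Theorem \ref{thm:SGonTrees}) holds simultaneously across all $O(\log n \log k)$ phases, so that the geometric deactivation rate holds deterministically under the high-probability event; everything else is a routine composition of stretches and approximation factors.
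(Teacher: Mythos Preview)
Your proposal is correct and follows essentially the same approach as the paper's proof: combine Lemma~\ref{lem:HBGST} with Theorem~\ref{thm:SGonTrees} and a union bound to get the per-phase cost $O(\log^2 n \log N \log k)\cdot \OPT$, argue the geometric decrease in active groups to bound the number of phases by $O(\log n \log k)$, handle the residual $O(\log n)$ groups by direct $h$-hop shortest paths, and read off the hop bound from the partial tree embedding guarantees. If anything, your treatment is slightly more explicit than the paper's about why feasibility of the chosen $H_{jj^\ast}$ holds in every phase under the high-probability event.
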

	\begin{proof}
        We use the algorithm described above. We first argue our cost bound. By \Cref{lem:HBGST}, \Cref{thm:SGonTrees} and a union bound over all phases we have that with high probability $w_{T_{jj^*}}(H_{jj^*}) \leq O(\log ^ 2 n \log N \log ^2 k \cdot \OPT)$ for every $j$. Since each of the at most $10 \log n$ shortest paths we buy cost at most $\OPT$, we can apply the properties of our embeddings and sum up over all phases, to see that $w_G(T) \leq O(\log ^ 3 n \log N \log ^2 k \cdot \OPT)$. 
        
        Next, notice that $T$ satisfies the stated hop bounds by \Cref{thm:mainEmbed}.
        
        To see why $T$ connects all groups notice that in a given phase $j$ where we have $a$ unconnected groups and $a \geq 10 \log n$, we newly connect at least $\frac{a}{10 \log n} $ groups. Thus, the number of unconnected groups after this iteration is at most $(1 - \frac{1}{10 \log n})a$. Assume for the sake of contradiction that the number of unconnected groups after $1000 \log n \log k$ phases is more than $10 \log n$. Then, we have that the number of unconnected groups after $1000 \log n \log k$ phases is at most,
        \begin{align*}
        k \cdot \left(1 - \frac{1}{10 \log n} \right)^{1000 \log n \log k} \leq k \cdot e^{- \log k} \leq 1
        \end{align*}
        a contradiction.
    \end{proof}

\subsection{Hop-Constrained $k$-Steiner Tree}\label{sec:HCkST}

In this section we give a bicriteria approximation algorithm for the hop-constrained $k$-Steiner tree problem and a relaxed version of it. Notably, unlike most other problems to which we apply our embeddings, the hop-constrained version of $k$-Steiner tree and its relaxed version have previously been studied under the name ``Shallow-Light $k$-Steiner Trees''~\cite{hajiaghayi2009approximating, khani2011improved}. While both our techniques and prior work yield bicriteria approximation algorithms with polylogarithmic guarantees, our techniques are simpler (i.e., follow directly from the theory of partial tree embeddings), and for the relaxed problem give the best known cost approximation (at the cost of a worse hop stretch than known results).

\textbf{Problem:} Let $G = (V, E, w_G)$ be a weighted graph. Given a terminal set $S \subseteq V$, an integer $1 \le k \le |S|$, and a root $r \in V$ we want to find the connected subgraph $H \subseteq G$ which minimizes $w_G(H) := \sum_{e \in E(H)} w_G(e)$ that contains $r$ and has at least $k$ terminals (i.e., $|V(H) \cap S| \ge k$). In the hop-constrained version, we are additionally given a hop constraint $h \ge 1$ and need to satisfy that the hop diameter of $H$ is at most $h$ (i.e., $\hop_H(u, v) \le h$ for all $u, v \in V(H)$). In the relaxed version, we must find an $h$-hop-diameter subgraph $H$ with at least $k/8$ terminals, but we compare our cost to the optimal solution on $k$-terminals whose value we denote $\OPT$. 


\textbf{Related work:} \cite{hajiaghayi2009approximating} solve the relaxed hop-constrained $k$-Steiner tree with $O(\log n)$ hop stretch and $O(\log^3 n)$ cost approximation. They show that the relaxed and non-relaxed problems are equivalent up to a $O(\log k)$ factor in the cost and they use a black-box reduction to reduce the relaxed problem to a new problem (without hop constraints) called the ``buy-at-bulk $k$-Steiner tree problem'' (which we do not define here), achieving a $O(\log^4 n)$ cost approximation and $O(\log ^ 2 n)$ hop approximation. \cite{khani2011improved} improve the hop-constrained $k$-Steiner tree guarantee to $O(\log n)$ hop stretch and $O(\log^2 n$) cost approximation by improving the buy-at-bulk $k$-Steiner tree cost approximations.


\textbf{Algorithm for the $h$-hop relaxed $k$-Steiner tree problem:} We sample an $8h$-hop partial tree embedding $T$ of $G$ with root $r$, hop stretch $O(\log^3 n)$, worst-case distance stretch $O(\log^2 n)$, and exclusion probability $\frac{1}{4}$ via \Cref{thm:mainEmbed}. Let $H'$ be the optimal $\frac{k}{8}$-Steiner tree solution (without hop constraints and containing the root) on $T$ with the terminal set $V(T) \cap S$, which can be found with a standard (poly-time) dynamic programming algorithm. We return $H := \bigcup_{e \in E(H')} T^G_e$, i.e., the projection of $H'$ back to $G$.

\begin{lemma}
  There is a poly-time algorithm for relaxed hop-constrained $k$-Steiner tree which produces a solution $H$ that contains $r$, at least $k/8$ terminals, and has hop diameter $O(\log^3 n \cdot h)$. With constant probability, $H$ satisfies $w_G(H) \le O(\log^2 n \cdot \OPT)$.
\end{lemma}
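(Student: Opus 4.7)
The plan is to verify each of the four claims about $H$---containing $r$, having at least $k/8$ terminals, bounded hop diameter, and bounded cost---by combining three ingredients: the guarantees of the partial tree embedding from \Cref{thm:mainEmbed}, the projection theorem \Cref{thm:projSoln}, and a simple Markov-type argument applied to the terminal set of the optimal solution.

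First I would handle the deterministic/structural claims. Since $T$ is rooted at $r$ (which by \Cref{thm:mainEmbed} guarantees $r \in V(T)$), the DP returns an $H'$ containing $r$, hence $H \supseteq T^G_{rv} \ni r$ for any terminal $v$ chosen by $H'$. For the hop diameter bound, any two vertices $u,v \in V(H)$ lie on projected paths $T^G_{e}$ and $T^G_{e'}$ for $e,e' \in E(H')$. Walking to the nearest endpoints in $V(T)$ costs at most $\beta h$ hops each by the hop stretch guarantee, and then by \Cref{def:HopAndDistanceStretch}(2), the distance between those endpoints in $G$ via the embedding is at most $\beta h = O(\log^3 n \cdot h)$. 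Summing gives $O(\log^3 n \cdot h)$ hop diameter for $H$.

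Next I would establish the terminal count and cost bound together. Let $H^*$ be the optimal hop-constrained $k$-Steiner tree on $G$ with cost $\OPT$, hop diameter $h$, containing $r$ and $k$ terminals. By the exclusion probability bound $\eps = 1/4$ and linearity of expectation, $\mathbb{E}[|V(T) \cap S \cap V(H^*)|] \geq \tfrac{3}{4}k$. Applying Markov's inequality to the non-negative random variable $k - |V(T) \cap S \cap V(H^*)|$ gives $|V(T) \cap S \cap V(H^*)| \geq k/8$ with constant probability (say, at least $5/7$). Conditioning on this event, consider the projection $T(H^*, h)$ from \Cref{dfn:proj}: by \Cref{thm:projSoln} (applicable since $T$ is an $8h$-hop partial tree embedding with worst-case distance stretch $O(\log^2 n)$), we have $w_T(T(H^*, h)) \leq O(\log^2 n \cdot \OPT)$. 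Moreover, because every pair of embedded terminals in $V(H^*)$ is within $h$ hops of each other in $H^*$, and $r \in V(T) \cap V(H^*)$, the subgraph $T(H^*, h)$ connects $r$ to all of $V(T) \cap S \cap V(H^*)$---a set of size at least $k/8$---and so $T(H^*, h)$ is a feasible $k/8$-Steiner tree on $T$.

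Since $H'$ is the DP-optimal $k/8$-Steiner tree on $T$, it follows that $w_T(H') \leq w_T(T(H^*, h)) \leq O(\log^2 n \cdot \OPT)$, and projecting back yields $w_G(H) \leq \sum_{e \in E(H')} w_G(T^G_e) \leq \sum_{e \in E(H')} w_T(e) = w_T(H') \leq O(\log^2 n \cdot \OPT)$ by \Cref{def:PartialTreeEmbedding}. The conditioning event occurred with constant probability, and conditional on it, $H$ simultaneously satisfies all the claimed properties. The main subtlety I expect is the terminal-count argument: strictly speaking, the DP only finds a feasible solution when $|V(T) \cap S| \geq k/8$, so the ``deterministic'' guarantees in the lemma statement should be read as holding on the same good event that gives the cost bound; the Markov argument above is exactly what supplies this event with constant probability.
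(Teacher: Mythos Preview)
Your proposal is correct and follows essentially the same approach as the paper: apply \Cref{thm:projSoln} to the optimal solution $H^*$ to obtain a cheap feasible $k/8$-Steiner tree on $T$ whenever enough terminals of $H^*$ survive in $V(T)$, bound the survival probability via a Markov-type argument on the excluded terminals, and project back using the embedding's weight-domination property $w_G(T^G_e)\le w_T(e)$. Your identification of the feasibility subtlety (that the DP requires $|V(T)\cap S|\ge k/8$, so the ``deterministic'' guarantees should really be read as holding on the same constant-probability event) and your more explicit hop-diameter argument are both slightly more careful than the paper's own proof.
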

\begin{proof}
  Suppose that $H^*$ is the optimal solution of weight $w_G(H^*) = \OPT$ and hop diameter at most $h$. Furthermore, let $S_\OPT := S \cap V(H^*)$ be the set of terminals in the optimal solution. Since $T$ was sampled from a distribution with exclusion probability $1/4$, we have that $\E[|V(T) \cap S_{\OPT}|] = k/4$, hence we have $\Pr[|V(T) \cap S_{\OPT}| \ge k/8] \ge \frac{k/4 - k/8}{k - k/8} = 1/7$. Furthermore, we can find a ``projection'' $F^* = T(H^*, h) \subseteq T$ of $H^*$ to the tree $T$ (using \Cref{thm:projSoln}) where $w_G(F^*) \le 4\alpha \cdot \OPT$ and $|V(F^*) \cap S| = |V(T) \cap S_{\OPT}| \ge k/8$ with constant probability. Therefore, there exists an optimal solution $F^*$ on $T$ solving the (un-hop-constrained) $k/8$-Steiner tree problem with value at most $4\alpha \cdot \OPT$. Hence $w_G(H') \le 4 \alpha \cdot \OPT$ with constant probability.

  Finally, we project $H'$ back to $G$ to obtain the output $H$ and deduce that the hop diameter is $O(\log^3 n \cdot h)$ and $w_G(H) \le 4 \alpha \cdot OPT$ with constant probability (since the partial embeddings are dominating, i.e., $w_G(T_e^G) \le w_T(e)$), as required.
\end{proof}

\textbf{Algorithm for the (non-relaxed) $h$-hop $k$-Steiner tree problem:} We can easily boost the $k/8$-Steiner algorithm from constant probability to high probability by repeating it $O(\log n)$ times and taking the minimum solution. We can then apply this high-probability algorithm $O(\log k)$ times and take the union of the results as our solution. Note that in each iteration, a $1/8$-fraction of the remaining terminals will be added to the final solution and so by a standard covering argument $O(\log k)$ iterations suffice to cover all terminals. The hop diameter does not increase during the iterations since all solutions share a common root $r$, while the cost of our solution increases by $O(\OPT \cdot \log ^ 2 n)$ in each iteration. This proves the following result.

\begin{lemma}
  There is a poly-time algorithm for the $k$-Steiner tree problem which outputs a solution $H$ with $w_G(H) \le O(\log^2 n \cdot \log k) \cdot \OPT$ and hop diameter at most $O(\log^3 n \cdot h)$ with high probability.
\end{lemma}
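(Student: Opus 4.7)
The plan is to first boost the relaxed $k/8$-Steiner algorithm of the previous lemma from constant success probability to high probability, and then invoke it iteratively $O(\log k)$ times so that each round covers a constant fraction of the still-uncovered terminals.

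For the boosting step, I would run the relaxed algorithm $\Theta(\log n)$ times independently, keep only those outputs that contain $r$ and cover at least $k/8$ of the specified terminals with hop diameter $O(h \log^3 n)$, and return the cheapest such output. Since each trial independently produces a feasible solution of cost $\le O(\log^2 n \cdot \OPT)$ with constant probability, a standard Chernoff-plus-union-bound argument shows that with high probability at least one trial meets the cost and feasibility bound; the minimum over trials then inherits that bound.

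For the iteration step, I would maintain a set $C_i$ of terminals already covered after iteration $i$, starting from $C_0 = \emptyset$. In iteration $i+1$, I would invoke the boosted relaxed algorithm on the instance $(G, S \setminus C_i, r, h, k - |C_i|)$ and add its output to the running union. The key observation is that the original optimum $H^*$, which covers $k$ terminals of $S$, still covers at least $k - |C_i|$ terminals of $S \setminus C_i$, so $H^*$ remains a feasible solution to the shrunken instance and thus the shrunken instance has optimal cost at most $\OPT$. Consequently, with high probability the boosted call returns a solution covering at least $(k - |C_i|)/8$ new terminals at cost $O(\log^2 n \cdot \OPT)$, giving the recursion $k - |C_{i+1}| \le \tfrac{7}{8}(k - |C_i|)$. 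After $O(\log k)$ iterations the uncovered count drops below $1$, so all $k$ terminals are covered. A union bound over the $O(\log k)$ iterations preserves the high-probability guarantee.

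The final cost is $O(\log k) \cdot O(\log^2 n \cdot \OPT)$, as claimed. For the hop diameter, the main point I want to make carefully is that taking unions does \emph{not} compound the $O(h \log^3 n)$ bound: every piece contains the common root $r$ and has hop diameter $O(h \log^3 n)$, so any two vertices in the union are connected via $r$ in at most $2 \cdot O(h \log^3 n) = O(h \log^3 n)$ hops. The main obstacle I anticipate is the bookkeeping for the OPT-comparison across iterations — specifically justifying that the remaining-terminal subproblem still has optimum at most $\OPT$ — but this is handled cleanly by the $|H^* \cap (S \setminus C_i)| \ge k - |C_i|$ observation above, so no new technical machinery beyond the previous lemma is required.
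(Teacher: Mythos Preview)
Your proposal is correct and follows essentially the same approach as the paper: boost the relaxed algorithm to high probability by taking the minimum over $O(\log n)$ independent runs, then iterate $O(\log k)$ times on the remaining uncovered terminals, using the common root $r$ to keep the hop diameter from compounding. Your explicit justification that the residual instance still has optimum at most $\OPT$ (via $|H^* \cap (S \setminus C_i)| \ge k - |C_i|$) is a detail the paper leaves implicit, but otherwise the arguments coincide.
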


\subsection{Hop-Constrained Oblivious Network Design}\label{sec:HCOND}

In this section we give a bicriteria approximation algorithm for hop-constrained oblivious network design which generalizes the hop-constrained version of many well-studied oblivious network design problems.

\textbf{Problem:} In the oblivious network design problem we are given a weighted graph $G = (V(G), E(G), w_G)$, a monotone subadditive function $f : \mathbb{R}_{\ge 0} \to \mathbb{R}_{\ge 0}$ (satisfying $f(a + b) \le f(a) + f(b)$ and $f(a) \le f(a+b)$ for all $a, b \ge 0$). For each pair of vertices $(u,v) \in V \bigtimes V$ we need to select a single path $P_{uv}$ between $u$ and $v$. An adversary then reveals a set of $k$ demand pairs $\{(s_i, t_i)\}_{i=1}^k$, inducing our solution $\bigcup_{i} P_i$ where $P_i : = P_{s_i t_i}$. For an edge $e \in E(G)$ let $\ell_e := |\{i : P_i \ni e\}|$ be the ``load'' of our induced solution: that is, the number of paths passing through $e$. The cost of our induced solution is $\sum_{e \in E(G)} w_G(e) \cdot f(\ell_e)$. In hop-constrained oblivious network design we are additionally given a hop constraint $h \ge 1$ and require that each $P_{uv}$ satisfies $\hop(P_{uv}) \le h$ for all $i$. (Non-oblivious) network design is identical but we are shown the demand pairs before we must fix our paths. We emphasize that $\OPT$ in this section will refer to the cost of the optimal hop-constrained network design problem; that is, the cost of the optimal solution which knows the demand pairs \emph{before} it fixes its paths.

\textbf{Related Work:} \cite{gupta2006oblivious} introduced the oblivious network design problem as an oblivious generalization of many well-studied problems such as Steiner forest and buy-at-bulk network design \cite{awerbuch1997buy}.

\textbf{Algorithm:} We sample $5h$-hop partial tree embeddings with trees $T_1, T_2, \ldots, T_{O(\log n)}$ by applying \Cref{thm:mainEmbed} to $G$ with $\eps = .1$ and an arbitrary root. Next, we fix a $T_j$ and do the following. We let $\mcS_j := \{ (u,v) : u, v \in V(T_j) \}$ be the pairs of vertices embedded by $T_j$. For each pair $(u,v) \in \mcS_j$, we let $P_{uv}' \gets (T_j)_{uv}$ be the unique simple path between $u$ and $v$ in $T_j$ (possibly overwriting a previous $P_{uv}'$). Lastly, we let $P_{uv} := (T_j)_{uv}^G$ be $P_{uv}'$'s projection onto $G$.

Our proof will use the usual partial tree-embedding template along with the idea of mixture metrics which we introduced in \Cref{sec:decomposition-lemma}. We let $\mcI_j$ be the 
\begin{lemma}
  \label{lemma:obl-network-design-cost}
  Given the revealed demand pairs, let $\mcI_j$ be the indices of all pairs with vertices in $T_j$ and let $\ALG_j$ be the optimal cost of the network design problem (without hop constraints) on the tree $T_j$ with demand pairs $\calI_j$. Then $\ALG_j \le O(\log^3 n \cdot \OPT)$ with high probability.
\end{lemma}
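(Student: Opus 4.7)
The plan is to construct a specific (possibly suboptimal) tree routing for the pairs in $\mcI_j$ whose cost on $T_j$ is at most $O(\log^3 n)\cdot \OPT$, and to observe that this upper bounds $\ALG_j$. The reason is that on a tree any $s_i$-$t_i$ walk induces pointwise larger edge loads than the unique $s_i$-$t_i$ tree path, so by monotonicity of $f$ the cost under tree-path routing is at most that under any walk routing. The construction adapts the classical FRT argument for oblivious network design, where each $G$-edge on OPT is replaced by a tree path, substituting here a coarsening that respects the partial nature of $V(T_j)$.

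First I would fix the optimal hop-constrained solution $H^*$ with paths $\{P_i^*\}_i$ satisfying $\hop(P_i^*)\le h$, yielding $\OPT=\sum_{e'\in E(G)}w_G(e')\,f(\ell_{e'}^*)$ where $\ell_{e'}^*:=|\{i:e'\in P_i^*\}|$. For each $i\in\mcI_j$, enumerate the $V(T_j)$-vertices along $P_i^*$ in order as $s_i=u_{i,0},u_{i,1},\dots,u_{i,m_i}=t_i$ and let $Q_{i,k}\subseteq P_i^*$ be the subpath between $u_{i,k-1}$ and $u_{i,k}$. Since each $Q_{i,k}$ is a subpath of a $\le h$-hop path, $d_G^{(h)}(u_{i,k-1},u_{i,k})\le w_G(Q_{i,k})$, and the worst-case distance stretch $\alphawc=O(\log^2 n)$ of \Cref{thm:mainEmbed} gives $w_{T_j}((T_j)_{u_{i,k-1},u_{i,k}})\le \alphawc\cdot w_G(Q_{i,k})$. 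The walk $W_i$ for pair $i$ is then defined as the concatenation of these tree paths.

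Next I would bound $\sum_{e\in E(T_j)}w_{T_j}(e)\,f(\ell_e^W)$. Let $L_{uv}$ denote the number of coarsened hops with endpoints $(u,v)\in V(T_j)^2$; then $\ell_e^W=\sum_{(u,v):\,e\in(T_j)_{uv}}L_{uv}$, so subadditivity of $f$ gives $f(\ell_e^W)\le\sum_{(u,v):\,e\in(T_j)_{uv}}f(L_{uv})$. Swapping summations and applying the worst-case distance stretch yields
\begin{equation*}
  \sum_{e\in E(T_j)} w_{T_j}(e)\,f(\ell_e^W)\le\sum_{(u,v)}f(L_{uv})\,w_{T_j}((T_j)_{uv})\le\alphawc\sum_{(u,v)}f(L_{uv})\,d_G^{(h)}(u,v).
\end{equation*}
The remaining step is to show $\sum_{(u,v)}f(L_{uv})\,d_G^{(h)}(u,v)\le O(\log n)\cdot\OPT$. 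Here I would invoke the mixture-metric and $h$-hop-connector machinery underlying \Cref{thm:projSoln}: each coarsened hop $(u_{i,k-1},u_{i,k})$ is witnessed by $Q_{i,k}$, whose $G$-edges contribute to $\ell_{e'}^*$ for every $e'\in Q_{i,k}$, and a double-counting combined with a second application of subadditivity of $f$ along the structure of an $h$-hop connector on $V(T_j)\cap V(H^*)$ charges this sum to $\sum_{e'}f(\ell_{e'}^*)\,w_G(e')$ at a loss of $O(\log n)$.

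The main obstacle is the last step: unlike in classical FRT, the coarsened endpoints $(u,v)$ depend on the particular pair whose path traverses a given $G$-edge $e'$, so the ``contribution'' of $e'$ is distributed across possibly many different $(u,v)$ pairs with various $L_{uv}$ values. Controlling $\sum_{(u,v)\ni e'}f(L_{uv})$ by $O(\log n)\cdot f(\ell_{e'}^*)$ requires exploiting the constant-congestion, constant-hop-stretch guarantees of the $h$-hop connector together with mixture-metric-style combination of hop and weight accounting, rather than relying solely on worst-case per-edge stretch as FRT does. The high-probability claim then follows from a standard Chernoff bound over the exclusion probabilities guaranteed by \Cref{thm:mainEmbed}: with high probability, a constant fraction of the vertices of $V(H^*)$ along each $P_i^*$ land in $V(T_j)$, making the coarsening dense enough that the congestion bound driving the final $O(\log n)$-factor holds simultaneously for all pairs in $\mcI_j$.
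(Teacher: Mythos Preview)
Your final charging step is not a detail to be filled in; it is where the argument breaks. Consider a ``broom'': a vertex $s$, a midpoint $m$, and leaves $t_1,\dots,t_k$, with edge $\{s,m\}$ of weight $B$ and edges $\{m,t_i\}$ of weight $\epsilon$; the demands are $(s,t_i)$ for each $i$ and $h=2$. Then $\OPT = B\,f(k) + k\epsilon\,f(1)$. With constant probability the single vertex $m$ is excluded from $V(T_j)$ while $s$ and all $t_i$ are included. Your coarsening then produces a single hop $(s,t_i)$ for each $i$, so $L_{s,t_i}=1$ and your sum $\sum_{(u,v)} f(L_{uv})\,d_G^{(h)}(u,v)$ equals $k\,f(1)\,(B+\epsilon)$. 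For $f(x)=\sqrt{x}$ this exceeds $\OPT$ by a factor $\Theta(\sqrt{k})$, not $O(\log n)$. The failure mode is that coarsening merges a high-load segment ($\{s,m\}$, load $k$) with low-load segments ($\{m,t_i\}$, load $1$) into many low-load hops, destroying the economies of scale that subadditive $f$ grants to $\OPT$. The $h$-hop connector from \Cref{thm:projSoln} cannot repair this: its constant-congestion guarantee controls how many connector paths use a given $G$-edge, which is the right currency for additive or purely connectivity-based costs, but says nothing about how demand loads aggregate under $f$. Your high-probability appeal also misfires: exclusion of a single critical vertex like $m$ is a constant-probability event, and no Chernoff bound rescues it.

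The paper's proof avoids this entirely by running the charging in the opposite direction. Rather than pushing $\OPT$'s paths forward onto the tree, it pulls each tree edge back to a region of $G$. For a tree edge $e$ of weight $2^p$, it uses the mixture metric with hop scale $h$ and weight scale $2^p/(5\alpha)$ to define a ball $A_e\subseteq G$ around the subtree $S_e$; the distance-stretch lower bound $d_G^{(\beta h)}\le d_T$ forces balls at the same level $p$ to be disjoint. Every optimal path with an unmatched endpoint in $S_e$ must cross $A_e$ and spend at least $2^p/(5\alpha)$ weight there, so by subadditivity the $\OPT$-cost inside $A_e$ is at least $\frac{2^p}{5\alpha}\,f(\unmatched(S_e))$. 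Since $\ALG_j=\sum_e w_T(e)\,f(\unmatched(S_e))$ and there are $O(\log n)$ levels, summing gives $\ALG_j\le O(\alpha\log n)\cdot\OPT$. Note this argument is deterministic once the tree is fixed; no coarsening of $\OPT$'s paths and no probabilistic density argument is needed.
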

\begin{proof}
  We introduce some notation. We write $T := T_j$, $\ALG := ALG_j$, and $\calI := \calI_j$ for brevity. Let $\alpha = O(\log^2 n)$ be the worst-case distance stretch of $T$ (as in \Cref{thm:mainEmbed}). For a tree edge $e = \{u, v\} \in E(T)$ where $u$ is the parent of $v$, we define $S_e \subseteq V(T)$ as the set of nodes in the subtree of $e$ (including $v$, but excluding $u$). Furthermore, given a set $W \subseteq V(T)$ we write $\unmatched(W) := |\{ i \in \calI : \{s_i, t_i\} \cap W = 1 \}|$ as the number of ``unmatched terminals'' in $W$.

  Remember that $T$ is well-separated (as stipulated by \Cref{thm:mainEmbed}), meaning each
  root-to-leaf has edges of weights that are decreasing powers of 2. Specifically, $w_T(e) = 2^p$ for some $p$. From now on we fix a value $p$.

  We define ``mixture weights'' $w'_G$ on the graph $G$: for an edge $e \in E(G)$ we define $w'_G(e) := 1/h + w_G(e) \cdot \frac{ 5 \alpha }{2^p}$. By $d'_G : V(G) \times V(G) \to \mathbb{R}_{\ge 0}$ we denote the distances induced by $w'_G$. Furthermore, given a node $v \in V(G)$, radius $r > 0$, we define the ``ball'' $B'_G(r, v)$ as the set of all (fractional) edges $f$ such that there exists a path $Q$ in $G$ between $v$ and (any endpoint of) $f$ with $w'_G(Q) \le r$. Here we consider an edge to be subdivided into infinitesimal pieces, hence one can talk about a fractional portion of an edge---while this can be made fully formal by considering a version of $G$ where an edge $e \in E(G)$ is subdivided into $\xi \to \infty$ pieces of weight $w_G(e)/\xi$, hops $1/\xi$, and mixture weight $w_G'(e)/\xi$, we choose to keep it slightly informal for simplicity. Next, we extend $B'_G(r, W) := \bigcup_{v \in W} B_G'(r, v)$ for a subset $W \subseteq V(G)$.

  For each tree edge $e \in E(T)$ of weight $w_T(e) = 2^p$ we associate the ball $A_e := B_G'(2, S_e)$ to $e$. 
  We show that the balls assigned to different edges $e$ and $f$ of the same weight $2^p$ are disjoint. Clearly, since $T$ is well-separated, there is no root-leaf path that contains both $e$ and $f$. Note that $2^p \le d_T(u, v) \le \alpha \cdot d_G\at{5h}(u, v)$ for $u \in S_e$ (subtree below $e$) and $v \not \in S_e$, implying $d_G\at{5h}(S_e, S_f) \ge \frac{2^i}{\alpha}$. Therefore, any path $Q$ between (a node in) $S_e$ and (a node in) $S_f$ has $\hop(Q) > 5h$ or $w_G(Q) \le \frac{2^i}{\alpha}$. Since $w_G'(Q) = \hop(Q)/h + w_G(Q) \cdot \frac{ 5 \alpha }{2^p}$ we have that $w_G'(Q) \ge 5$. We conclude that $d'_G(S_e, S_f) \ge 5$. Therefore, the balls $A_e$ and $A_f$ (of radius 2) are disjoint.


  Let $\{P_i^*\}_i$ be an optimal hop-constrained solution on $G$. Fix a tree edge $e \in E(T)$ of weight $w_T(e) = 2^p$. With a slight abuse of notation, let $P_i^* \cap A_e$ be the sub-path of $P_i^*$ from its start in $S_e$ to the first node (in the infinitesimal graph) not contained in the ball $A_e$. We claim that for each unmatched terminal $s_i \in S_e$ (or $t_i$, but we will WLOG assume it is $s_i$) it holds $w_G(P_i^* \cap A_e) \ge \frac{2^p}{5 \alpha}$. First, since by definition $s_i \in S_e$ and $t_i \not \in S_e$, then $d'_G(s_i, t_i) \ge 5$ (as in the previous paragraph). Therefore, since the radius of $A_e$ is $2 < 5$ we have that $w'_G(P^*_i \cap A_e) \ge 2$. Furthermore, let $Q := P^*_i \cap A_e$, and we have $2 \le w'_G(Q) = \hop(Q)/h + w_G(Q) \cdot \frac{ 5 \alpha }{2^p} \le 1 + w_G(Q) \cdot \frac{ 5 \alpha }{2^p}$ giving us $w_G(Q) \ge \frac{2^p}{5\alpha}$ as claimed.

  Continuing to fix $\{P_i^*\}_i$ and $e \in E(T)$ with $w_T(e) = 2^p$, we define $\ell^*_f$ to be the load of $\{ P_i^* \}_i$ on any edge $f \in E(G)$ and then define $\OPT(p, e) := \sum_{f \in A_e} w_G(f) \cdot \ell_f^*$. Furthermore, since for each $p$ the balls $\{ A_e \}_{e \in E(T), w_T(e) = 2^p}$ are disjoint, we conclude that (for each $p$)
  \begin{align*}
    \OPT = \sum_{f \in E(G)} w_G(f) \ell^*_f \ge \sum_{e \in E(T), w_T = 2^p} \sum_{f \in A_e} w_G(f) \ell^*_f = \sum_{e \in E(T), w_T(e) = 2^p} \OPT(p, e) .
  \end{align*}

  Fix tree edge $e \in E(T)$ with $w_T(e) = 2^p$. As proven before, for each $i$ where $s_i$ or $t_i$ are an unmatched terminal in $S_e$ we have that $w_G(P_i^* \cap A_e) \ge \frac{w_T(e)}{5 \alpha}$ and all such $\{P_i^* \cap A_e)$ are contained within the same set of edges $A_e$. By subadditivity, the contribution to $\OPT(p, e)$ is minimized when the paths $\{ P_i' \cap A_e \}_{\text{unmatched } i \text{ in } S_e}$ are identical for all $i$, leading to a contribution of at least $\OPT(p, e) \ge \frac{2^p}{5\alpha} f(\unmatched(S_e))$.

  Note that there are at most $O(\log n)$ values for $p$ since $T$ is well-separated and the aspect ratio of $G$ is $\poly(n)$. Therefore, the value of the algorithm can be written as
  \begin{align*}
    \ALG & \le \sum_{p = 1}^{O(\log n)} \sum_{e \in E(T), w_T(e) = 2^p} 2^p f(\unmatched(S_e)) \le \sum_{p = 1}^{O(\log n)} \sum_{e \in E(T), w_T(e) = 2^p} O(\alpha) \OPT(p, e) \\
         & = O(\alpha) \sum_{p = 1}^{O(\log n)} \OPT .
  \end{align*}
  Therefore, $\ALG \le O(\alpha \log n \cdot \OPT) = O(\log^3 n \cdot \OPT)$.
\end{proof}
We conclude the analysis of the above algorithm.
\begin{theorem}
  There is a poly-time algorithm for $h$-hop-constrained oblivious network design which with high probability outputs a selection of paths $\{ P_{uv} \}_{(u,v) \in V \times V}$ each with at most $O(\log^3 n \cdot h)$ hops such that the induced solution for any set of demand pairs has cost at most $O(\log^4 n \cdot \OPT)$.
\end{theorem}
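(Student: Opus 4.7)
The plan is to fix an arbitrary revealed demand set and instantiate the partial tree embedding template: bound the cost induced on the graph by passing, tree-by-tree, to the cost on $T_j$ and then invoking \Cref{lemma:obl-network-design-cost}. First, I would dispense with well-definedness and the hop bound. Since each $T_j$ has exclusion probability $0.1$, a union bound gives $\Pr[\{u,v\} \subseteq V(T_j)] \ge 0.8$ for each fixed pair and each $j$. The $T_j$ are sampled independently and we take $k = \Theta(\log n)$ of them, so by a Chernoff/union bound over all $\binom{n}{2}$ pairs, w.h.p.\ every pair $(u,v)$ is embedded by at least one $T_j$ and thus $P_{uv}$ is assigned. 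The hop bound $\hop(P_{uv}) \le O(\log^3 n \cdot h)$ then follows immediately from the hop stretch guarantee of \Cref{thm:mainEmbed}.

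For the cost, let $\mcI_j^\star$ be the set of revealed demand pairs that the algorithm assigned to $T_j$ (i.e.\ the last iterate which embedded both endpoints), and for each graph edge $e$ let $\ell_{e,j}^G := |\{i \in \mcI_j^\star : e \in P_i\}|$ so that $\ell_e^G = \sum_j \ell_{e,j}^G$. Applying subadditivity of $f$ once decomposes the total cost across trees:
\[
\sum_{e \in E(G)} w_G(e) f(\ell_e^G) \le \sum_{j} \sum_{e \in E(G)} w_G(e) f(\ell_{e,j}^G).
\]
For a fixed $j$, each pair in $\mcI_j^\star$ contributes to a graph edge $e$ only through the projection $T_{e'}^G$ of some tree edge $e' \in E(T_j)$ that lies on its unique tree path. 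Writing $\ell_{e'}^{T_j,\star}$ for the load on $e'$ induced by unique tree paths of $\mcI_j^\star$, I get $\ell_{e,j}^G \le \sum_{e' : e \in T_{e'}^G} \ell_{e'}^{T_j,\star}$; applying subadditivity a second time and swapping the summation using the dominating property $\sum_{e \in T_{e'}^G} w_G(e) \le w_T(e')$ yields
\[
\sum_{e \in E(G)} w_G(e) f(\ell_{e,j}^G) \le \sum_{e' \in E(T_j)} w_T(e') f(\ell_{e'}^{T_j,\star}).
\]
The right hand side is exactly the value of the network design solution on the tree $T_j$ using its unique paths on the pair set $\mcI_j^\star$, which (since trees have no choice of path) is the unique feasible value; monotonicity of $f$ combined with $\mcI_j^\star \subseteq \mcI_j$ then gives that it is upper bounded by $\ALG_j$. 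Applying \Cref{lemma:obl-network-design-cost} and union-bounding over the $O(\log n)$ choices of $j$ yields $\ALG_j \le O(\log^3 n \cdot \OPT)$ simultaneously for all $j$ w.h.p., so the total cost is at most $O(\log n) \cdot O(\log^3 n \cdot \OPT) = O(\log^4 n \cdot \OPT)$.

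The main obstacle is the second subadditivity step: a single graph edge $e$ can lie in the projections $T_{e'}^G$ of many tree edges $e'$, so one cannot charge loads on $G$ directly to loads on $T_j$ without taking care to count each demand pair's contribution through all tree edges it traverses. The correct accounting goes through the inequality $f(\ell_{e,j}^G) \le \sum_{e' : e \in T_{e'}^G} f(\ell_{e'}^{T_j,\star})$ and a sum-swap that crucially uses $\sum_{e \in T_{e'}^G} w_G(e) \le w_T(e')$. A secondary point to verify is that the algorithm's assignment (last-overwriting tree) is consistent with Lemma 15's definition of $\mcI_j$, which is handled by the monotonicity bound $\ALG_j^\star \le \ALG_j$ and therefore does not cost anything.
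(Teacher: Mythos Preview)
Your proposal is correct and follows the same approach as the paper. The paper's own proof is considerably terser—it simply asserts that ``projecting such a solution back to $G$ only decreases its cost'' and that purchasing a subset of the paths can only help—whereas you spell out both steps carefully via the two subadditivity applications, the sum-swap using $w_G(T_{e'}^G) \le w_T(e')$, and monotonicity of $f$ for the inclusion $\mcI_j^\star \subseteq \mcI_j$.
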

\begin{proof}
  First, since we sample from a partial tree distribution $\mcD$ with $\Pr_{(T, \cdot) \sim \mcD}[v \in V(T)] \ge 0.9$, we conclude using a standard Chernoff and union bound that both nodes of each demand pair $(s_i, t_i)$ appear in at least one partial tree embedding $T_j$, with high probability. Therefore, with high probability, each pair is assigned a valid path at least once. Furthermore, since the distribution is $h$-hop with hop stretch $O(\log^3 n)$, we have that $\hop(P_i) \le O(\log^3 n) h$.

  Let $\ALG_j$ be the cost of the (unique) solution in the tree $T_j$ with respect to all demand pairs $\calI_j$. We have $\ALG_j \le O(\log^3 n \cdot \OPT)$ by \Cref{lemma:obl-network-design-cost}. Since for each $j$ we purchase a subset of the paths corresponding to the cost $\ALG_j$ solution on $T_j$ and since projecting such a solution back to $G$ only decreases its cost and there are $O(\log n)$ trees $T_j$, we conclude that our cost is at most $O(\log^4 n) \OPT$.
\end{proof}

\section{$h$-Hop Repetition Tree Embeddings}\label{sec:repTrees}
In the preceding section we showed how many hop-constrained problems reduce to solving $O(\log n)$ non-hop-constrained problems on trees by sampling $O(\log n)$ partial tree embeddings as in \Cref{thm:mainEmbed}. In this section, we show how to compactly represent $O(\log n)$ draws from \Cref{thm:mainEmbed} in a single ``$h$-hop repetition tree embedding'' to reduce many hop-constrained problems to a \emph{single} non-hop-constrained problem on a tree. This will allow us to give several online algorithms for hop-constrained problems and improve the approximation guarantees for offline hop-constrained group Steiner tree which we gave in the preceding section. Our notion of an $h$-hop repetition tree embedding here is an adaptation of a forthcoming paper of ours \cite{haeupler2020repTree} to the hop-constrained setting. Roughly, a repetition tree embedding is a tree embedding where a vertex maps to many copies of itself.



To precisely define our $h$-hop repetition tree embeddings we need a function $\phi$ from the vertex set $V$ to subsets of $V'$. $\phi(v)$ should be understood as the ``copies'' of $v$ in $V'$. We call such a function a vertex mapping.


\begin{definition}[Vertex Mapping]
    Given vertex sets $V$ and $V'$ we say $\phi : V \to 2^{V'}$ is a vertex mapping if $\phi$ is injective, $\phi(v) \neq \emptyset$ for all $v$ and $\{ \phi(v) : v \in V \}$ forms a partition of $V'$. For $v' \in V'$, we use the shorthand $\phi^{-1}(v')$ to stand for the unique $v \in V$ such that $v' \in \phi(v)$.
\end{definition}
We also say that a mapping $\pi : E \to E'$ between edge-sets is monotone if for every $A \subseteq B$ we have that $\pi(A) \subseteq \pi(B)$ and that a collection of edges $F$ connects sets $U$ and $U'$ if there are vertices $u \in U$ and $u' \in U'$ connected by $F$.

        \begin{definition}[$h$-Hop Repetition Tree Embedding]
    Let $G = (V, E, w)$ be a weighted graph with some specified root $r \in V$. An $h$-hop repetition tree embedding with cost stretch $\alpha$ and hop stretch $\beta$ consists of a weighted tree $T = (V', E',w')$, a vertex mapping $\phi : V \to 2^{V'}$ and monotone edge mappings $\pi_{G \to T} : 2^E \to 2^{E'}$ and $\pi_{T \to G} : 2^{E'} \to 2^{E}$ such that:
    \begin{enumerate}
        \item \textbf{$\alpha$-Approximate Cost Preservation}: For any $F \subseteq E$ we have $w(F) \leq \alpha \cdot w'(\pi_{G \to T}(F))$ and for any $F' \subseteq E'$ we have $w'(F') \leq w(\pi_{T \to G}(F'))$.
        \item \textbf{$\beta$-Approximate $h$-Hop-Connectivity Preservation:} For all $F \subseteq E$ and $u,v \in V$ if $\hop_F(u,v) \leq h$, then $\phi(u), \phi(v) \subseteq V'$ are connected via $\pi_{G \to T}(F)$. Symmetrically, for all $F' \subseteq E'$ and $u', v' \in V'$ if $u'$ and $v'$ are connected by $F'$ then $\hop_{\pi_{T \to G}(F')}(\phi^{-1}(u'), \phi^{-1}(v')) \leq \beta h$.
        \item \textbf{Root mapping:} $|\phi(r)| = 1$.
    \end{enumerate}
    We say a repetition tree embedding is efficient if $\phi$, $\pi_{G \to T}$ and $\pi_{T \to G}$ are all deterministically poly-time computable.
\end{definition}

A simple consequence of our main embedding theorem (\Cref{thm:mainEmbed}), along with our projection mapping theorem (\Cref{thm:projSoln}), shows that we can compute $h$-hop repetition tree embeddings with poly-log hop and cost stretch.
\begin{restatable}{theorem}{repTree}\label{thm:repTree}
    Given $h \geq 1$, there is a poly-time algorithm which given any weighted graph $G =(V,E,w)$ and root vertex $r \in V$ computes an efficient $h$-hop repetition tree embedding from $G$ into some weighted and rooted tree $T$ with hop stretch $O(\log ^ 3 n)$ and cost stretch $O(\log ^ 3 n)$ with high probability. Further, $T$ is well-separated and satisfies $|\phi(v)| \leq O(\log n)$ for all $v$.
\end{restatable}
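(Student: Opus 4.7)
The plan is to sample $k = \Theta(\log n)$ independent $8h$-hop partial tree embeddings $T_1,\ldots,T_k$ from the distribution of \Cref{thm:mainEmbed} with a small constant exclusion probability (say $\eps = 1/10$) and root $r$, and then glue them into a single tree $T$ by identifying their common root. Formally, let $V(T) := \{r\} \sqcup \bigsqcup_j (V(T_j) \setminus \{r\})$ with edges and weights inherited from each $T_j$. Each $T_j$ is well-separated and begins with the same top-level edge weight, determined by the aspect ratio of $G$ which is fixed across the $k$ calls, so the glued tree $T$ remains well-separated. The vertex mapping $\phi(v) \subseteq V(T)$ is defined as the set of copies of $v$ across all $T_j$ containing it; this gives $|\phi(r)| = 1$ by construction and $|\phi(v)| \le k = O(\log n)$ deterministically for every $v$. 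A Chernoff-and-union-bound argument of the same flavor as in the proof of \Cref{thm:SFAlg} shows that with high probability every vertex lies in at least $(1-2\eps)k$ of the trees, and hence every pair $u,v \in V$ jointly lies in at least one $T_j$.

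For the edge mappings I would set $\pi_{T \to G}(F') := \bigcup_{e \in F'} E((T_j)_e^G)$, where $(T_j)_e^G$ is the graph path associated with $e \in E(T_j)$ by the $j$-th partial tree embedding, and $\pi_{G \to T}(F) := \bigsqcup_j E(T_j(F,h))$, using the projection from \Cref{dfn:proj} applied to each $T_j$. Both mappings are manifestly monotone. Cost preservation for $\pi_{T \to G}$ follows at once from $w_G((T_j)_e^G) \le w_T(e)$ edge by edge, while the cost bound for $\pi_{G \to T}$ is obtained by applying \Cref{thm:projSoln} to each of the $k$ trees, yielding a total factor of $k \cdot O(\log^2 n) = O(\log^3 n)$. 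For the hop-connectivity preservation in the $G \to T$ direction, whenever $\hop_F(u,v) \le h$ one picks any $T_j$ containing both $u$ and $v$ (guaranteed w.h.p.); then $(T_j)_{uv} \subseteq T_j(F,h) \subseteq \pi_{G \to T}(F)$ connects a copy of $u$ to a copy of $v$ in $T$. In the $T \to G$ direction, any $F'$-path between $u', v' \in V(T)$ traces a unique tree path, which either stays inside some $T_j$ and projects via $(T_j)_{uv}^G$ to a graph path of at most $\beta h = O(\log^3 n) \cdot h$ hops, or otherwise crosses $r$ and concatenates two such pieces of total hop length $2\beta h$.

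The main obstacle is the multi-copy nature of $\phi$: because distinct $T_j$'s share only the root $r$, copies of a vertex that live in different trees are tree-connected only through the root, which forces a factor-$2$ hop blowup for cross-tree projections. This blowup is exactly what gets absorbed into the $O(\log^3 n)$ hop stretch bound. A secondary subtlety is that one must read ``$\phi(u), \phi(v)$ are connected via $\pi_{G \to T}(F)$'' as ``some copy of $u$ is connected to some copy of $v$'', which is what the applications use and what the Chernoff-based construction above achieves. Efficiency is then straightforward: \Cref{thm:mainEmbed} runs in polynomial time, $T_j(F,h)$ is computable by enumerating pairs $u,v \in V(T_j)$ with $\hop_F(u,v) \le h$ via BFS and then taking the union of the corresponding tree paths, and the paths $(T_j)_e^G$ come directly from \Cref{thm:mainEmbed}, so $\phi$, $\pi_{G \to T}$, and $\pi_{T \to G}$ are all deterministically poly-time computable.
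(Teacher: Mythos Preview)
Your proposal is correct and follows essentially the same approach as the paper: sample $\Theta(\log n)$ partial tree embeddings from \Cref{thm:mainEmbed} with root $r$ and a small constant exclusion probability, glue them at $r$, define $\pi_{G\to T}$ via the projection $T_j(\cdot,h)$ of \Cref{dfn:proj} and $\pi_{T\to G}$ via the paths $T_e^G$, and invoke \Cref{thm:projSoln} together with a Chernoff-and-union-bound argument. Your write-up is in fact more explicit than the paper's on a couple of points the paper glosses over (the cross-root factor-$2$ in the $T\to G$ hop bound, the check that the glued tree remains well-separated because all $T_j$ share the same top-level weight, and the efficiency argument).
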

\begin{proof}
    We compute our $h$-hop repetition tree embedding as follows. First, apply \Cref{thm:mainEmbed} to compute $\Theta(\log n)$  $8h$-hop partial tree embeddings $T_1, T_2, \ldots$ with exclusion probability $\epsilon = .01$, root $r$, worst-case distance stretch $O(\log ^ 2 n)$ and hop stretch $O(\log ^ 3 n)$.
    We let our repetition tree embedding $T$ be the result of identifying $r$ in each of our $T_i$ as the same vertex; that is, $V(T) = \{r\} \sqcup  \bigsqcup_i V(T_i) \setminus \{r\}$. Let $\phi$ map from a vertex in $V$ to its copies in the natural way. We let $\pi_{G \to T}(F) := \bigcup_i T_i(G[F], h)$ where $T_i(G[F], h)$ is as defined in \Cref{dfn:proj}. We also let $\pi_{T \to G}(F') := \bigcup_{e \in F'} T_{e}^G$. Notice that our mappings are monotone by definition. Also notice that our tree satisfies root mapping by construction. Lastly,  our tree satisfies $O(\log ^ 3 n)$-approximate cost preservation as an immediate consequence of \Cref{thm:mainEmbed}, \Cref{thm:projSoln} and the fact that we sampled $\Theta(\log n)$ trees. Our tree satisfies $O(\log ^ 3)$-approximate $h$-hop-connectivity preservation by \Cref{thm:mainEmbed} and a Chernoff bound which shows that any $u,v$ have some copies that appear in the same $T_i$ with high probability. The well-separatedness and number of copies of each vertex trivially follow from \Cref{thm:mainEmbed}.
\end{proof}

\section{Applications of $h$-Hop Repetition Tree Embeddings}\label{sec:secondAppl}

In this section we apply our $h$-hop repetition tree embeddings to give approximation algorithms for the hop-constrained versions of group Steiner tree, online group Steiner tree, group Steiner forest and online group Steiner forest. As in our previous section, we let $\OPT$ stand for the optimal value of the relevant hop-constrained problem throughout.

\subsection{Hop-Constrained Group Steiner Tree}\label{sec:groupSTBetter}
Here we give an approximation algorithm for hop-constrained group Steiner tree which improves over our result in \Cref{sec:groupST} by using $h$-hop repetition trees. For a problem definition and related work see \Cref{sec:groupST}.

\textbf{Algorithm:} We first sample an $h$-hop repetition tree $T$ with high probability as in \Cref{thm:repTree} with mappings $\pi_{G \to T}$, $\pi_{T \to G}$ and $\phi$ and root $r$. Next, consider the group Steiner tree instance on $T$  whose root is the one vertex in $\phi(r)$ and whose groups are $(g_i')_i$ where $g_i' := \bigcup_{v \in g_i} \phi(v)$. We apply \Cref{thm:SGonTrees} to this group Steiner tree problem to get back tree $T' \subseteq T$ and let $H' := \pi_{T \to G}(T_t')$ be its projection onto $G$. We let our solution $H$ be a BFS tree of $H'$ rooted at $r$ where edges have unit cost in the BFS.

The properties of our $h$-hop repetition tree embeddings immediately show that this algorithm is competitive.
\begin{theorem}
    There is a poly-time algorithm which with high probability given an instance of $h$-hop-constrained group Steiner tree returns a tree $T$ such that $w(T) \leq O(\log ^ 3 n \log N \log k \cdot \OPT)$ and $\hop_H(g_i, r) \leq O(h \cdot \log ^ 3 n)$  for every $g_i$.
\end{theorem}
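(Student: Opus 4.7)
The plan is to instantiate the standard tree-embedding template with the $h$-hop repetition tree $T$ furnished by \Cref{thm:repTree}. The analysis has three ingredients: exhibiting a cheap feasible solution on $T$ using the optimum on $G$, transferring the obtained tree solution back to $G$ via the cost preservation of the repetition tree, and controlling the hop diameter using hop-connectivity preservation together with the BFS cleanup. All high-probability statements will follow from a union bound over the high-probability events in \Cref{thm:repTree} and \Cref{thm:SGonTrees}.

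First I would exhibit a cheap feasible solution on the lifted instance. Let $H^\ast \subseteq G$ be an optimal hop-constrained group Steiner tree with $w_G(H^\ast) = \OPT$; by feasibility, for each $g_i$ there exists $v_i \in g_i$ with $\hop_{H^\ast}(v_i, r) \le h$. The $h$-hop-connectivity preservation of \Cref{thm:repTree} in the $G \to T$ direction then gives that $\pi_{G \to T}(E(H^\ast)) \subseteq T$ connects the (unique) copy $\phi(r)$ to some copy in $\phi(v_i) \subseteq g_i'$ for every $i$, and is therefore feasible for the group Steiner instance on $T$ with root $\phi(r)$ and groups $\{g_i'\}$. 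Cost preservation bounds its tree weight by $O(\log^3 n) \cdot \OPT$. Invoking \Cref{thm:SGonTrees} on this lifted instance, with group-size bound $N' := \max_i |g_i'| \le N \cdot \max_v |\phi(v)| = O(N \log n)$ and the same $k$, yields with high probability a subtree $T' \subseteq T$ of weight at most $O(\log N' \log k) \cdot O(\log^3 n) \cdot \OPT = O(\log^3 n \log N \log k) \cdot \OPT$, where the $O(\log \log n)$ slack in $\log N'$ is absorbed.

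Finally, I project back: set $H' := \pi_{T \to G}(E(T'))$, for which cost preservation in the $T \to G$ direction gives $w_G(H') \le w_T(T')$, and take $H$ to be the BFS tree of $H'$ rooted at $r$. Since $H \subseteq H'$ we get $w_G(H) \le w_G(H') \le O(\log^3 n \log N \log k) \cdot \OPT$, as required. For hops, $T'$ connects $\phi(r)$ to each $g_i'$, so hop-connectivity preservation in the $T \to G$ direction gives $\hop_{H'}(r, g_i) \le \beta h = O(\log^3 n \cdot h)$; since a BFS tree rooted at $r$ preserves hop-distances from the root, $\hop_H(r, g_i) = \hop_{H'}(r, g_i) \le O(\log^3 n \cdot h)$. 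There is no substantial obstacle, since the hard technical work (\Cref{thm:projSoln} and the construction in \Cref{thm:repTree}) has already been done; the only mildly subtle point is absorbing the group-size blow-up $N' = O(N \log n)$ into $\log N$, which costs at most an additive $O(\log \log n)$ that is painlessly hidden in the existing polylog factors.
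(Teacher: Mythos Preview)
Your proposal is correct and follows essentially the same approach as the paper: lift the optimum via $\pi_{G\to T}$ and cost preservation to certify a cheap feasible instance on $T$, apply \Cref{thm:SGonTrees}, project back via $\pi_{T\to G}$, and clean up with a BFS tree. Your write-up is in fact more careful than the paper's in tracking the group-size blow-up $N' = O(N\log n)$ and in justifying that the BFS step preserves hop-distances from $r$.
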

\begin{proof}
    We use the above algorithm. The root mapping and $\beta$-approximate $h$-hop connectivity preservation properties of our $h$-hop repetition tree embedding along with the feasibility of $T'$ guarantees that $H'$ connects $g_i$ to $r$ for every $i \leq t$ with at most $O(h \cdot \log ^ 3 n)$ hops; it follows that $H$ does the same. The bound on our cost comes from combining the fact that $\pi_{G \to T}(H^*)$ is feasible for the group Steiner tree instance we solve on $T$ where $H^*$ is the optimal solution on $G$, the $O(\log ^ 3 n)$-approximate cost preservation of our $h$-hop tree embedding and the cost guarantee of \Cref{lem:onGS}.
\end{proof}

\subsection{Online Hop-Constrained Group Steiner Tree}\label{sec:groupSTOnline}
In this section we show that our $h$-hop repetition trees reduce solving online hop-constrained group Steiner tree to online group Steiner tree on a tree; we then apply a known solution for online group Steiner tree on trees.

\textbf{Problem}: Online group Steiner tree is the same as group Steiner tree as defined in \Cref{sec:groupST} but where our solution need not be a tree and groups are revealed in time steps $t = 1, 2, \ldots$. That is, in time step $t$ an adversary reveals a new group $g_t$ and the algorithm must maintain a solution $T_t$ where: (1) $T_{t-1} \subseteq T_{t}$; (2) $T_t$ is feasible for the group Steiner tree problem on groups $g_1, \ldots g_t$ and; (3) $T_t$ is competitive with the optimal offline solution for this problem where the competitive ratio of our algorithm is $\max_t w(T_t)/ \OPT_t$ where $\OPT_t$ is the cost of the optimal offline group Steiner tree solution on the first $t$ groups. Online hop-constrained group Steiner tree is the same as group Steiner tree but we are also given a hop-constraint $h \geq 1$ and the optimal solution as well as each of our trees must satisfy $\hop_{T_t}(r, g_i) \leq h$ for every $i \leq t$. 
We assume that the possible groups revealed by the adversary are known ahead of time as otherwise this problem is known to admit no sub-polynomial approximations \cite{alon2006general} and let $k$ be the number of possible groups revealed by the adversary.

\textbf{Related work:} \cite{alon2006general} gave the first poly-logarithmic online algorithm for group Steiner tree. Recently, \cite{barta2020online} gave the first online algorithm for this problem which does not have $n$ in its approximation ratio.

\textbf{Algorithm:} We recall a result of \cite{alon2006general} that solves online group Steiner tree on trees.
\begin{theorem}[\cite{alon2006general}]\label{lem:onGS}
    There is a poly-time algorithm for online group Steiner tree on trees with expected competitive ratio $O(\log ^ 2 n \log k)$.
\end{theorem}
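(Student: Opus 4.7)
The plan is to extend the offline tree-rounding machinery underlying \Cref{thm:SGonTrees} to the online setting via a two-layer strategy: first maintain an online monotone fractional solution with $O(\log n)$ competitive ratio against the fractional optimum, then apply a monotone online rounding scheme that loses another $O(\log n \log k)$ factor, for a total of $O(\log^2 n \log k)$ in expectation.

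First I would formulate the natural tree LP. Introduce variables $x_e \in [0,1]$ for each tree edge $e$ and auxiliary flow variables $y_v \ge 0$ for each vertex $v$; the constraints are $y_v \le x_e$ for every edge $e$ on the root-to-$v$ path, together with $\sum_{v \in g_i} y_v \ge 1$ for every revealed group $g_i$. The objective is $\min \sum_e w(e) x_e$. On a tree these constraints capture exactly the requirement that some vertex from each arrived group is (fractionally) connected to the root, so the LP optimum is at most the offline integral optimum.

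Next I would maintain the fractional solution online via multiplicative updates in the primal-dual style. Initialize each $x_e$ to a tiny positive value. When a newly arrived group $g_i$ is not yet fractionally covered, the violated dual stipulates some root-to-$g_i$ cut in the tree; I would multiplicatively raise each $x_e$ on that cut by a factor depending on $1/w(e)$ and correspondingly pay for the induced primal increase. A standard potential-function argument (of the form $\Phi = \sum_e w(e) x_e \log(x_e/x_e^\star)$ against any fixed offline optimum $x^\star$) then bounds the total fractional cost by $O(\log n) \cdot \OPT_{\mathrm{frac}}$.

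For the rounding layer I would use a monotone variant of the Garg--Konjevod--Ravi tree rounding. Scale the $x_e$ by $\Theta(\log k)$, and \emph{pre-commit} to one uniform random coin $U_e \in [0,1]$ per edge $e$; include $e$ in the output the first time the current $x_e$ exceeds a recursively defined threshold determined by $U_e$ and the parent edge's scaled value. Since the inclusion rule is monotone in the $x_e$'s and the $x_e$'s are monotone over time, the integral solution only grows, as the online problem demands. The hardest step will be checking that this ``pre-committed randomness'' variant of GKR still produces a feasible cover for every revealed group with sufficient probability while preserving the $O(\log n \log k)$ integrality-gap bound; once verified, combining with the fractional competitive ratio yields the claimed $O(\log^2 n \log k)$.
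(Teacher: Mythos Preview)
This theorem is not proved in the paper; it is quoted as a black-box result from \cite{alon2006general}, and the paper simply invokes it inside the algorithm for online hop-constrained group Steiner tree. So there is no ``paper's own proof'' to compare against.

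That said, your sketch is exactly the two-layer architecture of \cite{alon2006general}: an online fractional algorithm via multiplicative updates that is $O(\log n)$-competitive against the LP optimum, composed with an online, monotone variant of the Garg--Konjevod--Ravi rounding that loses an additional $O(\log n \log k)$. The pre-committed coins idea is precisely how \cite{alon2006general} make GKR monotone. The one place your outline is slightly loose is the fractional layer: the potential you wrote down is the right shape, but you should be careful that the ``violated cut'' you raise is the unique minimal tree cut separating $r$ from the group (on a tree this is the set of maximal edges whose subtree contains no already-covered mass), and that the number of multiplicative increases per edge is bounded by $O(\log n)$ because weights lie in a polynomial range; this is where the $O(\log n)$ factor actually comes from, not just from the potential inequality. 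With that caveat, your plan reproduces the cited result.
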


We then combine this result with our $h$-hop repetition trees to get our algorithm for online hop-constrained group Steiner tree.

We first sample an $h$-hop repetition tree $T$ with high probability as in \Cref{thm:repTree} with root $r$ and mappings $\pi_{G \to T}$, $\pi_{T \to G}$ and $\phi$. Next, consider the online group Steiner tree instance on $T$  whose root is the one vertex in $\phi(r)$ where group $g_t' := \bigcup_{v \in g_t} \phi(v)$ is revealed in time step $t$. Apply \Cref{lem:onGS} to maintain solution $T_t'$ for this problem in time step $t$ on $T$ and let our solution in time step $t$ on $G$ be $T_t := \pi_{T \to G}(T_t')$.

The properties of our $h$-hop repetition tree embeddings immediately give the desired properties of our algorithm.
\begin{theorem}
    There is a poly-time algorithm for online hop-constrained group Steiner tree which with high probability maintains a solution $\{T_t\}_t$ that is $O(\log k \cdot \log ^ 5 n)$-cost-competitive in expectation where $\hop_{T_t}(r, g_i) \leq O(\log ^ 3 n \cdot h)$ for all $t$ and $i \leq t$.
\end{theorem}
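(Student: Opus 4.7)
The plan is to show that the claimed bounds fall out almost immediately from the defining properties of $h$-hop repetition tree embeddings (\Cref{thm:repTree}), together with the online guarantee \Cref{lem:onGS} applied to the sampled tree $T$. The monotonicity property $T_{t-1} \subseteq T_t$ is the easiest: since \Cref{lem:onGS} maintains $T_{t-1}' \subseteq T_t'$ on the tree and since $\pi_{T \to G}$ is a monotone edge mapping, one gets $T_{t-1} = \pi_{T \to G}(T_{t-1}') \subseteq \pi_{T \to G}(T_t') = T_t$ directly.

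For the hop bound I would observe that at time $t$ the tree solution $T_t'$ connects every translated group $g_i' = \bigcup_{v \in g_i}\phi(v)$ to the unique root copy $\phi(r)$ for each $i \le t$. Applying the $\beta$-approximate $h$-hop connectivity preservation property of the repetition tree embedding with $\beta = O(\log^3 n)$ then produces, for every $i \le t$, a path of at most $\beta h$ hops in $T_t = \pi_{T \to G}(T_t')$ from $r$ to some vertex of $g_i$, giving $\hop_{T_t}(r, g_i) \le O(\log^3 n \cdot h)$ as required.

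For the expected cost bound, fix a time step $t$ and let $H^*_t$ be an optimal offline $h$-hop-constrained solution for the first $t$ groups, so $w(H^*_t) = \OPT_t$. The crucial step is to argue that $\pi_{G \to T}(H^*_t)$ is \emph{feasible} for the online tree group Steiner instance at time $t$: because $H^*_t$ connects each $g_i$ to $r$ within $h$ hops in $G$, the $h$-hop connectivity preservation guarantee forces $\pi_{G \to T}(H^*_t)$ to connect $\phi(r)$ to some vertex of $\phi(v) \subseteq g_i'$ for each $i \le t$. Combining the $O(\log^3 n)$ cost stretch of the embedding (in the $G \to T$ direction), the $O(\log^2 n \log k)$ expected competitive ratio of \Cref{lem:onGS} on the tree, and the fact that projecting via $\pi_{T \to G}$ does not inflate cost, I obtain
\begin{align*}
\E[w(T_t)] \;\le\; \E[w'(T_t')] \;\le\; O(\log^2 n \log k) \cdot w'(\pi_{G \to T}(H^*_t)) \;\le\; O(\log^5 n \log k) \cdot \OPT_t .
\end{align*}

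The main (minor) obstacle is cleanly untangling the two independent sources of randomness: the one-shot sampling of $T$ that succeeds with high probability per \Cref{thm:repTree}, and the per-step internal randomness of the tree algorithm of \Cref{lem:onGS}. I would resolve this by conditioning on the high-probability success event of the repetition tree construction, under which the tree-level expected competitive ratio holds directly and the bound above propagates to the unconditional expectation up to a negligible additive term.
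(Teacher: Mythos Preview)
Your proposal is correct and follows essentially the same approach as the paper: use the connectivity-preservation and root-mapping properties of the repetition tree embedding for the hop bound, and combine the feasibility of $\pi_{G\to T}(H^*_t)$ on $T$ with the $O(\log^3 n)$ cost stretch and the $O(\log^2 n\log k)$ guarantee of \Cref{lem:onGS} for the cost bound. Your treatment is in fact more explicit than the paper's---you spell out the monotonicity $T_{t-1}\subseteq T_t$ via monotonicity of $\pi_{T\to G}$ and address the two sources of randomness---but the underlying argument is identical.
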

\begin{proof}
    We use the above algorithm. The root mapping and $\beta$-approximate $h$-hop connectivity preservation properties of our $h$-hop repetition tree embedding along with the feasibility of $T_t'$ guarantees that our solution $T_t$ connects $g_i$ to $r$ for every $i \leq t$ with at most $O(h \cdot \log ^ 3 n)$ hops. The bound on our cost comes from combining the fact that $\pi_{G \to T}(T^*_t)$ is feasible for the group Steiner tree instance we solve on $T$ in time step $t$ where $T^*_t$ is the optimal solution on $G$ in time step $t$, the $O(\log ^ 3 n)$-approximate cost preservation of our $h$-hop tree embedding and the cost guarantee of \Cref{lem:onGS}.
\end{proof}

\subsection{Hop-Constrained Group Steiner Forest}\label{sec:GSF}
As both group Steiner tree and Steiner forest are special cases of it, the group Steiner forest is one of the most general studied connectivity problems; for this reason it is also sometimes referred to as the ``generalized connectivity problem.''

\textbf{Problem:} In the group Steiner forest problem we are given a weighted graph $G = (V, E, w)$ as well as pairs of subsets of nodes $(S_1, T_1), (S_2, T_2), \ldots, (S_k, T_k)$ where $S_i, T_i \subseteq V$. Our goal is to find a forest $F$ which is a subgraph of $G$ and in which for each $i$ there is an $s_i \in S_i$ and $t_i \in T_i$ such that $s_i$ and $t_i$ are connected in $F$. We wish to minimize our cost, $w(F) := \sum_{e \in E(F)} w(e)$. In hop-constrained group Steiner forest we are additionally given a hop bound $h \geq 1$ and for every $i$ we must ensure that $\hop_F(s_i, t_i) \leq h$ for some $s_i \in S_i$ and $t_i \in T_i$. We will use the shorthand $\hop_F(S_i, T_i) := \min_{s_i \in S_i, t_i \in T_i} \hop_F(s_i, t_i)$.

\textbf{Related Work:} \cite{alon2006general} introduced the group Steiner forest problem to study online network formation. \cite{chekuri2011set} gave the first poly-log approximation algorithm for group Steiner forest. \cite{naor2011online} gave a worse poly-log approximation for group Steiner forest but one which was based on tree embeddings which will be useful for our purposes.

\textbf{Algorithm:} We use our repetition tree embeddings to reduce hop-constrained group Steiner forest to the tree case. We then apply an algorithm of \cite{naor2011online} which shows how to solve group Steiner forest on trees.

\begin{theorem}[\cite{naor2011online}]\label{lem:GSFonTrees}
    There is a poly-time algorithm for group Steiner forest on trees of depth $d$ which achieves an approximation ratio of $O(d \cdot \log ^ 2 n \log k)$ with high probability.
\end{theorem}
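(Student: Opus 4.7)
The plan is to follow the standard LP-relaxation and randomized-rounding paradigm for Steiner-type problems on trees, specifically via a Garg--Konjevod--Ravi (GKR) style level-by-level rounding. First I would formulate a natural cut-based LP relaxation on $T$: introduce a variable $x_e \in [0,1]$ for every tree edge $e$ indicating fractional purchase, and for every pair $(S_i, T_i)$ and every edge set $F \subseteq E(T)$ whose removal separates every vertex of $S_i$ from every vertex of $T_i$, impose $\sum_{e \in F} x_e \ge 1$. The LP minimizes $\sum_e w(e)\, x_e$ and can be solved in polynomial time by an ellipsoid-with-separation approach, where the separation oracle for a candidate fractional solution and a pair $(S_i, T_i)$ reduces to a minimum $S_i$--$T_i$ cut on $T$ with edge capacities $x_e$.

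Next I would root $T$ at an arbitrary vertex and apply a top-down randomized rounding: traversing edges in depth order, include an edge $e$ with probability $x_e / x_{\mathrm{parent}(e)}$ conditioned on the inclusion of its parent (treating the root's ``parent'' edge as always included with value $1$). A GKR-style potential argument shows that for any fixed pair $(S_i, T_i)$ whose LP cut value is at least $1$, the pair is connected after a single round with probability at least $\Omega(1/(d \log n))$, where the $d$ factor comes from the tree depth and the $\log n$ factor from the standard rounding analysis. Repeating $O(d \log n \log k)$ times in parallel and taking the union of all selected edges then ensures, via a union bound over the $k$ pairs, that every pair is simultaneously connected with high probability. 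Since a single round's expected cost is $\sum_e w(e) x_e \le \mathrm{OPT}$, the total expected cost is $O(d \log^2 n \log k)\cdot \mathrm{OPT}$; a Markov--union argument boosts this into a high-probability guarantee matching the stated bound.

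The main obstacle I expect is that, unlike group Steiner tree, group Steiner forest has no common root to anchor the top-down rounding; the GKR analysis for a single group crucially exploits the fact that all groups must reach a fixed root, while here each pair $(S_i, T_i)$ may live in an arbitrary portion of $T$. The fix is to analyze each pair locally with respect to the lowest common ancestor of $S_i \cup T_i$ in the rooted tree: conditioning on the rounding process reaching that LCA, the remaining task of connecting some vertex of $S_i$ to some vertex of $T_i$ reduces to two coupled group-Steiner-style subproblems whose joint success probability admits the same $\Omega(1/(d \log n))$ lower bound. Composing this per-pair bound with the union bound over all $k$ pairs and paying a factor of $d$ for the tree depth yields the claimed $O(d \cdot \log^2 n \cdot \log k)$ approximation.
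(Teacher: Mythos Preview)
This theorem is quoted from \cite{naor2011online}; the present paper does not prove it, so there is no in-paper argument to compare your sketch against. Your LP-plus-GKR framework is the right genre, but the step you yourself identify as the obstacle---anchoring the rounding at the LCA of $S_i\cup T_i$ and splitting into ``two coupled group-Steiner-style subproblems''---does not go through as written, and this is exactly where the forest problem is harder than the tree problem.

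A concrete failure: let the root $r$ have children $a_1,\dots,a_m$, each $a_j$ with two leaves $s_j,t_j$; take a single pair $S=\{s_j\}_j$, $T=\{t_j\}_j$. The cut LP is satisfied by $x_{(a_j,s_j)}=x_{(a_j,t_j)}=1/m$ and $x_{(r,a_j)}=0$, routing one unit of $S$--$T$ flow entirely below $r$. The LCA of $S\cup T$ is $r$, yet top-down GKR rounding from $r$ never opens any $(r,a_j)$, so $S$ and $T$ are connected with probability zero. The point is that the cut LP only certifies $S_i$--$T_i$ flow; it guarantees \emph{no} flow from $S_i$ (or $T_i$) to the LCA, so the two ``rooted'' subproblems on which you want to invoke the GKR bound may carry no LP mass at all. (A side symptom: your own arithmetic gives $O(d\log n\log k)$ repetitions at cost $\OPT$ each, i.e.\ $O(d\log n\log k)\cdot\OPT$, a $\log n$ factor \emph{better} than the stated bound---another sign the asserted per-round probability $\Omega(1/(d\log n))$ is too optimistic.) The argument in the cited work instead uses a richer relaxation that fractionally designates a junction vertex for each pair and routes both $S_i$ and $T_i$ to it; the factor $d$ and the additional $\log n$ enter through selecting and rounding these junctions, not through an LCA reduction on the bare cut LP.
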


Formally, we first apply \Cref{thm:repTree} to sample a repetition tree $T$ with depth $O(\log n)$, an arbitrary root and mappings $\phi$, $\pi_{G \to T}$ and $\pi_{T \to G}$. Next, we apply  \Cref{lem:GSFonTrees} to solve the group Steiner forest on $T$ with pairs to be connected $(S_1', T_1'), \ldots (S_k', T_k')$ where $S_i' := \bigcup_{v \in S_i} \phi(v)$ and symmetrically $T_i' := \bigcup_{v \in T_i} \phi(v)$. Let $F'$ be the resulting solution on $T$. We return as our solution $F:= \pi_{T \to G}(F')$.

\begin{theorem}
    There is a poly-time algorithm for $h$-hop-constrained group Steiner forest which with high probability returns a solution $F$ such that $w(F) \leq O( \OPT \cdot \log ^ 6 n \log k)$ and $\hop_{F}(S_i, T_i) \leq O(h \cdot \log ^ 3 n)$ for every $i$.
\end{theorem}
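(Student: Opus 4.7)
The plan is to verify that the described algorithm satisfies both the feasibility/hop claim and the cost claim by simply unpacking the guarantees of the $h$-hop repetition tree embedding (\Cref{thm:repTree}) and composing them with the tree-case algorithm (\Cref{lem:GSFonTrees}). Since the algorithm follows the canonical tree-embedding template, the proof should essentially be a bookkeeping of the cost and hop factors picked up in each step; I do not expect a major obstacle.

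First I would establish feasibility and the hop bound. For each demand $i$, since $F'$ solves the group Steiner forest instance on $T$ with groups $(S_i', T_i')$, there exist $s_i' \in S_i'$ and $t_i' \in T_i'$ that are connected in $F'$. Let $s_i := \phi^{-1}(s_i') \in S_i$ and $t_i := \phi^{-1}(t_i') \in T_i$. By the second clause of $\beta$-approximate $h$-hop-connectivity preservation, $\hop_{\pi_{T\to G}(F')}(s_i, t_i) \le \beta h$ with $\beta = O(\log^3 n)$. Since $F = \pi_{T\to G}(F')$, we conclude $\hop_F(S_i, T_i) \le O(\log^3 n \cdot h)$, which simultaneously certifies that $F$ is feasible for the original hop-constrained instance.

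Next I would bound the cost. Let $F^*$ be an optimal hop-constrained group Steiner forest on $G$, so $w(F^*) = \OPT$. For each demand $i$, $F^*$ connects some $s_i^* \in S_i$ to some $t_i^* \in T_i$ within $h$ hops; by the first clause of hop-connectivity preservation, $\phi(s_i^*) \subseteq S_i'$ and $\phi(t_i^*) \subseteq T_i'$ are connected in $\pi_{G\to T}(F^*) \subseteq E(T)$. Thus $\pi_{G\to T}(F^*)$ is a feasible solution to the (unconstrained) group Steiner forest instance on $T$ with groups $(S_i', T_i')$, and by the cost preservation of the repetition embedding its weight is at most $O(\log^3 n) \cdot \OPT$. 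Applying \Cref{lem:GSFonTrees} to a tree of depth $d = O(\log n)$ yields $w_T(F') \le O(d \log^2 n \log k) \cdot O(\log^3 n) \cdot \OPT = O(\log^6 n \log k) \cdot \OPT$ with high probability. Finally, since $w(\pi_{T\to G}(F')) \le w_T(F')$ by the other half of cost preservation, we obtain $w(F) \le O(\log^6 n \log k) \cdot \OPT$.

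The only mild subtlety worth flagging is that the high-probability events of \Cref{thm:repTree} (the repetition embedding having the stated hop/cost stretch and root mapping) and of \Cref{lem:GSFonTrees} (tree algorithm achieving its ratio) must both hold, which is handled by a union bound; note also that the tree $T$ need not be a forest for the tree-case algorithm to apply, since \Cref{lem:GSFonTrees} operates on a single rooted tree of bounded depth, which $T$ is by the guarantees of \Cref{thm:repTree}.
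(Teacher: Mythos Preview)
Your proposal is correct and follows essentially the same approach as the paper: project the optimal solution to the repetition tree via $\pi_{G\to T}$ to certify a cheap feasible instance on $T$, apply the tree-case algorithm of \Cref{lem:GSFonTrees} using that $T$ is well-separated (hence depth $O(\log n)$), and project back via $\pi_{T\to G}$ to get the cost and hop guarantees. Your write-up simply spells out the bookkeeping (naming $s_i', t_i'$, invoking $\phi^{-1}$, noting the union bound over the two high-probability events) more explicitly than the paper does.
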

\begin{proof}
    We use the above algorithm. A polynomial runtime is immediate from \Cref{thm:repTree} and \Cref{lem:GSFonTrees}. The hop guarantee is immediate from the correctness of the algorithm of \Cref{lem:GSFonTrees} and the properties of $\pi_{T \to G}$ as given in \Cref{thm:repTree}. To see the bound on cost, notice that $\pi_{G \to T}(F^*)$ is feasible for the group Steiner forest problem that we solve on $T$ and has cost at most $O(\log ^ 3 n \cdot \OPT)$ by the properties of $\pi_{G \to T}$ as specified in \Cref{thm:repTree} where $F^*$ is the optimal solution to the input group Steiner forest problem on $G$. The bound then follows from the fact that $T$ is well-separated and so has depth at most $O(\log n)$.
\end{proof}

\subsection{Online Hop-Constrained Group Steiner Forest}\label{sec:onlineGSF}
    In this section we give our algorithm for online hop-constrained group Steiner forest. It follows, almost immediately, from our $h$-hop repetition tree embeddings and a result of \cite{naor2011online} for online group Steiner forest on trees.
    
    \textbf{Problem:} Online group Steiner forest is the same as group Steiner forest as defined in \Cref{sec:GSF} but each pair $(S_t, T_t)$ is revealed at time step $t = 1,2, \ldots$ by an adversary and in each time step $t$ we must maintain a forest $F_t$ which is feasible for pairs $(S_1, T_1), \ldots (S_t, T_t)$ so that $F_{t-1} \subseteq F_t$. The competitive ratio of an online algorithm with solution $\{F_t\}_t$ is $\max_t w(F_t) / \OPT_t$ where $\OPT_t$ is the optimal offline solution for the group Steiner forest problem we must solve in time step $t$. Online hop-constrained group Steiner forest is the same as online group Steiner forest but we are given a hop constraint $h$ and we must ensure that for each $t$ and $i \leq t$ there is some $s_i \in S_i$ and $t_i \in T_i$ such that $\hop_{F_t}(s_i, t_i) \leq h$. 
    We assume that the possible pairs revealed by the adversary are known ahead of time as otherwise this problem is known to admit no sub-polynomial approximations \cite{alon2006general} and let $k$ be the number of possible pairs.
    
    \textbf{Related work:} The existence of a poly-log-competitive online algorithm for group Steiner forest was first posed as an open question by \cite{chekuri2011set}. \cite{naor2011online} answered this question in the affirmative by showing that such an algorithm exists.
    
    \textbf{Algorithm:} We use our repetition tree embeddings to reduce online hop-constrained group Steiner forest to the tree case. We then apply an algorithm of \cite{naor2011online} which shows how to solve group Steiner forest on trees.
    
    \begin{theorem}[\cite{naor2011online}]\label{lem:GSFonTreesOnline}
        There is a randomized poly-time algorithm for group Steiner forest on trees of depth $d$ with expected competitive ratio $O(d \cdot \log ^ 3 n \log k)$.
    \end{theorem}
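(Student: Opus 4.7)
The plan is to adapt the online primal-dual framework that underlies the group Steiner tree result \Cref{lem:onGS} so as to handle pair connectivity $(S_i, T_i)$ rather than a single group versus a fixed root, while paying the extra depth factor $d$ from online tree rounding. I will treat $T$ as rooted (which is free on a tree) so that connecting a pair $(S_i, T_i)$ corresponds, via the LCA structure, to buying a subtree that reaches both $S_i$ and $T_i$.

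First I would write the natural cut-LP relaxation: for each $e \in E(T)$ introduce a variable $x_e \in [0, 1]$, and for each pair $(S_i, T_i)$ and each edge cut of $T$ separating some $s \in S_i$ from some $t \in T_i$, require $\sum_{e \in \text{cut}} x_e \ge 1$. Because $T$ is a tree, there are only polynomially many minimal such cuts, so the LP is polynomially sized and a continuous-time primal-dual update rule is well defined. Each arriving pair is assigned a dual variable that grows until the pair is fractionally 1-separated, and the $x_e$ variables are updated multiplicatively in the usual way. A standard potential argument (analogous to the one for online group Steiner tree on trees) bounds the fractional competitive ratio by $O(\log n \log k)$.

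Second, I would carry out online randomized rounding on the tree. Inspired by the GKR rounding \cite{garg2000polylogarithmic} made online, process the tree level by level from the root: independently at each level, draw a threshold and include a subtree if its conditional fractional mass exceeds it. Each of the $d$ levels contributes a factor of order one to the failure of a given pair, and a union bound over the polynomially many ``critical'' rounding events gives an $O(\log n)$ overhead per level, yielding an $O(d \log n)$ rounding loss overall. The pair structure (two endpoints rather than a single root) costs one more $O(\log n)$ factor because the LCA of the selected $s_i$, $t_i$ is not known in advance and one must hedge over its level. Combining the $O(\log n \log k)$ fractional ratio with the $O(d \log^2 n)$ rounding loss gives the claimed $O(d \cdot \log^3 n \log k)$.

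The hard part will be the rounding: unlike group Steiner tree, where one endpoint of every demand is the fixed root and rounding decisions can be coordinated top-down, the two-sided demands in group Steiner forest require that the independent per-level rounding decisions simultaneously preserve reachability from some $s \in S_i$ \emph{and} some $t \in T_i$ for every pair, and this is what forces the extra $\log n$ factor and makes the analysis delicate. If this coupling turned out to be tighter one might hope to shave the extra $\log n$; I would expect \cite{naor2011online} to handle this via a careful marking scheme on the LCA levels.
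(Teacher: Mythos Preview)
The paper does not prove this statement: \Cref{lem:GSFonTreesOnline} is quoted as a black-box result from \cite{naor2011online} and is used without proof, so there is no ``paper's own proof'' to compare your proposal against. Your sketch is therefore not competing with anything in this paper.

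That said, as a rough outline of what \cite{naor2011online} actually do, your proposal is in the right spirit---an online fractional relaxation solved via multiplicative updates, followed by online randomized rounding on the tree---but several of the specifics are off. The fractional algorithm in \cite{naor2011online} is not a primal-dual scheme on a cut-LP; it is a direct multiplicative-weights maintenance of edge capacities, and the $O(\log n \log k)$ fractional ratio follows from the online covering framework of Alon et al.\ rather than a potential argument you describe. More importantly, your explanation of where the $d$ and the extra $\log n$ come from is not how the analysis goes: the rounding is a GKR-style top-down process, and the loss is $O(d \log n)$ from a per-level Chernoff-type argument over the $d$ levels, with a further $\log n$ arising from repeating the rounding to drive failure probability below $1/\mathrm{poly}(n)$---not from ``hedging over the LCA level.'' So while the high-level two-phase structure is right, the mechanism you attribute to the $d \log^2 n$ factor would not actually yield a correct proof as written.
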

    
    Formally, we first apply \Cref{thm:repTree} to sample a repetition tree $T$ with depth $O(\log n)$, an arbitrary root and mappings $\phi$, $\pi_{G \to T}$ and $\pi_{T \to G}$. Next, we apply \Cref{lem:GSFonTreesOnline} to solve the online group Steiner forest on $T$ with pairs to be connected $(S_1', T_1'), \ldots (S_t', T_t')$ in time step $t$ where $S_i' := \bigcup_{v \in S_i} \phi(v)$ and symmetrically $T_i' := \bigcup_{v \in T_i} \phi(v)$. Let $F'_t$ be the resulting solution on $T$ in time step $t$. In time step $t$ we return as our solution on $G$ the subgraph $F_t:= \pi_{T \to G}(F'_t)$.
    
    We conclude with the properties of our online group Steiner forest algorithm.
    \begin{theorem}
        There is a poly-time algorithm for online $h$-hop-constrained group Steiner forest which with high probability maintains a solution $\{F_t\}_t$ that is $O(\log ^ 7 n \log k)$-cost-competitive in expectation where $\hop_{F_t}(S_i, T_i) \leq O(h \cdot \log ^ 3 n)$ for all $t$ and $i \leq t$.
    \end{theorem}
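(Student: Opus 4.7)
The plan is to follow the offline hop-constrained group Steiner forest proof of the preceding subsection essentially verbatim, substituting the online tree algorithm of \Cref{lem:GSFonTreesOnline} for the offline one and additionally verifying that the online monotonicity $F_{t-1} \subseteq F_t$ on $G$ is preserved by the reduction. The algorithm itself is already described in the text: sample an $h$-hop repetition tree $T$ of depth $O(\log n)$ via \Cref{thm:repTree} together with the maps $\phi,\pi_{G\to T},\pi_{T\to G}$; feed the online tree algorithm the pairs $(S_t',T_t')$ as they arrive, where $S_t':=\bigcup_{v\in S_t}\phi(v)$ and $T_t':=\bigcup_{v\in T_t}\phi(v)$; and output $F_t := \pi_{T\to G}(F'_t)$.

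I would first handle the hop bound, which is immediate from the $\beta$-approximate $h$-hop connectivity preservation property in \Cref{thm:repTree}. For each $i\le t$, the feasibility of $F'_t$ on $T$ produces some $s_i'\in S_i'$ connected to some $t_i'\in T_i'$ via $F'_t$; then \Cref{thm:repTree} guarantees that $\hop_{\pi_{T\to G}(F'_t)}(\phi^{-1}(s_i'),\phi^{-1}(t_i'))\le O(\log^3 n)\cdot h$, and by construction $\phi^{-1}(s_i')\in S_i$ and $\phi^{-1}(t_i')\in T_i$, which is exactly what we need.

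For the cost bound, fix $t$ and let $F^*_t$ be the optimal offline hop-constrained solution on $G$, of cost $\OPT_t$. The key step is to observe that $\pi_{G\to T}(F^*_t)$ is a feasible (non-hop-constrained) group Steiner forest solution on $T$ for pairs $\{(S_i',T_i')\}_{i\le t}$: since $\hop_{F^*_t}(s_i,t_i)\le h$ for some $s_i\in S_i,\,t_i\in T_i$, $h$-hop connectivity preservation forces $\phi(s_i)\subseteq S_i'$ to be connected to $\phi(t_i)\subseteq T_i'$ inside $\pi_{G\to T}(F^*_t)$. By $O(\log^3 n)$-approximate cost preservation, the offline optimum on $T$ is at most $O(\log^3 n)\cdot\OPT_t$. \Cref{lem:GSFonTreesOnline} applied on $T$ (depth $d=O(\log n)$) then outputs $F'_t$ with expected cost at most $O(d\log^3 n\log k)\cdot O(\log^3 n)\cdot\OPT_t = O(\log^7 n\log k)\cdot\OPT_t$; projecting through $\pi_{T\to G}$ preserves/decreases cost, so $\mathbb{E}[w(F_t)]\le O(\log^7 n\log k)\cdot\OPT_t$.

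The remaining online monotonicity requirement follows from $F'_{t-1}\subseteq F'_t$ (guaranteed by \Cref{lem:GSFonTreesOnline}) combined with monotonicity of $\pi_{T\to G}$ from \Cref{thm:repTree}, giving $F_{t-1}=\pi_{T\to G}(F'_{t-1})\subseteq \pi_{T\to G}(F'_t)=F_t$. I do not anticipate a serious obstacle: the repetition tree supplies a clean black-box reduction from online hop-constrained group Steiner forest on $G$ to online (non-hop-constrained) group Steiner forest on a well-separated tree, with the feasibility, hop, cost, and monotonicity properties all passing through the maps $\pi_{G\to T}$ and $\pi_{T\to G}$ exactly as in the offline analysis.
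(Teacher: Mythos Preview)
Your proposal is correct and follows essentially the same approach as the paper: reduce to the online tree problem via the $h$-hop repetition tree embedding, invoke \Cref{lem:GSFonTreesOnline}, and use the cost and $h$-hop connectivity preservation properties of $\pi_{G\to T}$ and $\pi_{T\to G}$ exactly as in the offline case. Your write-up is in fact slightly more careful than the paper's, since you explicitly verify the online monotonicity $F_{t-1}\subseteq F_t$ via the monotonicity of $\pi_{T\to G}$, a point the paper leaves implicit.
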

    \begin{proof}
        We use the above algorithm. A polynomial runtime is immediate from \Cref{thm:repTree} and \Cref{lem:GSFonTreesOnline}. The hop guarantee is immediate from the correctness of the algorithm of \Cref{lem:GSFonTreesOnline} and the properties of $\pi_{T \to G}$ as given in \Cref{thm:repTree}. To see the bound on cost, notice that $\pi_{G \to T}(F^*_t)$ is feasible for the group Steiner forest problem that we solve on $T$ and has cost at most $O(\log ^ 3 n \cdot \OPT)$ by the properties of $\pi_{G \to T}$ as specified in \Cref{thm:repTree} where $F^*_t$ is the optimal solution to the input group Steiner forest problem on $G$ in time step $t$. The bound then follows from the fact that $T$ is well-separated and so has depth at most $O(\log n)$.
\end{proof}

\section{Conclusion and Future Work}
In this work we showed that, while far from any metric, hop-constrained distances are well-approximated by partial tree metrics. We used this fact to develop new embeddings for hop-constrained distances which we then used to give the first bicriteria (poly-log, poly-log) approximation algorithms for many classic network design problems.

We conclude by giving directions for future work. Reducing the stretch in our embeddings, or proving lower bounds stronger than those immediately implied by the FRT lower bounds is our main open question. Improving the upper bounds in our embeddings---as in the FRT setting---has the benefit that doing so immediately improves the approximation ratios for the many algorithms we gave in this paper. We note that, like the embeddings of \cite{bartal1998approximating}, our embeddings are built around the paddedness of certain decompositions and these embeddings were later improved by FRT \cite{fakcharoenphol2004tight}. One might naturally wonder, then, if an FRT-like analysis might improve our stretch guarantees; from what we can tell no such FRT-type proof seems capable of improving our bounds. Another point to note is that we lose an $O(\log n)$ in the hop stretch when moving from partial tree metrics to partial tree embeddings. This loss does not seem to have an analogue in the (non-partial) tree embedding setting and it is not clear if such a loss is necessary. 


Moreover, while tree embeddings have proven useful for many network design problems, there are many other problems such as $k$-server \cite{bansal2011polylogarithmic}, metrical task systems \cite{bartal1997polylog} and requirement cuts \cite{nagarajan2005approximation} where tree embeddings enabled the first poly-log approximations. Thus, while the focus of our paper has been on the hop-constrained versions of network design problems, we expect that our embeddings will prove useful for the hop-constrained versions of many of these other problems.



Lastly, as we discussed at the end of \Cref{sec:hop-bounded-hsts}, our $h$-hop partial tree embeddings are built on the worst-case stretch guarantees of our partial metrics; it would be interesting if it were possible to construct embeddings based on the expected stretch guarantees of our partial metrics. Such a result would immediately give several randomized algorithms for hop-constrained problems with low expected cost.


\appendix
\shortOnly{
\section{Deferred Proofs of \Cref{sec:apxHCD}}\label{sec:defProofsApxMetrics}

\noMetric*
\noMetricProof

\trivApx*
\trivApxProof

\hopconstrainedDecomp*
\hopconstrainedDecompProof
}

\bibliographystyle{alpha}
\bibliography{abb,main}

\end{document}